\numberwithin{equation}{section}
\renewcommand{\section}{\@startsection {section}{1}{\z@}%
                                   {-3.5ex \@plus -1ex \@minus -.2ex}%
                                   {.5\linespacing}%
                                   {\normalfont\scshape\centering}}
\newtheorem{thm}{Theorem}[section]
\newtheorem{lem}[thm]{Lemma}
\newtheorem{cor}[thm]{Corollary}
\theoremstyle{definition}
\newtheorem{definition}{Definition}[section]
\newtheorem{axiom}{Axiom}
\theoremstyle{remark}
\newtheorem*{example}{Example}%[section]
\newtheorem*{remark}{Remark}
\def\beq#1\eeq{\begin{equation}#1\end{equation}}
 \newcommand{\onto}{\to\mkern-14mu\to}
 \def\hbar{{\mathchar'26\mkern-6.5muh}}
 \DeclareMathSymbol{\onto}{\mathrel}{AMSa}{"10}
 \renewcommand{\hbar}{{\mathchar'26\mkern-9muh}}
\newcommand{\C}{\mathcal{C}}
\newcommand{\Ci}{\C^\infty}
\newcommand{\co}{\mathbb{C}}
\newcommand{\cs}{\mbox{\upshape C}\ensuremath{{}^*}}
\newcommand{\R}{\mathbb{R}}
\DeclareMathOperator{\id}{id}
\newcommand{\inner}{\mathbin{\raise1.5pt\hbox{$\lrcorner$}}}
\newcommand{\K}{\mathcal{K}}
\newcommand{\D}{\mathcal{D}}
\renewcommand{\Im}{\mathop{\mathrm{Im}}}
\newcommand{\m}{\mathsf{m}}
\newcommand{\X}{\mathcal{X}}
\DeclareMathOperator{\Exp}{Exp}
\newcommand{\U}{\mathrm{U}}
\DeclareMathOperator{\Ad}{Ad}
\newcommand{\norm}[1]{\lVert#1\rVert}
\newcommand{\naturalto}{\mathrel{\dot\to}}
\newcommand{\modification}{\mathrel{\ddot\to}}
\newcommand{\Vect}{\mathsf{Vec}}
\newcommand{\Alg}{\mathsf{Alg}}
\newcommand{\Xc}{\mathfrak{X}}
\DeclareMathOperator{\Hom}{Hom}
\newcommand{\comp}{\mathbin{\bar\circ}}
\newcommand{\Loc}{\mathsf{Loc}}
\newcommand{\A}{\mathfrak{A}}
\newcommand{\B}{\mathfrak{B}}
\newcommand{\Cf}{\mathfrak{C}}
\DeclareMathOperator{\Supp}{Supp}
\newcommand{\Or}{\mathcal O}
\DeclareMathOperator{\tot}{tot}
\DeclareMathOperator{\Aut}{Aut}
\newcommand{\Zcenter}{\mathcal{Z}}
\newcommand{\sAlg}{\text{$*$-$\Alg$}}
\DeclareMathOperator{\Obj}{Obj}
\DeclareMathOperator{\Mor}{Mor}
\newcommand{\der}{\operatorname{\mathfrak{der}}}
\newcommand{\op}{\mathrm{op}}
\DeclareMathOperator{\SAut}{SAut}
\DeclareMathOperator{\SOut}{SOut}
\newcommand{\dH}{\delta^{\mathrm H}}
\newcommand{\dS}{\delta^{\mathrm S}}
\newcommand{\Int}{V}
\newcommand{\Cat}{\mathsf{Cat}}
\newcommand{\Dt}{\mathcal{D}}
\newcommand{\dt}{\cdot_{\scriptscriptstyle\mathcal{T}}}
\newcommand{\ET}{\Exp_{\mathcal{T}}}
\newcommand{\ess}{\mathcal{S}}
\newcommand{\Moller}{M{\o}ller}
\newcommand{\mm}{\tilde{R}}
\newcommand{\hAlg}{\Alg[[\hbar]]}
\newcommand{\hlAlg}{\Alg[[\hbar,\lambda]]}
\newcommand{\Vcal}{\mathcal{V}}
\newcommand{\Lcal}{\mathcal{L}}
\newcommand{\Algt}{\mathsf{AlgInn}}
\title[A Cohomological Perspective on Algebraic Quantum Field Theory]{A Cohomological Perspective\\ on Algebraic Quantum Field Theory}
\author{Eli Hawkins}
\subjclass[2010]{81T05;16E40, 46M15, 81T20}
\begin{document}
\maketitle
\begin{center}
\vspace{-4ex}
\emph{\small Department of Mathematics}\\
\emph{\small The University of York, United Kingdom}\\
{\small eli.hawkins@york.ac.uk}\\
\end{center}

\begin{abstract}
Algebraic quantum field theory is considered from the perspective of the Hochschild cohomology bicomplex. This is a framework for studying deformations and symmetries. Deformation is a possible approach to the fundamental challenge of constructing interacting QFT models. Symmetry is the primary tool for understanding the structure and properties of a QFT model.
  
This perspective leads to a generalization of the algebraic quantum field theory framework, as well as a more general definition of symmetry. This means that some models may have symmetries that were not previously recognized or exploited.

To first order, a deformation of a QFT model is described by a Hochschild cohomology class. A deformation could, for example, correspond to adding an interaction term to a Lagrangian. The cohomology class for such an interaction is computed here. However, the result is more general and does not require the undeformed model to be constructed from a Lagrangian. This computation leads to a more concrete version of the construction of perturbative algebraic quantum field theory.
\end{abstract}

\tableofcontents

\section{Introduction}
At present, the best known description of fundamental physics is given by the Standard Model, which is an interacting quantum field theory (QFT) in 4 dimensions. Unfortunately, a mathematically consistent description of the Standard Model is not yet known. It is a fundamental problem of mathematical physics to construct mathematical models of interacting quantum field theories such as the Standard Model or whatever may supplant it. 

Algebraic Quantum Field Theory (AQFT) \cite{haa1996} does provide a framework for describing QFT, but thus far, interacting models have only been constructed in dimensions less than 4. More tools are needed for constructing and understanding quantum field theories.

A possible approach to constructing interacting QFTs is by deformation --- either deforming a free QFT into an interacting one or deforming an interacting classical field theory into a quantum one. This is analogous to deforming a commutative algebra into a noncommutative one, as is done in formal \cite{bffls,kon} or strict \cite{rie11} deformation quantization. 

Formal deformation quantization is part of the theory of algebraic deformations \cite{ger1964}, which is based upon Hochschild cohomology and the algebraic structure of the Hochschild complex. The purpose of this paper is to consider AQFT from the perspective of the relevant generalization of Hochschild cohomology. This is a necessary step toward a theory of deformation quantization of field theories and thus an approach to building interacting QFT models.

This perspective provides a unified framework for three seemingly disparate concepts: the symmetries of a QFT, the transition from classical to quantum field theory, and the transition from free to interacting QFT. It also leads to a more general definition of symmetry and a generalization of AQFT.

\subsection{Algebraic quantum field theory}
The fundamental difference between quantum field theory and other models of quantum physics is locality. Consistency with relativity means that only some observables can be measured in a given region, $\Or$, of spacetime. Observables regarding processes spacelike separated from $\Or$ cannot be measured in $\Or$. This is a manifestation of the principle that no signal can travel faster than light.

Any sum or product of observables that can be measured in $\Or$ can also be measured there, therefore the set of observables measurable in $\Or$ is an algebra, $\A(\Or)$. If $\Or_1\subset\Or_2$, then any observable that can be measured in $\Or_1$ can \emph{a fortiori} be measured in $\Or_2$, so $\A(\Or_1)\subseteq \A(\Or_2)$.

This correspondence between regions and algebras completely encodes the structure of a quantum field theory. This is the fundamental idea of Algebraic Quantum Field Theory \cite{haa1996}. 

This can be used to describe a QFT on Minkowski spacetime or on a curved spacetime. Locally Covariant Quantum Field Theory extends this to describe a QFT on all possible spacetimes. The key insight is that a region of a spacetime is itself a spacetime --- thus regions and spacetimes can be treated on the same footing. 
\begin{definition}
\cite{bfv2003}
$\Loc$ is the category in which:
\begin{itemize}
\item
the objects are all oriented, time-oriented, globally hyperbolic, Lorentzian manifolds (of some fixed dimension, $n$);
\item
the morphisms are the smooth maps that are  isometric, injective, oriented, and preserve the causal relation.
\end{itemize}
\end{definition}
\begin{definition}
\cite{bfv2003} A \emph{Locally covariant quantum field theory} (LCQFT)  is  a covariant functor  $\A:\Loc\to\Alg$ (or to some other category of algebras). 
\end{definition}
This is usually required to satisfy further axioms. 
\begin{axiom}[Einstein Causality]
If $\iota_1:\Or_1\to M$ and $\iota_2:\Or_2\to M$ and the images of $\iota_1$ and $\iota_2$ are spacelike separated, then the images of $\A[\iota_1]$ and $\A[\iota_2]$ commute.
\end{axiom}
\begin{axiom}[Time Slice]
If $\phi : M\to N$ and $\Im \phi$ contains a Cauchy surface of $N$, then $\A[\phi]: \A(M)\to\A(N)$ is an isomorphism.
\end{axiom}
\begin{axiom}[Isotony]
For any $\phi$, $\A[\phi]$ is injective.
\end{axiom}
In fact, only Einstein causality will be needed in this paper.

The category $\Loc$ is monoidal under the operation of disjoint union of spacetimes. Einstein causality is almost equivalent to requiring $\A$ to be a monoidal functor (see \cite{bfir2014}).

If $\Xc\subset\Loc$ is a small category whose inclusion is an equivalence of categories, then an LCQFT can equivalently be described as a functor $\A:\Xc\to\Alg$. For example, $\Xc$ could be the subcategory of spacetimes whose underlying manifolds are submanifolds of $\R^{2n+1}$. The results of \cite{m-s} imply that $\Loc$ is equivalent to the small subcategory of globally hyperbolic submanifolds of Minkowski spacetime (of sufficiently large dimension).

A QFT on a fixed manifold, $M\in\Obj(\Loc)$, can also be encoded as a functor. If $\Xc\subset\Loc$ is the subcategory of spacetimes that happen to be open subsets of $M$, then a QFT on $M$ can be encoded as a functor $\A:\Xc\to\Alg$. Einstein causality and the time slice axiom are perfectly meaningful conditions on such a functor.
Note that this encodes the action of any oriented, time-oriented isometries of $M$. In particular, if $M$ is Minkowski spacetime, then this functor encodes the action of the Poincar\'e group.

A similar approach can be taken to conformal field theory.  
A conformal net is a functor from a category of open intervals in $S^1$  to von~Neumann algebras (satisfying further axioms). See, e.g., \cite{gl1996}.

A cruder description of quantum physics on a fixed spacetime, ignoring locality, can also be described in this way. Given $M\in\Obj(\Loc)$, the full subcategory of $\Loc$ with the single object $M$ is the group of oriented, time-oriented isometries of $M$. A functor from this group (as a category) to $\Alg$ encodes the algebra of observables on $M$ and the action of this group on that algebra.

For most of this paper, I will talk about an arbitrary small category, $\Xc$. I have in mind any of the examples above. In the later sections, this will be limited to a subcategory $\Xc\subset\Loc$ for which Einstein causality is a meaningful condition.

\subsection{Hochschild cohomology}
The continuous functions on a topological space and the smooth functions on a manifold form commutative algebras. Many geometrical constructions can be expressed algebraically in terms of these commutative algebras and extend easily to noncommutative algebras. It is often useful to view a noncommutative algebra as if it comes from a topological space and to apply geometrical ideas and intuition. This is the fundamental idea of Noncommutative Geometry.
%The fundamental idea of Noncommutative Geometry is that it is often useful to view a noncommutative algebra as if it were a (necessarily commutative) algebra of functions on some topological space. Many geometrical constructions can be extended by this analogy.

For example, let $M$ be a compact, smooth manifold, and $\X^\bullet(M)$ the space of smooth, antisymmetric multivector fields. This is a \emph{Gerstenhaber algebra} with both a graded commutative, associative product (the exterior product) and a graded Lie bracket (the Schouten-Nijenhuis bracket). The Hochschild cohomology $H^\bullet(A,A)$ of the commutative algebra $A = \Ci(M)$ is naturally identified with $\X^\bullet(M)$ as a graded vector space. 

Moreover, Gerstenhaber constructed a graded Lie bracket and an associative product on the Hochschild complex $C^\bullet(A,A)$ of any algebra, which give the cohomology the structure of a Gerstenhaber algebra (hence the name) and for $\Ci(M)$ this is the natural structure mentioned in the last paragraph. This means that Hochschild cohomology should be thought of as a noncommutative generalization of the Gerstenhaber algebra of multivector fields. This --- and the detailed structure on the complex --- play a central role in the theory of formal deformation quantization.

\subsection{Algebraic quantum field theory as noncommutative geometry}
\begin{definition}
A \emph{diagram of algebras} is a covariant functor from a small category to $\Alg$.
\end{definition}
As described above, an AQFT can be expressed as a functor $\A:\Xc\to\Alg$, where $\Xc\subset\Loc$ is a small subcategory, thus an AQFT is a diagram of algebras.
\begin{remark}
In \cite{gs1988a,gs1988b} a diagram is defined as a presheaf (contravariant functor) but an AQFT is covariant, and the difference is just a matter of replacing $\Xc$ with its opposite category.
\end{remark}
Let $\boldsymbol 1$ be the category with one object and one morphism. A single algebra is trivially equivalent to a functor $\boldsymbol1\to\Alg$. Thus:
\begin{itemize}
\item
An AQFT is in particular a diagram of algebras.
\item
A diagram of algebras is a generalization of an algebra.
\item
An algebra is a generalization of an algebra of functions on a space.
\end{itemize}
In this way, QFT is a generalization of geometry. This is the perspective that I will pursue here.

\subsection{Notation and terminology}
\label{Notation}
$\Vect$ and $\Alg$ will denote the categories of vector spaces and algebras over the field of complex numbers, $\co$. $\sAlg$ will denote the category of $*$-algebras.

For any $M\in\Obj(\Loc)$, let $\Dt(M) := \Ci_c(M,\R)$ be the space of smooth, compactly supported functions (test functions). Because any $\Loc$ morphism $\phi:M\to N$ is injective and open, there is a push-forward map $\phi_*:\Dt(M)\to\Dt(N)$; for $f\in\Dt(M)$, $\phi_*f$ is defined by the conditions that $\phi^*\phi_*f=f$ and $\Supp(\phi_*f) \subseteq \Im\phi$. If we define $\Dt[\phi]:=\phi_*$, then $\Dt:\Loc\to\Vect$ is a covariant functor.

Given two points $x,y\in M\in\Obj(\Loc)$, denote \cite{bar,h-e}:
\begin{itemize}
\item
$x\leq y$ if there exists a future-directed causal curve from $x$ to $y$.
\item
$x<y$ if $x\leq y$ and $x\neq y$.
\item
$x\sim y$ if neither $x<y$ nor $x>y$.
\item
$x \lesssim y$ if $x<y$ or $x\sim y$.
\end{itemize}
These relations extend to sets. For example, if $\Or_1,\Or_2\subset M$, then write
\[
\Or_1\lesssim \Or_2 \Longleftrightarrow \forall x\in \Or_1,y\in\Or_2: x\lesssim y \ .
\]

If $\Or\subseteq M \in\Obj(\Loc)$, then denote
\[
J^+_M(\Or) := \{x\in M \mid \exists y\in \Or : x\geq y\}
\]
and
\[
J^-_M(\Or) := \{x\in M \mid \exists y\in \Or : x\leq y\}
\]
(or $J^\pm(\Or)$, if there is no ambiguity). Further, let $J_M(\Or) := J^+_M(\Or)\cup J^-_M(\Or)$.
The \emph{causal complement} of $\Or$ is
$\Or' :=  M\smallsetminus J_M(\Or)$
and
$\Or$ is \emph{causally complete} if $\Or = \Or''$.

A subset $\Or\subset M\in\Obj(\Loc)$ is \emph{future/past compact} if for any $x\in M$, $J^\pm(x)\cap \Or$ is compact.

A function $V$ (not necessarily linear) from $\Dt(M)$ to a vector space is \emph{additive} if for $f,g,h\in\Dt(M)$,
\[
\Supp f \cap \Supp h = \emptyset \implies V(f+g+h) = V(f+g)-V(g)+V(g+h) \ .
\]
This is really a locality condition, but note that a linear map is automatically additive.

The bracket notation $[\;\cdot\;,\;\cdot\;]$ will be used for both the Gerstenhaber bracket and the commutator in an associative algebra. I hope that this will be clear in context.

\subsection{Outline}
In Section \ref{Hochschild}, I review the definitions of Hochschild cohomology and the Gerstenhaber algebra structure for a single algebra and for a diagram of algebras. 

In Section~\ref{Significance}, I discuss the relationship of ``asimplicial'' Hochschild cohomology to the deformations and automorphisms of a diagram of algebras. Seeking a similar interpretation of full Hochschild cohomology leads me to define skew diagrams of algebras and their morphisms. This gives the first main results: a generalization of AQFT and a more general definition of global symmetries of an AQFT. The category of skew diagrams is shown to be a 2-category of functors between 2-categories.

The first main calculation is in Section~\ref{Interaction}, where I compute the characteristic class in Hochschild cohomology of an interaction term for an AQFT. This involves defining a smoothed-out analogue of Cauchy surfaces. 

In Section~\ref{Perturbative}, I discuss the construction of perturbative AQFT by the algebraic adiabatic limit. The next main result is an alternative,  more concrete construction; this is motivated by my computation of the characteristic class. This construction leads to the last main calculation --- a direct proof that the characteristic class satisfies the appropriate Maurer-Cartan equation.

\section{Hochschild Cohomology}
\label{Hochschild}
The definition of Hochschild cohomology $H^\bullet(A,A)$ for an algebra extends to diagrams of algebras\footnote{The functorial Hochschild cohomology $HH^\bullet(A)=H^\bullet(A,A^*)$ does not extend to diagrams, because $A\mapsto A^*$ is not a covariant functor.}. This is referred to as Yoneda cohomology by Gerstenhaber and Schack in \cite{gs1988a} because they were working with algebras over a ground ring that was not necessarily a field; that degree of generality is irrelevant here. 

Hochschild cohomology of a diagram of algebras is still a Gerstenhaber algebra. This cohomology governs deformations of diagrams, just as it does for a single algebra. 

This does not perfectly characterize deformations of LCQFTs, because a LCQFT might be deformed to a diagram of algebras that violates Einstein causality. Nevertheless, this does describe a lot of the relevant structure, and an infinitesimal deformation will have a characteristic class in Hochschild cohomology.

The category of algebras in AQFT is most often taken to be \cs-algebras or von~Neumann algebras. These are not well suited for studying infinitesimal deformations. To construct multivector fields via Hochschild cohomology, we use not the \cs-algebra of continuous functions but the dense subalgebra of smooth functions. This suggests that studying infinitesimal deformations of an LCQFT may require identifying analogous dense subalgebras.

The main explicit calculation here will be in the setting of perturbative LCQFT, which does not use \cs-algebras.

Let's begin by recalling the definition and properties of Hochschild cohomology for an algebra.
\subsection{A single algebra}
\cite{hoc1945}
Let $A$ be an associative algebra over $\co$, and $B$ a bimodule of $A$.
\subsubsection{The complex}
\begin{definition}
$C^q(A,B) := \Hom_\co(A^{\otimes q},B)$ is the space of $q$-multilinear maps.
\end{definition}
\begin{definition}
The maps $\dH_i:C^q(A,B)\to C^{q+1}(A,B)$ are defined by, for any $\Gamma\in C^q(A,B)$ and $a_1,\dots,a_{q+1}\in A$, 
\[
(\dH_0\Gamma)(a_1,\dots,a_{q+1}) := a_1 \Gamma(a_2,\dots,a_{q+1})\ ,
\]
\[
(\dH_i\Gamma)(a_1,\dots,a_{q+1}) := \Gamma(a_1,\dots,a_ia_{i+1},\dots,a_{q+1})
\]
for $1\leq i\leq q$, and
\[
(\dH_{q+1}\Gamma)(a_1,\dots,a_{q+1}) := \Gamma(a_1,\dots,a_q)a_{q+1}\ .
\]
The Hochschild coboundary $\delta:C^q(A,B)\to C^{q+1}(A,B)$ is 
\[
\delta := \sum_{i=0}^{q+1} (-1)^i \dH_i\ .
\]
\end{definition}
\begin{definition}
The Hochschild cohomology $H^\bullet(A,B)$ is the cohomology of $C^\bullet(A,B)$ with the coboundary $\delta$.
\end{definition}

\subsubsection{The Gerstenhaber bracket}
\cite{ger1963}
Now consider the case that $B=A$.

Let $\Gamma \in C^q(A,A)$ and $\Delta \in C^{q'}(A,A)$.
\begin{definition}
For $1\leq i \leq q$, the \emph{partial composition} $\Gamma\circ_i\Delta \in C^{q+q'-1}(A,A)$ is defined by
\[
(\Gamma\circ_i\Delta)(a_1,\dots,a_{q+q'-1}) := \Gamma(a_1,\dots,a_{i-1},\Delta(a_i,\dots,a_{i+q-1}),a_{i+q},\dots,a_{q+q'-1})
\]
for $a_1,\dots,a_{q+q'-1}\in A$.
\end{definition}
\begin{definition}
From this, define 
\[
\Gamma\circ\Delta := \sum_{i=1}^q (-1)^{(q-i)(q'-1)} \Gamma\circ_i\Delta
\]
and the \emph{Gerstenhaber bracket}
\[
[\Gamma,\Delta] = \Gamma\circ\Delta - (-1)^{(q-1)(q'-1)} \Delta\circ\Gamma \in C^{q+q'-1}(A,A)\ .
\]
\end{definition}
This bracket is a graded Lie bracket of degree $-1$. Equivalently, $C^\bullet(A,A)$ with this bracket is a graded Lie algebra with the shifted grading in which $\Gamma\in\C^q(A,A)$ has degree $q-1$.

\begin{definition}
$\m\in C^2(A,A)$ is the multiplication map, i.e., $\m(a,b) := ab$.
\end{definition}
Note that $\delta\Gamma = [\m,\Gamma]$. From this, it is a simple exercise to deduce that $C^\bullet(A,A)$ is a differential graded Lie algebra. The defining property,
\[
\delta[\Gamma,\Delta] = [\delta\Gamma,\Delta] + (-1)^{q-1}[\Gamma,\delta\Delta]
\]
follows from the Jacobi identity. This implies that the Gerstenhaber bracket induces a well defined graded Lie bracket on the Hochschild cohomology $H^\bullet(A,A)$.

There is also an associative product.
\begin{definition}
$\Gamma\smile \Delta \in C^{q+q'}(A,A)$ is defined by
\[
(\Gamma\smile \Delta)(a_1,\dots,a_{q+q'}) := \Gamma(a_1,\dots,a_q)\Delta(a_{q+1},\dots,a_{q+q'})
\]
for $a_1,\dots,a_{q+q'}\in A$.
\end{definition}
This is obviously associative but is not commutative. Less obviously, this descends to an associative product on cohomology, where:
\begin{itemize}
\item
The product is commutative.
\item
The bracket is a derivation of the product (in each argument).
\end{itemize}

\subsubsection{Significance of the Gerstenhaber bracket}
\cite{ger1964}
Note that $\m\circ \m = \m\circ_2\m-\m\circ_1\m$, so
\[
(\m\circ\m)(a,b,c) = a(bc)-(ab)c
\]
and the equation 
\beq
\label{associativity}
0 = \m\circ\m = \tfrac12[\m,\m]
\eeq 
is precisely equivalent to the associativity of $\m$.

Imagine that $\m$ is part of a smooth, 1-parameter family of associative products. Differentiating eq.~\eqref{associativity} once gives
\[
0 = \m\circ\dot\m+\dot\m\circ\m = [\m,\dot\m] = \delta\m\ ,
\]
so an infinitesimal deformation of an associative product is a 2-cocycle. 
Differentiating again gives
\[
0 = [\dot\m,\dot\m]+[\m,\ddot\m] \implies [\dot\m,\dot\m] = -\delta\ddot\m\ ,
\]
so the Gerstenhaber bracket of $\dot\m$ with itself is exact. This and eq.~\eqref{associativity} are examples of Maurer-Cartan equations.

A deformation is trivial if $A$ with the deformed product is isomorphic to $A$ with the undeformed product. If $\alpha\in C^1(A,A)$ is such an isomorphism, then the deformed product of $a$ and $b$ is 
\[
\m(a,b)=\alpha^{-1}\left(\alpha(a)\alpha(b)\right)\ .
\]
Suppose that there is a 1-parameter family of such isomorphisms, starting from the identity. Differentiating this expression and then setting $\alpha=\id$ gives
\[
\dot\m(a,b) = \dot\alpha(a)b-\dot\alpha(ab) + a\dot\alpha(b)\ ,
\]
so $\dot\m = \delta\dot\alpha$. In other words, trivial infinitesimal deformations correspond to exact cocycles. This means that the Hochschild cohomology class of an infinitesimal deformation describes it modulo trivial deformations. 

Similarly, if there is a 1-parameter family of automorphisms, starting from the identity, then $0 = \dot\m = \delta\dot\alpha$. (This means precisely that $\dot\alpha$ is a derivation.) So, an infinitesimal automorphism is a 1-cocycle.

If $A$ is unital, an invertible element $b\in C^0(A,A) = A$ determines an inner automorphism, $\alpha(a) = b^{-1}ab$. Suppose that $b$ is part of a 1-parameter family, starting from the unit. Differentiating gives
\[
\dot\alpha(a)= a\dot b - \dot b a \implies \dot\alpha = \delta \dot b\ ,
\]
so infinitesimal inner automorphisms are exact 1-cocycles. This means that $H^1(A,A)$ describes infinitesimal automorphisms modulo inner ones.

Finally, the equation $0=\delta b$ is the condition that $b$ be central, so $H^0(A,A)=\Zcenter(A)$, the center of $A$.

Note that here there are various structures --- algebra elements, automorphisms, multiplication --- that are elements of $C^\bullet(A,A)$ in various degrees. These satisfy properties that are most naturally expressed as the vanishing of elements of $C^\bullet(A,A)$ in other degrees.

\subsection{A diagram of algebras}
\cite{mit,gs1988a,gs1988b} 
Let $\Xc$ be a small category and $\A:\Xc\to\Alg$  a covariant functor, i.e., a diagram of algebras over $\Xc$. Because I  mainly have in mind $\Xc\subset\Loc$, I will denote elements of $\Xc$ as $M$, $N$, \emph{et cetera}.

Such a functor consists of 3 types of information: Every object determines a vector space; every object also determines an associative product on that vector space; and every morphism determines a homomorphism of algebras. A vector space cannot be deformed, but the other two structures can. This is in contrast to a single algebra, where there is only one deformable structure.

Let $\phi:M\to N$ and $\psi:N\to P$ be morphisms in $\Xc$.

For a manifold $M$, there is a bilinear multiplication map $\m[M]: \A(M)^{\otimes 2} \to \A(M), a\otimes b \mapsto ab$. For a morphism $\phi$, there is a linear map, $\A[\phi]:\A(M)\to\A(N), a\mapsto\A(\phi;a)$.

These two structures satisfy three properties. Associativity means that for every $M$, the map $\A(M)^{\otimes 3}\to\A(M), a\otimes b\otimes c\mapsto a(bc)-(ab)c$ vanishes. Being a homomorphism means that for every $\phi$, the map $\A(M)^{\otimes 2}\mapsto \A(N), a\otimes b \mapsto \A(\phi;ab)-\A(\phi;a)\A(\phi;b)$ vanishes. Functoriality means that for every pair of composable morphisms, $\phi$ and $\psi$, the map $\A(M)\to \A(P), a\mapsto \A(\psi;\A(\phi;a))-\A(\psi\circ\phi;a)$ vanishes.\footnote{Note that I am denoting the linear map as $\A[\phi]$ (with square brackets) and the value of that linear map on $a$ as $\A(\phi;a)$ (with parentheses and a semicolon). This and similar notation will be needed frequently.}

A symmetry\footnote{Fewster \cite{few2013} showed that this is a good definition of symmetry for an LCQFT.} of the functor $\A$ is a natural automorphism $\alpha :\A\naturalto\A$. This is given by, for every object $M$, an automorphism $\alpha[M]:\A(M)\to\A(M)$. This satisfies two properties. Being an automorphism means that for every $M$, the map $\A(M)^{\otimes 2} \to \A(M), a\otimes b \mapsto \alpha(M;ab)-\alpha(M;a)\alpha(M;b)$ vanishes (and that $\alpha[M]$ is invertible). Naturality means that for every $\phi$, the map $\A(M)\to\A(N), a\mapsto \A(\phi;\alpha(M;a))-\alpha(N;\A(\phi;a))$ vanishes.

Each of these structures and conditions depends upon an element of the \emph{nerve}, $B_\bullet\Xc$, of the category $\Xc$. $B_0\Xc = \Obj \Xc$ is the set of objects. $B_1\Xc = \Mor\Xc$ is the set of morphisms. $B_2\Xc$ is the set of composable pairs of morphisms. In general, $B_p\Xc$ is the set of composable $p$-tuples of morphisms. Each element of $B_p\Xc$ begins at an object and ends at an object; for example, $M\in B_0\Xc$ begins and ends at $M$, but $M\stackrel\phi\to N\stackrel\psi\to P$ begins at $M$ and ends at $P$.

Each structure or condition consists of --- for every chain of a given length in $B_\bullet\Xc$ --- a multilinear map from the algebra at the beginning to the algebra at the end. This suggests that the generalization of $C^\bullet(A,A)$ is bigraded. One degree is (again) the multilinearity and the other degree is the chain length (the degree in $B_\bullet\Xc$).

\subsubsection{The Hochschild bicomplex}
\begin{definition}
\begin{align*}
C^{p,q}(\A,\A) &:= \prod_{(M_0\leftarrow\dots\leftarrow M_p)\in B_p\Xc} \Hom_\co[\A(M_p)^{\otimes q},\A(M_0)] \\
&\,= \prod_{(M_0\leftarrow\dots\leftarrow M_p)\in B_p\Xc} C^q[\A(M_p),\A(M_0)]
\end{align*}
\end{definition}
\begin{remark}
I am writing morphisms as arrows from right to left. This is consistent with the usual convention for writing compositions.
\end{remark}

Any chain in $B_p\Xc$ can be composed to a single morphism. Applying $\A$ to this morphism gives a homomorphism from $\A(M_p)$ to $\A(M_0)$, which makes $\A(M_0)$ a bimodule of $\A(M_p)$.
\begin{definition}
The \emph{Hochschild coboundary} $\dH : C^{p,q}(\A,\A)\to C^{p,q+1}(\A,\A)$ is  $(-1)^p$ times the Hochschild coboundary on $C^q[\A(M_p),\A(M_0)]$, i.e.,
\[
\dH = \sum_{i=0}^{q+1}(-1)^{p+i}\dH_i\ .
\]
\end{definition}

The nerve, $B_\bullet\Xc$ is a simplicial set. In particular, there are face maps $\partial_i:B_p\Xc\to B_{p-1}\Xc$, for $0\leq i\leq p$. For $\phi:M\to N$, these are the source and target, $\partial_0(\phi)=M$ and $\partial_1(\phi)=N$. For $p\geq 2$, 
\begin{align*}
\partial_0(\phi_1,\dots,\phi_p) &= (\phi_2,\dots,\phi_p)\\
\partial_i(\phi_1,\dots,\phi_p) &= (\phi_1,\dots,\phi_{i}\circ\phi_{i+1},\dots,\phi_p)\\
\partial_p(\phi_1,\dots,\phi_p) &= (\phi_1,\dots,\phi_{p-1}) \ .
\end{align*}

The face maps correspond to injective maps in the simplicial category. Specifically, $\partial_i$ corresponds to the inclusion of $\{0,\dots,p-1\}$ into $\{0,\dots,p\}$ that skips $i$. Other injective maps can be specified by the numbers that they skip, and the corresponding face maps will be useful. Specifically,
\[
\partial_{i\dots j}(\phi_1,\dots,\phi_p) = (\phi_1,\dots,\phi_{i-1},\phi_i\circ\dots\circ\phi_{j+1},\phi_{j+2},\dots,\phi_p) 
\]
and
\[
\partial_{0\dots i,j\dots p}(\phi_1,\dots,\phi_p) = (\phi_{i+2},\dots,\phi_{j-1}) \ .
\]
There are also degeneracy maps, given by inserting identity morphisms, but these will not be needed.

\begin{definition}
\begin{itemize}
\item[]
\item
For $1\leq i \leq p$, 
\[
\dS_i\Gamma = \Gamma\circ \partial_i \ .
\]  
\item
For $p=0$, 
\begin{gather*}
(\dS_0\Gamma)[\phi] = \A[\phi]\circ \Gamma[M] \ ,\\
(\dS_1\Gamma)[\phi] = \Gamma[N]\circ\A[\phi]^{\otimes q} \ .
\end{gather*} 
\item
For $p\geq 1$, 
\begin{gather*}
(\dS_0\Gamma)[\phi_1,\dots,\phi_{p+1}] = \A[\phi_1]\circ \Gamma[\phi_2,\dots,\phi_{p+1}] \ ,\\
 (\dS_{p+1}\Gamma)[\phi_1,\dots,\phi_{p+1}] = \Gamma[\phi_1,\dots,\phi_p]\circ\A[\phi_{p+1}]^{\otimes q}\ .
\end{gather*}
\end{itemize}
The \emph{simplicial coboundary} $\dS: C^{p,q}(\A,\A)\to C^{p+1,q}(\A,\A)$ is dual to this simplicial structure and is defined by
\[
\dS := \sum_{i=0}^{p+1} (-1)^i\dS_i \ .
\]
\end{definition}

These satisfy $(\dS)^2=(\dH)^2=\dH\dS+\dS\dH=0$, so $C^{\bullet\bullet}(\A,\A)$ with the coboundaries $\dS$ and $\dH$ is a first quadrant bicomplex. 
\begin{definition}
The Hochschild cohomology of a diagram of algebras is 
\[
H^\bullet(\A,\A) := H^\bullet( C^{\bullet}(\A,\A))
\]
where
\[
C^n(\A,\A) := \tot^n C^{\bullet\bullet}(\A,\A) = \bigoplus_{p=0}^n C^{p,n-p}(\A,\A)
\]
with the coboundary $\delta:=\dS+\dH$. Denote the space of closed cycles as $Z^\bullet(\A,\A)$.

Following Gerstenhaber and Schack \cite{gs1988b}, also define the \emph{asimplicial} bicomplex\footnote{They denoted this with an $s$ in \cite{gs1988a}.}
\[
C_a^{p,q}(\A,\A) := 
\begin{cases}
0 & q=0\\
C^{p,q}(\A,\A) & q\geq1
\end{cases}
\]
and from this, $C^\bullet_a$, $Z^\bullet_a$, and $H^\bullet_a$ are defined analogously.
\end{definition} 

\subsubsection{Binary operations}
The naive product of $\Delta\in C^{p,q}(\A,\A)$ and $\Gamma\in C^{p,q'}(\A,\A)$ is $\Delta\cdot \Gamma\in C^{p,q+q'}(\A,\A)$ defined by
\begin{multline*}
(\Delta\cdot \Gamma)(\phi_1,\dots,\phi_p;a_1,\dots,a_q,b_1,\dots,b_{q'}) =\\ \Delta(\phi_1,\dots,\phi_p;a_1,\dots,a_q)\Gamma(\phi_1,\dots,\phi_p;b_1,\dots,b_{q'}) \ .
\end{multline*}

Elements $\Gamma\in C^{p,q}(\A,\A)$ and $\Delta\in C^{p',q'}(\A,\A)$ can be combined by several binary operations.

The cup product, $\Gamma\smile\Delta\in C^{p+p',q+q'}(\A,\A)$ is defined by 
\[
\Gamma\smile\Delta := (-1)^{q p'}\dS_{p+1,\dots,p+p'+1}\Gamma\cdot\dS_{0,\dots,p-1}\Delta \ .
\]

For $1\leq j\leq q$, the partial composition is defined by
\begin{multline*}
(\Gamma\circ_j\Delta)(\sigma;a_1,\dots,a_{q+q-1}) :=\\ \Gamma(\partial_{p+1,\dots,p+p'+1}\sigma;\A(\partial_{p,p+p'+1}\sigma;a_1), \dots,\Delta(\partial_{0,\dots,p-1}\sigma;a_i,\dots,a_{i+q'-1}),\\ \dots,\A(\partial_{p,p+p'+1}\sigma;a_{q+q'-1})) \ .
\end{multline*}
%\begin{multline}
%(\Gamma\circ_j\Delta)(\phi_1,\dots,\phi_{p+p'};a_1,\dots,a_{q+q-1}) :=\\
%\Gamma(\phi_1,\dots,\phi_p;\A(\overline{\phi_{p+1,p+p'}};a_1),\dots,\Delta(\phi_{p+1},\dots,\phi_{p+p'};a_i,\dots,a_{i+q'-1}),\\ \dots,\A(\overline{\phi_{p+1,p+p'}};a_{q+q'-1}))\ .
%\end{multline}
These combine to define $\Gamma\circ\Delta \in C^{p+p',q+q'-1}(\A,\A)$ by 
\[
\Gamma\circ\Delta := \sum_{j=1}^q  (-1)^{(q-1)p'+(q'-1)(q-j)}\Gamma \circ_j\Delta \ .
\]

For $1\leq i\leq p$, there is another kind of ``composition'' defined by $\Gamma\bullet_i\Delta := \dS_{i,\dots,i+p'-2}\Delta\cdot\dS_{0,\dots,i-2,i+p',\dots,p+p'-1}\Gamma$.
%\begin{multline}
%(\Gamma\bullet_i\Delta)(\phi_1,\dots,\phi_{p+p'-1};a_1,\dots,a_{q+q'}) := \\
%\A(\overline{\phi_{1,j-1}};\Delta(\phi_j,\dots,\phi_{j+p'-1};\A(\overline{\phi_{j+p',p+p'-1}};a_1),\dots,\A(\overline{\phi_{j+p',p+p'-1}};a_q))\cdot \\
%\Gamma(\phi_1,\dots,\phi_{j-1},\overline{\phi_{j,j+p'-1}},\phi_{j+p'},\dots,\phi_{p+p'-1};a_{q+1},\dots,a_{q+q'}) \ .
%\end{multline}
Note that $\Gamma$ and $\Delta$ are multiplied in a surprising order. These combine to define $\Gamma\bullet\Delta\in C^{p+p'-1,q+q'}(\A,\A)$ by
\[
\Gamma\bullet\Delta := \sum_{i=1}^p (-1)^{qq'+(p'-1)(p+q-i)}\Gamma\bullet_i\Delta
\]
\begin{remark}
If $q=0$, then $\Delta\circ\Gamma=0$. If $p=0$ or $p'=0$, then $\Delta\bullet\Gamma=0$.
\end{remark}

The analogue of the $\circ$ operation of the ordinary Hochschild complex is
\[
\Gamma\comp\Delta:=\Gamma\circ\Delta+\Gamma\bullet\Delta
\]
and the generalized Gerstenhaber bracket is
\[
[\Gamma,\Delta] := \Gamma\comp\Delta - (-1)^{(p+q-1)(p'+q'-1)}\Delta\comp\Gamma \ .
\]

The cup product and bracket give well defined operations on cohomology, and these make $H^\bullet(\A,\A)$ and  $H^\bullet_a(\A,\A)$ into Gerstenhaber algebras. However, in contrast to the case of a single algebra, this bracket does not make $C^\bullet(\A,\A)$ into a graded Lie algebra.

\subsubsection{Involution}
If $\A:\Xc\to\sAlg$, then there is also an antilinear involution on this bicomplex. 
\begin{definition}
For $\Gamma\in C^{p,q}(\A,\A)$, $\Gamma^\star\in C^{p,q}(\A,\A)$ is given by
\[
\Gamma^\star(\phi_1,\dots,\phi_p;a_1,\dots,a_q) := (-1)^{pq+p(p+1)/2}(\Gamma(\phi_1,\dots,\phi_p;a_q^*,\dots,a_1^*))^*
\]
for any $M_0 \xleftarrow{\phi_1}\dots\xleftarrow{\phi_p} M_p$ 
 and $a_1,\dots,a_q\in \A(M_0)$.
\end{definition}

This operation does not commute with the coboundaries. Instead, 
\[
(\delta \Gamma)^\star = (-1)^{p+q+1}\delta(\Gamma^\star) \ .
\]
This is sufficient to give a well defined involution on cohomology.

This is an involution of the $\smile$ product, up to a homotopy given by the $\bullet$ operation. For $\Gamma\in C^{p,q}(\A,\A)$ and $\Delta\in C^{p',q'}(\A,\A)$,
\begin{multline*}
(\Gamma\smile \Delta)^\star - \Delta^\star\smile \Gamma^\star = \\
(-1)^{(p+q)(p'+q'-1)+1}\left(\delta\Gamma^\star\bullet \Delta^\star - \delta(\Gamma^\star\bullet \Delta^\star) + (-1)^{p+q+1}\Gamma^\star\bullet \delta\Delta^\star\right) \ .
\end{multline*}

\section{The Significance of Hochschild Cohomology}
\label{Significance}
Hochschild cohomology is mainly concerned with infinitesimal things such as derivations and infinitesimal deformations. These concepts are appropriate to purely algebraic AQFT (such as perturbative AQFT) but are not well suited to \cs-algebraic AQFT. However, some of these infinitesimal concepts have finite analogues, which are conceptually clearer and  directly applicable to  \cs-algebraic AQFT.

\subsection{Asimplicial cohomology}
The cohomology $H^\bullet_a(\A,\A)$ is more directly relevant to deformations and symmetries of a diagram of algebras.
\subsubsection{$Z^2_a$}
A diagram of algebras, $\A$, includes two deformable structures. Denote the multiplication in $\A(M)$ as $\m(M;a,b)=ab$; this defines $\m\in  C^{0,2}(\A,\A)$. Also denote $\mu(\phi;a) = \A(\phi;a)$; this defines $\mu\in C^{1,1}(\A,\A)$.

These structures satisfy three conditions, which can be expressed as the vanishing of cochains. Associativity is a condition on $\m$, expressed in $C^{0,3}(\A,\A)$. The compatibility between $\m$ and $\mu$ is that $\mu$ must consist of morphisms between the products given by $\m$; this is expressed in $C^{1,2}(\A,\A)$ and is explicitly
\beq
\label{morphism}
\mu(\phi;\m(M;a,b)) = \m(N;\mu(\phi;a),\mu(\phi;b)) 
\eeq
for $\phi:M\to N$ and $a,b\in\A(M)$. Functoriality is a condition on $\mu$, expressed in $C^{2,1}(\A,\A)$; explicitly,
\beq
\label{functor}
\mu(\psi\circ\phi;a) = \mu(\psi;\mu(\phi;a))
\eeq
for $P\stackrel\psi\leftarrow N\stackrel\phi\leftarrow M$ and $a\in\A(M)$.

Now imagine that $\m+\mu\in C_a^2(\A,\A) = C^{0,2}(\A,\A)\oplus C^{1,1}(\A,\A)$ is part of a 1-parameter family of structures satisfying these conditions. Differentiating these conditions gives 3 conditions on $\dot\m$ and $\dot\mu$. As in the case of a single algebra, differentiating associativity gives the condition $0=\dH\dot\m$.
Differentiating eq.~\eqref{morphism} gives 
\[
\dH_1\dot\mu+\dS_0\dot\m = \dS_1\dot\m + \dH_2\dot\mu+\dH_0\dot\mu\ ,
\] 
thus $0=\dS\dot\m+\dH\dot\mu$. Differentiating the functoriality condition \eqref{functor} gives
\[
\dS_1\dot\mu = \dS_2\dot\mu+\dS_0\dot\mu\ ,
\]
thus $0 = \dS\dot\mu$. Together, these mean that $\dot\m$ and $\dot\mu$ give a cocycle $\dot\m+\dot\mu\in Z^2_a(\A,\A)$.

Differentiating twice gives further conditions. As in the case of a single algebra, the second derivative of associativity is $2\dot\m\circ\dot\m = [\dot\m,\dot\m]=-\dH\ddot\m$. The second derivative of eq.~\eqref{morphism} is
\[
\dH_1\ddot\mu+\dS_0\ddot\m + 2\dot\mu\circ_1\dot\m = \dS_1\ddot\m+\dH_2\ddot\mu+\dH_0\ddot\mu + 2\dot\m\circ_1\dot\mu+2\dot\m\circ_2\dot\mu+2\dot\mu\bullet_1\dot\mu\ ,
\]
thus $2\dot\m\circ\dot\mu+2\dot\mu\bullet\dot\mu = - \dH\ddot\mu-\dS\dot\m$. The second derivative of functoriality \eqref{functor} is
\[
\dS_1\ddot\mu = \dS_2\ddot\mu + \dS_0\ddot\mu + 2\dot\mu\circ_1\dot\mu\ ,
\]
thus $2\dot\mu\circ\dot\mu = -\dS\dot\mu$. Together, these show that 
\[
[\dot\m+\dot\mu,\dot\m+\dot\mu] = -\delta(\ddot\m+\ddot\mu)\ ;
\]
in other words, $[\dot\m+\dot\mu,\dot\m+\dot\mu]$ is exact.

For a diagram of $*$-algebras, $\m$ and $\mu$ satisfy two further conditions. Being a $*$-algebra means that $\m(M;a^*,b^*) = \m(M;b,a)^*$; equivalently, $\m^\star=\m$. Being a $*$-homomorphism means that $\mu(\phi;a^*)=\mu(\phi;a)^*$; equivalently, $\mu^\star=\mu$. The conditions are the same on the derivatives, so $\dot\m+\dot\mu$ is a self-adjoint cocycle.

\subsubsection{$H^2_a$}
The collection of diagrams of algebras, $\Alg^\Xc$ is a category in the usual way, meaning that a homomorphism of diagrams of algebras is a natural transformation. 

Consider a diagram of algebras, $\A$, and let $\B$ be another diagram with the same underlying vector spaces and the other structures denoted by $\m$ and $\mu$. Let $\alpha:\B\to\A$ be a natural isomorphism. This means that for $a,b\in \A(M)$,
\beq
\label{morphism2}
\alpha(M;\m(M;a,b)) = \alpha(M;a)\alpha(M;b) \ ,
\eeq
and for $\phi:M\to N$,
\beq
\label{natural}
\mu(\phi;\alpha(M;a)) = \alpha(N;\A(\phi;a)) \ .
\eeq

Now, imagine that $\alpha$ is part of a 1-parameter family of isomorphisms, starting from the identity. Differentiating eq.~\eqref{morphism2} (and then setting $\alpha=\id$) gives
\[
\dH_1\dot\alpha + \dot\m = \dH_2\dot\alpha+\dH_0\dot\alpha \ ,
\]
thus $\dot\m=\dH\dot\alpha$. Likewise, differentiating the naturality condition \eqref{natural} gives
\[
\dot\mu+\dS_0\dot\alpha = \dS_1\dot\alpha \ ,
\]
thus $\dot\mu = \dS\dot\alpha$. Together, this is $\dot\m+\dot\mu = \delta\dot\alpha$. In other words,  a trivial deformation is given to first order by an exact cocycle.

A class in $H^2_a(\A,\A)$ is called a \emph{Maurer-Cartan} element if the bracket with itself is $0$. 
Deformations of $\A$ are classified to first order modulo trivial deformations by the Maurer-Cartan elements.

\subsubsection{$H^1_a$}
The symmetries of $\A$ are its natural automorphisms, so now consider $\alpha$ a natural automorphism of $\A$. This means that for $a,b\in \A(M)$,
\beq
\label{morphism3}
\alpha(M;ab) = \alpha(M;a)\alpha(M;b) \ ,
\eeq
and for $\phi:M\to N$,
\beq
\label{natural2}
\A(\phi;\alpha(M;a)) = \alpha(N;\A(\phi;a)) \ .
\eeq

Suppose that $\alpha$ is part of a 1-parameter family of natural automorphisms, starting from the identity. Differentiating eq.~\eqref{morphism3} (and setting $\alpha=\id$) gives
\[
\dH_1\dot\alpha = \dH_2\dot\alpha+\dH_0\dot\alpha \ ,
\]
thus $0=\dH\dot\alpha$. Differentiating eq.~\eqref{natural2} gives
\[
\dS_0\dot\alpha=\dS_1\dot\alpha\ ,
\]
thus $0=\dS\dot\alpha$. Together, this gives $0=\delta\dot\alpha$, but since $C^0_a(\A,\A)=0$, this means that $\dot\alpha\in H^1_a(\A,\A)$.

The set of natural automorphisms of $\A$ is a group. For two natural automorphisms, $\alpha$ and $\beta$, their product is simply $\alpha\circ\beta=\alpha\circ_1\beta$, given by $(\alpha\circ\beta)(M;a) = \alpha(M;\beta(M;a))$.

The Lie algebra of the group of natural automorphisms is the space $H^1_a(\A,\A)$, with the $\circ$-commutator. This operation is the Gerstenhaber bracket in this degree.

For a diagram of $*$-algebras, we should also require that $\alpha(M;a^*)=\alpha(M;a)^*$. This just means that $\alpha^\star=\alpha$. The condition on the derivative is the same. The Lie algebra of infinitesimal natural $*$-automorphisms is thus the $\star$-invariant part of $H^1_a(\A,\A)$.

\subsection{Full Hochschild cohomology}
\subsubsection{$Z^2$}
A cocycle in the full Hochschild complex contains more structure. This complex should describe the deformations of a diagram of algebras to something more general.

The additional structure is an element $\xi\in C^{2,0}(\A,\A)$. For any $P\stackrel\psi\leftarrow N\stackrel\phi\leftarrow M$, this gives some $\xi(\psi,\phi)\in\A(P)$. For a diagram of $*$-algebras, if $\xi^\star=\xi$, then $\xi(\psi,\phi)$ is \emph{anti-selfadjoint}; this suggests that this is the first order part of a unitary. In general, we should have an invertible element $u(\psi,\phi)\in 1 + \A(P) \subset \tilde\A(P)$ (the unitalization of $\A(P)$).

Let $\m+\mu+u$ be an example of this (yet undetermined) generalization of a diagram of algebras. Suppose that if this is part of an 1-parameter family of such structures (starting from $\A$ with $u=1$) then the first derivative satisfies $0=\delta(\dot\m+\dot\mu+\dot u)$ and
\[
[\dot\m+\dot\mu+\dot u,\dot\m+\dot\mu+\dot u]
\]
is exact. These conditions have components in degrees $(0,3)$, $(1,2)$, $(2,1)$, and $(3,0)$.

The conditions in degrees $(0,3)$ and $(1,2)$ are unchanged, thus we should still require that $\m[M]$ is an associative product, and $\mu[\phi]$ is a homomorphism. 

The condition in degree $(2,1)$ is modified by terms involving $u$. This means that the functoriality condition \eqref{functor} on $\mu$ must be modified by $u$. The second derivative of this condition should be, $0=\dH\ddot\mu+\dS\ddot u + 2\dot\m\circ\dot u + 2\dot\mu\circ\dot\mu+2\dot\mu\bullet\dot u + 2 \dot u\bullet\dot\mu$, which is
\[
\dS_0\ddot\mu+\dS_2\ddot\mu + \dH_0\ddot u + 2\dot\m\circ_2\dot u + 2\dot\mu\circ_1\dot\mu + 2\dot u\bullet_1\dot\mu + 2\dot u\bullet_2\dot\mu = \dS_1\ddot\mu+\dH_1\ddot u + 2\dot\m\circ_1\dot u+ 2\dot\mu\bullet_1\dot u \ .
\]

The condition in degree $(3,0)$ is new and only involves $u$. This suggests that $u$ satisfies some nonlinear cocycle condition indexed by $B_3\Xc$. 
The second derivative of this condition should be $0=\dS\ddot u + 2\dot u\bullet\dot u$, which is
\[
\dS_0\ddot u + \dS_2\ddot u + 2\dot u\bullet_2\dot u = \dS_1\ddot u + \dS_3\ddot u+ 2\dot u\bullet_1\dot u\ .
\]

This leads to the following definition.
\begin{definition}
\label{Skew}
A \emph{skew diagram of algebras} $(\A,u)$ over a category $\Xc$ consists of
\begin{itemize}
\item
for every $M\in\Xc$, an associative algebra $\A(M)$,
\item
for every $\phi:M\to N$ in $\Xc$, a homomorphism $\A[\phi]:\A(M)\to\A(N)$, and
\item
for every $P\stackrel\psi\leftarrow N\stackrel\phi\leftarrow M$, an invertible element $u(\psi,\phi)\in 1+\A(P)$,
\end{itemize}
such that:
\begin{itemize}
\item
For any $P\stackrel\psi\leftarrow N\stackrel\phi\leftarrow M$ and $a\in\A(M)$,
\[
\A(\psi;\A(\phi;a))u(\psi,\phi) = u(\psi,\phi)\A(\psi\circ\phi;a)\ ;
\]
\item
for any $Q\stackrel\chi\leftarrow  P\stackrel\psi\leftarrow N\stackrel\phi\leftarrow M$,
\[
\A(\chi;u(\psi,\phi)) u(\chi,\psi\circ\phi) = u(\chi,\psi)u(\chi\circ\psi,\phi)\ .
\]
\end{itemize}
\end{definition}

\begin{remark}
The penultimate condition can be stated as commutativity of the diagram,
\[
\begin{tikzcd}
\A(M) \arrow{r}{\A[\psi\circ\phi]} \arrow{d}{\A[\phi]} & \A(P) \arrow{d}{\Ad[u(\psi,\phi)]} \\
\A(N) \arrow{r}{\A[\psi]} & \A(P)
\end{tikzcd}
\]
where $\Ad[u]:a\mapsto u a u^{-1}$.
\end{remark}

This definition extends easily to some other categories of algebras.
\begin{definition}
For a skew diagram of $*$-algebras, $u(\psi,\phi)$ is required to be unitary. For a skew diagram of unital algebras, $u(\psi,\phi)\in \A(M)$.
\end{definition}
\begin{remark}
Note that a diagram of algebras is precisely a skew diagram of the form $(\A,1)$.
\end{remark}

By construction, if a diagram of algebras, $\A$, is deformed to a 1-parameter family of skew diagrams of algebras, then at first order the deformation determines an element of $Z^2(\A,\A)$ whose bracket with itself is exact.

\subsubsection{$H^2$}
By analogy with $H^2_a(\A,\A)$, the Maurer-Cartan elements of $H^2(\A,\A)$ should classify the deformations of $\A$ into skew diagrams up to first order, modulo trivial deformations. A trivial deformation should mean one in which all of the skew diagrams are isomorphic to $\A$. This guides us to define morphisms of skew diagrams.

If $\A$ is a diagram of algebras, then a trivial deformation of $\A$ to skew diagrams can be constructed using a 1-parameter family of isomorphisms. This is given to first order by an element of $C^{0,1}(\A,\A)\oplus C^{1,0}(\A,\A)$. The first part is not new, and corresponds to a family of homomorphisms (indexed by the objects of $\Xc$). The second part is a family of algebra elements indexed by morphisms in $\Xc$. In the case of $*$-algebras, a $\star$-invariant element of $C^{1,0}(\A,\A)$ is a family of anti-self-adjoint algebra elements. This corresponds to a family of unitary (or generally, invertible) elements of the unitalized algebras.

We don't yet know what a morphism of skew diagrams is, but suppose that $(\alpha,v):(\B,u)\to(\A,1)$ is an isomorphism of skew diagrams, where $\B$ has the same underlying vector spaces.
 The structures $\m$, $\mu$, and $u$ of this $(\B,u)$ should be determined by some formulae from $\alpha$, $v$, and the structures of $\A$. As before, the product is defined by 
\[
\alpha(M;\m(M;a,b)) = \alpha(M;a)\alpha(M;b) \ ,
\]
for $a,b\in\A(M)$

Suppose that $(\alpha,v)$ is part of a 1-parameter family, starting from $\alpha=\id$ and $v=1$. To first order, $\mu$ is given by $\dot\mu = \dS\dot\alpha+\dH\dot v = \dS_0\dot\alpha - \dS_1\dot\alpha-\dH_0\dot v+\dH_1\dot v$; more explicitly,
\[
\dot\mu(\phi;a)+ \dot\alpha(N;\A(\phi;a)) + \A(\phi;a) \dot v(\phi) = \A(\phi;\dot\alpha(M;a))+\dot v(\phi)\A(\phi;a) \ ,
\]
for $\phi:M\to N$ and $a\in\A(M)$.
This is the derivative of 
\[
\alpha(N;\mu(\phi;a)) v(\phi) = v(\phi) \A(\phi;\alpha(M;a)) \ ,
\]
which defines $\mu$.

To first order, $u$ is given by $\dot u = \dS\dot v = \dS_0\dot v - \dS_1\dot v + \dS_2\dot v$. More explicitly,
\[
\dot u(\psi,\phi) + \dot v(\psi\circ\phi) = \A(\psi;\dot v(\phi)) + \dot v(\psi) \ ,
\]
for $M\stackrel\phi\to N \stackrel\psi\to P$. This is the first derivative of 
\[
\alpha(P;u(\psi,\phi)) v(\psi\circ\phi) = v(\psi) \A(\psi;v(\phi)) \ ,
\]
which defines $u$.
\begin{remark}
The orderings in the products in this formula are not obvious. This is linked with the reversed order of multiplication in the definition of $\bullet$. 

In fact, there are 2 possible conventions. The definition of a skew diagram of algebras can be changed so that the multiplication here and in the definition of $\bullet$ is in the obvious order. The repercussion of this choice is that the definition of the cup product would be more awkward. I am using the convention consistent with Gerstenhaber and Schack \cite{gs1988a,gs1988b}.
\end{remark}

Further extrapolation leads to the following definition.
\begin{definition}
\label{Skew morphism}
A \emph{morphism of skew diagrams of algebras} $(\alpha,v) : (\A,u)\to (\B,u')$ is given by 
\begin{itemize}
\item
For every object $M\in\Obj\Xc$, a homomorphism $\alpha[M]:\A(M)\to\B(M)$, and
\item
for every morphism $\phi:M\to N$, an element $v(\phi)\in 1 + \B(N)$,
\end{itemize}
such that
\begin{itemize}
\item
for $\phi:M\to N$ and $a\in\A(M)$,
\beq
\label{unnatural}
\alpha(N;\A(\phi;a))\, v(\phi) = v(\phi)\, \B(\phi;\alpha(M;a)) 
\eeq
\item
and for $M\stackrel\phi\to N \stackrel\psi\to P$,
\beq
\label{morphism condition}
\alpha(P;u(\psi,\phi))\, v(\psi\circ\phi) = v(\psi)\, \B(\psi;v(\phi))\, u'(\psi,\phi)\ .
\eeq
\end{itemize}
For a morphism of skew diagrams of unital algebras, $v(\phi)\in \B(N)$, and for a morphism of skew diagrams of $*$-algebras it should be unitary.
\end{definition}
\begin{remark}
Equation \eqref{unnatural} generalizes the definition of a natural transformation. It can be expressed as commutativity of the diagram
%\[
%\begin{tikzcd}
%\A(M) \arrow{r}{\A[\phi]} \arrow{d}{\alpha[M]} 
%& \A(N) \arrow{dr}{\alpha[N]} 
%&  \\
%\B(M) \arrow{dr}{\B[\phi]} 
%&& \B(N) \\
%& \B(N) \arrow{ur}{\Ad[v(\phi)]}
%\end{tikzcd}
%\]
\[
\begin{tikzcd}
\A(M) \arrow{rr}{\A[\phi]} \arrow{dd}{\alpha[M]} 
&& \A(N) \arrow{d}{\alpha[N]}   \\
&& \B(N) \arrow{d}{\cdot v(\phi)}\\
\B(M) \arrow{r}{\B[\phi]} & \B(N) %\arrow{ur}{\Ad[v(\phi)]}
\arrow{r}{v(\phi)\cdot} & \B(N)
\end{tikzcd}
\]
which is like the diagram defining naturality, but with the lower right corner modified.
\end{remark}
%\begin{remark}
%It may be appropriate to weaken this definition slightly by not requiring $v(\phi)$ to be unitary.
%\end{remark}

\subsubsection{$Z^1$}
A diagram of algebras $\A$ is in particular a skew diagram, if we set $u=1$. We can thus consider the group, $\SAut(\A)$, of skew automorphisms of $\A$, i.e., automorphisms of $(\A,1)$ in the category of skew diagrams of algebras (which I have not yet fully defined).

The defining properties of a skew automorphism $(\alpha,v)$ of $\A$ simplify to: $\alpha[M]$ is a homomorphism,
\[
\alpha(N;\A(\phi;a))\, v(\phi) = v(\phi)\, \A(\phi;\alpha(M;a))\ ,
\]
and
\[
v(\psi\circ\phi) = v(\psi)\, \A(\psi;v(\phi))\ .
\]
For a 1-parameter family of such automorphisms, starting from $(\id,1)$, the first derivatives of these conditions give precisely that $\dot\alpha+\dot v$ is closed. So, $Z^1(\A,\A)$ is the space of infinitesimal skew automorphisms.

The Gerstenhaber bracket on $C^\bullet(\A,\A)$ does not satisfy the graded Jacobi identity. Nevertheless, $Z^1(\A,\A)$ actually is a Lie algebra. So, let's compute the Gerstenhaber bracket of $\xi+\upsilon$ and $\xi'+\upsilon'\in C^{0,1}(\A,\A)\oplus C^{1,0}(\A,\A)$. Because of the low degrees, $\xi\bullet \upsilon' = 0$. The only degree $(0,1)$ part of $(\xi+\upsilon)\comp(\xi'+\upsilon')$ is $\xi\circ\xi' = \xi\circ_1\xi'$, given by 
\[
(\xi\circ\xi')(M;a) = \xi(M;\xi'(M;a))\ .
\]
The degree $(1,0)$ part is $\upsilon\bullet\upsilon'+\xi\circ\upsilon' = \upsilon\bullet_1\upsilon'+\xi\circ_1\upsilon'$, given by 
\[
(\upsilon\bullet\upsilon'+\xi\circ\upsilon')(\phi) = \upsilon'(\phi)\,\upsilon(\phi) + \xi(N;\upsilon(\phi))\ .
\]
(Note the order of multiplication.) The Gerstenhaber bracket is given by antisymmetrizing this. 

The Gerstenhaber bracket actually satisfies the Jacobi identity on the larger subspace of $\xi+\upsilon\in C^1(\A,\A)$ such that $\dH\xi=0$ (i.e., $\xi[M]$ is a derivation).
This Lie algebra has subalgebras indexed by the objects and morphisms of $\Xc$. For every object $M\in\Obj\Xc$, there is the Lie algebra $\der(\A(M))$ of derivations. For every morphism $\phi:M\to N$, there is a copy of $\A(N)$ with bracket equal to \emph{minus} the commutator. Obviously, $\der(\A(M))$ acts on $\A(M)$ by derivations. This Lie algebra is the semidirect product of all of the subalgebras. 

The corresponding group has subgroups indexed by the objects and morphisms in $\Xc$. For every object, $M$, there is the group of automorphisms, $\Aut(\A(M))$. For every morphism with codomain $M$, there is the group of invertible elements in $1+\A(M)^\op$. The group of skew automorphisms of $\A$ is a subgroup of the semidirect product of these groups.

This leads to the definition of composition.
\begin{definition}
\label{Composition}
If $(\alpha,v):(\A,u)\to (\B,u')$ and $(\beta,v'):(\B,u')\to(\Cf,u'')$ are morphisms of skew diagrams of algebras, then their composition $(\gamma,v'') = (\beta,v')(\alpha,v) : (\A,u)\to (\Cf,u'')$ is given by
\beq
\label{composition1}
\gamma(M;a)=\beta(M;\alpha(M;a))
\eeq
and
\beq
\label{composition2}
v''(\phi) = \beta(N;v(\phi))\,v'(\phi) \ .
\eeq
\end{definition}
\begin{remark}
There are two natural ways of identifying the semidirect product of groups with the Cartesian product as a set. The other choice is not compatible with the fact that $v(\phi)\in 1+\A(N)$.
\end{remark}

\begin{thm}
With these definitions of the objects, morphisms, and compositions, skew diagrams of algebras over $\Xc$ form a category.
\end{thm}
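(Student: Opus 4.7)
The plan is to verify the three category axioms: existence of identities, well-definedness of composition, and associativity. I will take the identity morphism on $(\A,u)$ to be $(\id_\A,1)$---the identity homomorphism at each object together with the constant value $1$ (in the unitalization) at each morphism. The real work is checking this identity against Definition~\ref{Skew morphism} and checking that the formula of Definition~\ref{Composition} yields a morphism of skew diagrams; associativity will turn out to be essentially automatic.

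Checking the identity is quick: equation~\eqref{unnatural} becomes $\A(\phi;a)=\A(\phi;a)$, and \eqref{morphism condition} becomes $u(\psi,\phi)=\A(\psi;1)\,u(\psi,\phi)$, which holds because a homomorphism extended to unitalizations sends $1$ to $1$. The unit laws $(\id,1)(\alpha,v)=(\alpha,v)=(\alpha,v)(\id,1)$ then follow directly from \eqref{composition1}--\eqref{composition2} using the same fact for $\alpha$.

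The crux is showing that $(\gamma,v'') := (\beta,v')(\alpha,v)$ is a morphism of skew diagrams whenever $(\alpha,v)\colon(\A,u)\to(\B,u')$ and $(\beta,v')\colon(\B,u')\to(\Cf,u'')$ are. The map $\gamma[M]=\beta[M]\circ\alpha[M]$ is plainly a homomorphism, and $v''(\phi)=\beta(N;v(\phi))\,v'(\phi)\in 1+\Cf(N)$ is invertible as a product of invertibles. For \eqref{unnatural} I plan to expand $\gamma(N;\A(\phi;a))\,v''(\phi)$, merge the two $\beta$-factors using that $\beta[N]$ is a homomorphism, apply \eqref{unnatural} for $(\alpha,v)$ inside, split apart again, and apply \eqref{unnatural} for $(\beta,v')$ to reach $v''(\phi)\,\Cf(\phi;\gamma(M;a))$. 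For \eqref{morphism condition} the same strategy works, but with one extra rearrangement: after merging $\beta$-factors and applying \eqref{morphism condition} for $(\alpha,v)$, the expression distributes across three factors; \eqref{unnatural} for $(\beta,v')$ then lets me move $\beta(P;\B(\psi;v(\phi)))$ past $v'(\psi)$, and a final application of \eqref{morphism condition} for $(\beta,v')$ produces the $u''(\psi,\phi)$-factor. The principal obstacle is the unusual ordering of factors in \eqref{morphism condition} and \eqref{composition2}---the very ordering flagged in the remark after Definition~\ref{Skew morphism}---which must be respected at every step.

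Associativity is then direct. On the first component it is associativity of composition of ordinary homomorphisms. On the second component, applying \eqref{composition2} twice and using that the outermost homomorphism preserves multiplication, both bracketings of a triple composition reduce to the same explicit expression, and therefore agree.
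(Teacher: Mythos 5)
Your proposal is correct and is precisely the ``straightforward calculation'' that the paper declines to write out (the paper also notes the theorem follows from Theorem~\ref{Higher thm}, the identification of skew diagrams with pseudofunctors and of their morphisms with lax transformations). One small ordering point: in verifying \eqref{morphism condition} for the composite, you must apply \eqref{morphism condition} for $(\beta,v')$ to the pair $\beta(P;u'(\psi,\phi))\,v'(\psi\circ\phi)$ \emph{before} you can move $\beta(P;\B(\psi;v(\phi)))$ past $v'(\psi)$ via \eqref{unnatural}, since that application is what produces the factor $v'(\psi)$ in the first place --- but this only reorders your rewrites and is not a gap.
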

The proof is a straightforward calculation. It also follows from Theorem~\ref{Higher thm}, below

\subsubsection{$H^1$}
As I have noted, $Z^1(\A,\A)$ is a Lie algebra under the Gerstenhaber bracket. It acts on $C^0(\A,\A)$ and on itself by the Gerstenhaber bracket, and the coboundary
\[
\delta : C^0(\A,\A) \to Z^1(\A,\A)
\]
is equivariant, so its image is an ideal, and thus $H^1(\A,\A)$ is a Lie algebra. The obvious commutator bracket makes  $C^0(\A,\A)$ a Lie algebra and $-\delta$ a homomorphism.
This suggests that the analogue of $H^1(\A,\A)$ should be a quotient of the group  $\SAut(\A)$. 

Let $\xi+\upsilon\in Z^1(\A,\A)$ and $\zeta\in C^0(\A,\A)$. The Gerstenhaber bracket in these degrees reduces to 
\[
[\xi+\upsilon,\zeta] = \xi\circ\zeta = \xi\circ_1\zeta ,
\]
so the action of $(\alpha,v)\in \SAut(\A)$ on $\zeta$ is also just $\alpha\circ_1\zeta$, i.e., 
\[
(\alpha\circ_1\zeta)(M) = \alpha(M;\zeta(M)) \ .
\]

The finite analogue of $C^0(\A,\A)$ is the set of functions $w\in C^0(\A,\tilde\A)$ such that $w(M)\in 1 + \A(M)$ is invertible; this is a group under the naive product. The finite analogue of $-\delta$ should be an equivariant homomorphism from this group to $\SAut(\A)$. Let $(\alpha,v)$ be the image of $w$ by this homomorphism. If $w$ is part of a 1-parameter family starting from $w=1$, then the first derivative should be given by minus the coboundary, i.e., $\dot\alpha = -\dH\dot w = -\dH_0\dot w + \dH_1\dot w$ and $\dot v = -\dS\dot w = -\dS_0\dot w + \dS_1\dot w$. Explicitly,
\[
\dot\alpha(M;a) = \dot w(M)\, a - a\,\dot w(M) 
\]
and
\[
\dot v(\phi) = \dot w(N) - \A(\phi;\dot w(M))  \ .
\]
These requirements lead uniquely to the formulae,
\[
\alpha(M;a) = w(M)a\, w(M)^{-1}
\]
and
\[
v(\phi) = w(N)\A(\phi;w(M)^{-1}) \ .
\]

\begin{definition}
\label{Outer}
The group of \emph{outer skew automorphisms}, $\SOut(\A)$, is the quotient of $\SAut(\A)$ by this image.
\end{definition}
This is the finite analogue of $H^1(\A,\A)$.  I discuss this further in Section~\ref{Higher}.\label{SOut}

\subsubsection{$H^0$}
In this case the infinitesimal and finite concepts are the same.

$H^0(\A,\A)$ is the joint kernel of $\dS$ and $\dH$ in 
\[
C^{0,0}(\A,\A) = \prod_{M\in \Obj\Xc} \A(M)
\ .
\]
Let $\zeta\in C^{0,0}(\A,\A)$. 

The Hochschild coboundary $\dH\zeta \in C^{0,1}(\A,\A)$ is defined by, for $M\in \Obj\Xc$ and $a\in\A(M)$,
\[
(\dH\zeta)(M;a) = a\,\zeta(M) - \zeta(M) a = [a,\zeta(M)] \ .
\]
The condition $0 = \dH\zeta$ means precisely that $\zeta(M)\in\A(M)$ is central for all $M\in \Obj\Xc$.

The simplicial coboundary $\dS\zeta\in C^{1,0}(\A,\A)$ is defined by, for $\phi:M\to N$ in $\Xc$,
\[
(\dS\zeta)(\phi) = \A(\phi;\zeta(M)) - \zeta(N) \ .
\]
The condition that $0 = \dS\zeta$ is a sort of invariance. 
\begin{example}
If $\Xc=G$ is a group (viewed as a category with one object, $*\in\Obj(\Xc)$) then $\A:G\to\Alg$ is equivalent to a single algebra $A=\A(*)$ with an action of $G$ by automorphisms of $A$. The condition that $0=\dS\zeta$ means that $\zeta$ is $G$-invariant, and $0=\dH\zeta$ means that $\zeta$ is central, so 
\[
H^0(\A,\A) = \Zcenter(A)^G\ .
\]
\end{example}

For a diagram of $*$-algebras, $\zeta^\star=\zeta$ precisely if $\zeta(M)$ is self-adjoint.

For $\A:\Loc\to\sAlg$ a LCQFT, $\zeta=\zeta^\star\in H^0(\A,\A)$ is an observable whose values are unaffected by the action of other observables. Moreover, it can be measured in any arbitrarily small region of spacetime.

\subsection{Higher categorical interpretation}
For the definitions of 2-categories and their functors, transformations, and modifications, see \cite{lei1998,lei2004}.
\begin{definition}
\label{Higher}
Let $\Algt$ be the strict 2-category whose underlying 1-category is $\Alg$ and 
\begin{itemize}
\item
for any homomorphisms $\alpha,\beta:A\to B$, a 2-morphism $\alpha\stackrel{u}{\Longrightarrow}\beta$ is $u\in 1+B$ such that for any $a\in A$,
\[
u \alpha(a) = \beta(a)u \ ;
\]
\item
the horizontal composition of $\alpha\stackrel{u}{\Longrightarrow}\beta\stackrel{v}{\Longrightarrow}\gamma$ is $\alpha\stackrel{vu}{\Longrightarrow}\gamma$ (i.e., $v\circ u=vu$);
\item
the vertical composition of $\alpha\stackrel{u}{\Longrightarrow}\beta$ and $\gamma\stackrel{v}{\Longrightarrow}\delta$ is $v\bullet u :=v\,\gamma(u) = \delta(u)\, v : (\gamma\circ\alpha)\Longrightarrow(\delta\circ\beta)$.
\end{itemize}
\end{definition}

As a category, $\Xc$ is in particular a strict 2-category, with only identity 2-morphisms. Both $\Xc$ and $\Algt$ are in particular weak 2-categories (bicategories). 

\begin{thm}
\label{Higher thm}
The category of skew diagrams of algebras over $\Xc$ is the category whose objects are pseudofunctors from $\Xc$ to $\Algt$ (written as oplax functors) and whose morphisms are lax transformations. 
\end{thm}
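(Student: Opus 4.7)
The plan is to unpack both sides explicitly and match the data, axioms, compositions, and identities one by one.

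First, I would observe that a pseudofunctor $\A:\Xc\to\Algt$ in the oplax convention (viewing $\Xc$ as a $2$-category with only identity $2$-cells) consists of the object and $1$-morphism assignments $M\mapsto\A(M)$ and $\phi\mapsto\A[\phi]$, together with, for each composable pair $\phi:M\to N$, $\psi:N\to P$, an invertible $2$-morphism $\A[\psi\circ\phi]\Longrightarrow\A[\psi]\circ\A[\phi]$ in $\Algt$. By Definition~\ref{Higher}, such a $2$-morphism is precisely an invertible $u(\psi,\phi)\in 1+\A(P)$ with $u(\psi,\phi)\,\A(\psi\circ\phi;a)=\A(\psi;\A(\phi;a))\,u(\psi,\phi)$, which is exactly the first axiom of Definition~\ref{Skew}. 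The associator coherence relating the two pastings $\A[\chi\circ\psi\circ\phi]\Longrightarrow\A[\chi]\circ\A[\psi]\circ\A[\phi]$ through either $u(\chi\circ\psi,\phi)$ then $u(\chi,\psi)$ or $u(\chi,\psi\circ\phi)$ then $u(\psi,\phi)$, when computed using the two compositions in Definition~\ref{Higher}, reduces after a short calculation to $\A(\chi;u(\psi,\phi))\,u(\chi,\psi\circ\phi)=u(\chi,\psi)\,u(\chi\circ\psi,\phi)$, the second axiom of Definition~\ref{Skew}. Unitality is automatic since $\Xc$ is strict.

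Next I would unpack a lax transformation $(\A,u)\Longrightarrow(\B,u')$. Such a transformation supplies a $1$-morphism $\alpha[M]:\A(M)\to\B(M)$ for every object and, for every $\phi:M\to N$, a $2$-cell in the lax direction $v(\phi):\B[\phi]\circ\alpha[M]\Longrightarrow \alpha[N]\circ\A[\phi]$. By Definition~\ref{Higher}, this is an element $v(\phi)\in 1+\B(N)$ satisfying equation~\eqref{unnatural}. The lax-transformation coherence axiom compares the direct $2$-cell $v(\psi\circ\phi)$ with the pasting built from $u'(\psi,\phi)$, $v(\phi)$, $v(\psi)$, and the inverse of $u(\psi,\phi)$ (with appropriate identities whiskered on); using $v\bullet u=v\,\gamma(u)=\delta(u)\,v$ from Definition~\ref{Higher} this reduces precisely to equation~\eqref{morphism condition}.

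Finally I would check that the $2$-categorical composition and identity of lax transformations match Definition~\ref{Composition} and $(\id,1)$ respectively. The pasting of $v'(\phi)$ and $v(\phi)$ alongside $\alpha[M]$ and $\beta[N]$ computes, again by Definition~\ref{Higher}, to $\beta(N;v(\phi))\,v'(\phi)$, which is~\eqref{composition2}; likewise the identity transformation has $\alpha=\id$ and $v=1$, which plainly satisfies both axioms. Since composition of lax transformations is known to be associative and unital in general, this transfers verbatim to skew diagrams.

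The main obstacle is purely conventional bookkeeping: the oplax direction chosen for the comparison $2$-cells of a pseudofunctor, the lax direction chosen for the $2$-cells of a transformation, and the fact that Definition~\ref{Higher} swaps the usual names of ``horizontal'' and ``vertical'' composition, so one must be careful to track which formula (the product $vu$ in $B$, or the whiskered $v\,\gamma(u)=\delta(u)\,v$) applies at each step. Once these conventions are pinned down, every axiom on each side corresponds directly to one on the other, and the equivalence of the two descriptions of the category is immediate.
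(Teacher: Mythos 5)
Your proposal is correct and follows essentially the same route as the paper: unpack the oplax/pseudofunctor data and its coherence axiom to recover Definition~\ref{Skew}, unpack a lax transformation and its coherence square to recover Definition~\ref{Skew morphism} (eqs.~\eqref{unnatural} and \eqref{morphism condition}), and match the pasting of lax transformations with Definition~\ref{Composition}. The only cosmetic difference is that you phrase the transformation coherence using the inverse of $u(\psi,\phi)$ to isolate $v(\psi\circ\phi)$, whereas the paper states it as an equality of two pastings into the common target, which avoids inverting anything; both are equivalent since the comparison cells of a pseudofunctor are invertible.
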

\begin{proof}
An oplax functor $(\A,u): \Xc\to\Algt$ maps objects to objects, morphisms to morphisms, and composable pairs to 2-morphisms. In a pseudofunctor, these 2-morphisms are required to be invertible. 
 For each $M\in\Obj\Xc$, it gives an algebra $\A(M)$. For $\phi:M\to N$ it gives a homomorphism $\A[\phi]:\A(M)\to\A(N)$. For $(\psi,\phi)\in B_2\Xc$, it gives  $u(\psi,\phi) : \A[\psi\circ\phi]\Rightarrow \A[\psi]\circ\A[\phi]$. 
 
 By the definition of $\Algt$, this means that $u(\psi,\phi)\in 1+\A(P)$ and 
\[
u(\psi,\phi)\A(\psi\circ\phi;a) = \A(\psi;\A(\phi;a))u(\psi,\phi) \ .
\]
These are the components of a natural transformation, but because $\Xc$ has only identity 2-morphisms, the naturality condition is trivial.

For any $(\chi,\psi,\phi)\in B_3\Xc$, there is a commutative diagram of 2-morphisms,
\[
\begin{tikzcd}
\A[\chi\circ\psi\circ\phi] \arrow[Rightarrow]{rr}{u(\chi,\psi\circ\phi)} \arrow[Rightarrow]{d}{u(\chi\circ\psi,\phi)} &{}& \A[\chi]\circ\A[\psi\circ\phi] \arrow[Rightarrow]{d}{\id_{\A[\chi]}\circ u(\psi,\phi)} \\
\A[\chi\circ\psi]\circ\A[\phi] \arrow[Rightarrow]{rr}{u(\chi,\psi)\circ\id_{\A[\phi]}} &{}& \A[\chi]\circ\A[\psi]\circ\A[\phi]
\end{tikzcd}
\]
that is,
\[
[\id_{\A(\chi)}\circ u(\psi,\phi)]\bullet u(\chi,\psi\circ\phi) = [u(\chi,\psi)\circ\id_{\A(\phi)}]\bullet u(\chi\circ\psi,\phi) \ .
\]
(Here, $\id$ denotes the identity 2-morphism over a 1-morphism in $\Algt$.) The definition of horizontal composition in $\Algt$ gives that
\[
\id_{\A(\chi)}\circ u(\psi,\phi) = \A(\chi;u(\psi,\phi)) : \A(\chi)\circ\A(\psi,\phi) \Longrightarrow \A(\chi)\circ\A(\psi)\circ\A(\phi)
\]
and
\[
u(\chi,\psi)\circ\id_{\A(\phi)} = u(\chi,\psi) : \A(\chi\circ\psi)\circ\A(\phi) \Longrightarrow \A(\chi)\circ\A(\psi)\circ\A(\phi) \ .
\]
The definition of vertical composition in $\Algt$ simplifies this to
\[
\A(\chi;u(\psi,\phi)) u(\chi,\psi\circ\phi) = u(\chi,\psi)u(\chi\circ\psi,\phi)\ .
\]
This shows that a pseudofunctor $(\A,u): \Xc\to\Algt$ is a skew diagram, and that a skew diagram is a pseudofunctor.

Let $(\A,u)$ and $(\B,u')$ be diagrams of algebras over $\Xc$. As we have just seen, these are pseudofunctors $\Xc\to\Algt$, written as oplax functors.
 A lax  transformation  $(\alpha,v):(\A,u)\naturalto(\B,u')$ consists of, for every object $M\in\Obj\Xc$, a homomorphism $\alpha[M]:\A(M)\to\B(M)$, and for every 1-morphism $\phi:M\to N$, a 2-morphism $v(\phi) :\B[\phi]\circ\alpha[M]\Rightarrow\alpha[N]\circ\A[\phi]$. This means that $v(\phi)\in1+\B(N)$ and for any $a\in\A(M)$,
\[
 \alpha(N;\A(\phi;a))\, v(\phi) = v(\phi)\, \B(\phi;\alpha(M;a)) \ .
\]
This is precisely eq.~\eqref{unnatural}. 

For $P\stackrel\psi\leftarrow N\stackrel\phi\leftarrow M$, there is a commutative diagram of 2-morphisms,
\[
\begin{tikzcd}
\B[\psi\circ\phi]\circ\alpha[M] \arrow[Rightarrow]{rr}{v(\psi\circ\phi)} \arrow[Rightarrow]{d}{u'(\psi,\phi)\circ\id_{\alpha[M]}} && \alpha[P]\circ\A[\psi\circ\phi] \arrow[Rightarrow]{d}{\id_{\alpha[P]}\circ u(\psi,\phi)} \\
\B[\psi]\circ\B[\phi]\circ\alpha[M] \arrow[Rightarrow]{dr}{\id_{\B[\psi]}\circ v(\phi)} && \alpha[P]\circ\A[\psi]\circ\A[\phi] \\
& \B[\psi]\circ\alpha[N]\circ\A[\phi] \arrow[Rightarrow]{ur}{v(\psi)\circ\id_{\A[\phi]}}
\end{tikzcd}
\]
that is,
\[
[u'(\psi,\phi)\circ\id_{\alpha[M]}]\bullet v(\psi\circ\phi) = [v(\psi)\circ\id_{\A[\phi]}] \bullet [\id_{\B[\psi]}\circ v(\phi)] \bullet [u'(\psi,\phi)\circ\id_{\alpha[M]}] \ .
\]
By the definitions of horizontal and vertical composition, this is precisely  eq.~\eqref{morphism condition}.

Finally, consider two lax transformations $(\alpha,v):(\A,u)\naturalto (\B,u')$ and $(\beta,v'):(\B,u')\naturalto(\Cf,u'')$, and denote their compositon as $(\gamma,v'')  : (\A,u)\naturalto (\Cf,u'')$. 
 At $M\in\Obj\Xc$, $\gamma[M] = \beta[M]\circ\alpha[M]$; this is eq.~\eqref{composition1}. For $\phi:M\to N$, $v''(\phi):\Cf[\phi]\circ\gamma[M]\Rightarrow \gamma[N]\circ\A[\phi]$ is diagrammatically
\[
\begin{tikzcd}
& \B(N) \arrow[swap]{ld}{\beta[N]} & \A(N) \arrow[swap]{l}{\alpha[N]}  & \\
\Cf(N) &{}&{}& \A(M) \arrow[swap]{ul}{\A[\phi]} \arrow{ld}{\alpha[M]}\\
& \Cf(N) \arrow{lu}{\Cf[\phi]} \arrow[Rightarrow,yshift=3.5ex]{u}{v'(\phi)} & \B(M) \arrow{l}{\beta[M]} \arrow[Rightarrow,yshift=3.5ex]{u}{v(\phi)} \arrow{uul}{\B[\phi]}
\end{tikzcd}
\]
and explicitly
\begin{align*}
v''(\phi) &= [\id_{\beta[N]}\circ v(\phi)]\bullet[v'(\phi)\circ\id_{\alpha[M]}] \\
&= \beta(N;v(\phi))\, v'(\phi)\ ,
\end{align*}
which is eq.~\eqref{composition2}.
\end{proof}

\begin{remark}
There are a few minor variations possible on these definitions, depending upon what is required to be invertible and lax versus oplax versions.
\end{remark}

From this perspective, an additional structure becomes apparent: The category of skew diagrams is actually a strict 2-category. 
The 2-morphisms are modifications between the lax transformations. 

Again, let $(\A,u),(\B,u'):\Xc\to\Algt$ and $(\alpha,v),(\beta,v'):(\A,u)\naturalto(\B,u')$. A \emph{modification} $w:(\alpha,v)\modification(\beta,v')$ consists of, for every $M\in\Obj\Xc$, a 2-morphism $w(M):\alpha[M]\Rightarrow\beta[M]$, such that for any $\phi:M\to N$, there is a commutative diagram of 2-morphisms,
\[
\begin{tikzcd}
\B[\phi]\circ\alpha[M] \arrow[Rightarrow]{rr}{\id_{\B[\phi]}\circ w[M]} \arrow[Rightarrow]{d}{v(\phi)} &{}& \B[\phi]\circ\beta[M] \arrow[Rightarrow]{d}{v'(\phi)} \\
\alpha[N]\circ\A[\phi] \arrow[Rightarrow]{rr}{w(N)\circ\id_{\A[\phi]}} &{}& \beta[N]\circ\A[\phi]
\end{tikzcd}
\]
Applying the definition of $\Algt$ makes this explicit:
\begin{definition}
Given two 1-morphisms of skew diagrams, $(\alpha,v),(\beta,v'):(\A,u)\to(\B,u')$, a \emph{2-morphism} $w:(\alpha,v)\Rightarrow(\beta,v')$ consists of  $w(M)\in1+\B(M)$ (for every object $M$) satisfying 
\beq
\label{modification1}
w(M)\alpha(M;a) = \beta(M;a)w(M)
\eeq
for all $a\in\A(M)$, and
\beq
\label{modification2}
v'(\phi)\B(\phi;w(M)) = w(N)v(\phi) \ ,
\eeq 
for every $\phi:M\to N$.

For $w:(\alpha,v)\Rightarrow (\beta,v')$ and $w':(\beta,v')\Rightarrow (\gamma,v'')$, the vertical composition $w'\bullet w : (\alpha,v)\Rightarrow (\gamma,v'')$ is 
\[
(w'\bullet w)(M) = w'(M)\bullet w(M) = w'(M)w(M) \ .
\]

For three skew diagrams, $(\A,u)$, $(\B,u')$, and $(\Cf,u'')$, four 1-morphisms $(\alpha,v)$, $(\beta,v'):(\A,u)\to(\B,u')$ and $(\delta,v'')$, $(\gamma,v'''):(\B,u')\to(\Cf,u'')$, and two 2-morphisms, $w:(\alpha,v)\Rightarrow(\beta,v')$ and $w':(\delta,v'')\Rightarrow(\gamma,v''')$. The horizontal composition $w'\circ w$ is
\[
(w'\circ w)(M) = w'(M)\circ w(M) = w'(M) \gamma(M;w(M)) \ .
\]
\end{definition}

In particular, for a given diagram $\A$, there is a 2-group of automorphisms, which can be described by a crossed module involving the structures discussed in Section~\ref{SOut}. The crossed module consists of:
\begin{itemize}
\item
the finite analogue of $C^0(\A,\A)$, i.e., the group of invertible elements of $C^0(\A,\tilde\A)$ of the form $w(M)\in1+\A(M)$,
\item
the finite analogue of $Z^1(\A,\A)$, i.e., $\SAut(A)$,
\item
the finite analogue of $-\delta : C^0(\A,\A) \to Z^1(\A,\A)$, and
\item
the finite analogue of the action of $Z^1(\A,\A)$ on $C^0(\A,\A)$ by the Gerstenhaber bracket.
\end{itemize}
It is easy to check that in Section \ref{SOut}, $w:(\id,1)\Rightarrow (\alpha,v)$. 

This gives interpretations of the finite analogues of $H^1(\A,\A)$ and $H^0(\A,\A)$. $\SOut(\A)$ is the $0$'th homotopy group of the automorphism 2-group of $\A$. The group of invertible elements of $H^0(\A,\A)$ is the first homotopy group of the automorphism 2-group of $\A$.

\subsection{Generalized AQFT}
Definition \ref{Skew} of a skew diagram of algebras suggests a generalization of algebraic quantum field theory. With $\Xc=\Loc$ or the category of causally complete regions of a fixed spacetime, we can simply define a generalized AQFT as a skew diagram of $*$-algebras over $\Xc$. Einstein causality, the time-slice axiom, and isotony can be required just as before.
This sets AQFT models within a larger class of structures. 

Given a skew diagram $(\A,u)$ of algebras over $\Xc$, there exists another category $\mathcal Y$, a functor $\pi:\mathcal Y\to\Xc$, a diagram $\B:\mathcal Y\to\Alg$, and a (non-functorial) section $\sigma$ of $\pi$ such that $\A=\B\circ\sigma$. 

If, in some attempt to construct an AQFT, it is only possible to construct a generalized AQFT in this sense, then this is an indication that some additional structure is required beyond the globally hyperbolic spacetimes in $\Loc$.

For example, a model with a spin-$\frac12$ field cannot be formulated as a LCQFT over $\Loc$ (see remarks following Cor.~13 in \cite{few2017}) but can be formulated over the category of globally hyperbolic spin-manifolds. It appears likely that such a model \emph{can} be formulated as a generalized AQFT with a skew diagram over $\Loc$.

This situation is extremely similar to that considered in \cite{b-s}, and the relationship deserves further investigation. Can a quantum field theory on a category $\mathcal Y$ fibered in groupoids over $\Loc$ be described by a skew diagram over $\Loc$? Can a generalized AQFT  be described by a QFT over a fibered category? The answers are almost certainly yes in some cases.

\section{Interaction}
\label{Interaction}
As I have mentioned, an AQFT (or any diagram of algebras) has two deformable structures: the associative products and the maps between algebras. These are elements of the Hochschild bicomplex in degrees $(0,2)$ and $(1,1)$, respectively. This means that there are 2 qualitatively different ways of deforming an AQFT. For example:
\begin{itemize}
\item
The transition from a classical to a quantum field theory deforms the associative algebra structures.
\item
The transition from a free to an interacting field theory deforms the maps between algebras.
\end{itemize}

It may not always be possible to disentangle these 2 aspects of deformation, but it is known that the von~Neumann algebra associated to any connected, precompact region of spacetime is isomorphic to the (unique) hyperfinite type III${}_1$ factor \cite{bdf1987,fre1986,fre1987}. (This does not apply to classical field theories.) This strongly indicates that deformation of maps is far more important to the deformation of a quantum field theory.

If an  AQFT, $\A$, is deformed smoothly by changing the interaction, then this should be described to first order by a class in $H^2_a(\A,\A)$. This suggests that that class can be given by an element of $C^{1,1}(\A,\A)$.

In practice, computing a cohomology class means computing some cocycle in that class. This  requires making an additional choice. 

In principle, in order to compute the characteristic class of an interaction, we should (for each $M$) identify the algebras  for a family of field theories with a fixed vector space. It may then be possible to differentiate the product and homomorphisms to get a cocycle. Different choices of identifications should give cohomologous cocycles.

To understand what is needed, it is simplest to first consider classical field theories. A classical algebra of observables is a commutative algebra of functionals on the space of solutions. We need to choose a way of identifying solutions of  different field theories on a spacetime $M$. The simplest way to do this is by initial data. Given a Cauchy surface $\Sigma\subset M$, solutions of field theories on $M$ can be identified with their initial data on $\Sigma$. The characteristic class of an interaction can thus be computed by using an arbitrary choice of a Cauchy surface for every spacetime $M\in \Obj(\Loc)$.

To see how this can work, first consider a classical field theory given by some Lagrangian density, $\Lcal$, on a spacetime $M\in\Obj(\Loc)$. Let $\Sigma_1,\Sigma_2\subset M$ be Cauchy surfaces, and suppose for simplicity that\footnote{See Sec.~\ref{Notation} for notation.} $\Sigma_1\lesssim \Sigma_2$ and that these are equal outside of some compact set.

Assume for simplicity that this theory has no gauge degeneracies, so that the phase space is just the set of solutions of the equations of motion for $\Lcal$.
Such a solution can be identified with initial data along $\Sigma_1$ or $\Sigma_2$. If we change the Lagrangian density to $\Lcal+\lambda\Vcal$, then the equations of motion change, and evolution from $\Sigma_1$ to $\Sigma_2$ defines a map from the phase space to itself. Differentiating with respect to $\lambda$ and then setting $\lambda=0$ gives a vector field, $\Xi$, on the phase space.

Let
\[
H := \int_{\{x\in M\mid \Sigma_1\lesssim x \lesssim \Sigma_2\}} \Vcal \ .
\]
This is a functional on the phase space, and $\Xi$ is the Hamiltonian vector field given by $H$. (This is immediate from Peierls' construction of the Poisson structure \cite{pei1952,haa1996}.) 

If we define 
\[
\theta_i(x) := 
\begin{cases}
1 & x\gtrsim \Sigma_i \\
0 & \text{otherwise}
\end{cases}
\]
then 
\beq
\label{hamiltonian}
H = \int_M (\theta_1-\theta_2) \Vcal 
\eeq
(and we no longer need to assume $\Sigma_1\lesssim \Sigma_2$). 

This is good enough for classical field theory, but for a quantum field theory, $\Vcal$ will need to be a distribution, which is best thought of as a linear map,
\[
V:f \mapsto \int_Mf\Vcal 
\]
from $\Dt(M)$ to the algebra of observables. Unfortunately, $\theta_1-\theta_2$ is not smooth.

Note that because $\Sigma_i$ is a Cauchy surface, $\Supp \theta_i = J^+\Sigma_i$ is past compact and $\Supp (1-\theta_i) = J^-\Sigma_i$ is future compact. (See \cite{bar} for a discussion of these concepts.) This suggests  a smoothed-out analogue of  the set of Cauchy surfaces:
\begin{definition}
\label{Cauchy}
\[
\Theta(M) := \left\{\theta\in\Ci(M,\R)\mid \text{$\Supp\theta$ past compact, $\Supp(1-\theta)$ future compact}\right\} 
\]
\end{definition}

So, now let $\theta_1,\theta_2\in\Theta(M)$ and suppose that $\theta_1-\theta_2$ has compact support. Even if $\Vcal$ is distributional, we can define $H = V(\theta_1-\theta_2)$. The action of this on a classical observable, $a$, is the Poisson bracket,
\[
\Xi(a) =\{H,a\} = \left\{V(\theta_1-\theta_2),a\right\} \ .
\]

Note that if $\Supp f \sim \Supp a$, then $\left\{V(f),a\right\} = 0$, therefore
\beq
\label{H poisson}
\Xi(a) = \left\{V(\chi),a\right\} 
\eeq
for any test function $\chi\in\Dt(M)$ such that $\chi=\theta_1-\theta_2$ over $J(\Supp a)$.

This allows us to drop the assumption that $\theta_1-\theta_2$ has compact support. By construction, $\Supp(\theta_1-\theta_2)$ is future compact and past compact, so $\Supp(\theta_1-\theta_2) \cap J(\Supp a)$ is compact, and there exists $\chi\in\Dt(M)$ equal to $\theta_1-\theta_2$ on $J(\Supp a)$. Equation~\eqref{H poisson} can then be taken as the definition of $\Xi$.

This was for a classical field theory. For a quantum field theory, $\Xi$ should be a derivation of the algebra $\A(M)$ of quantum observables. In the classical limit, the commutator is approximately proportional to the Poisson bracket. This suggests that the quantum version of \eqref{H poisson} may be
\[
\Xi(a) = \tfrac{-i}{\hbar}[V(\chi),a] \ .
\]
This is indeed a derivation, and we shall see that it is the right answer.

\subsection{The character of an interaction}
Let $\Xc\subset \Loc$ be a small subcategory that is closed under pullbacks (i.e., intersections) and such that the inclusion of any causally complete open subset into $M\in\Obj(\Xc)$ is a morphism in $\Xc$. Let $\A$ be a functor from $\Xc$ to topological $\co$-algebras, satisfying Einstein causality.

In perturbative AQFT, the algebra $\A(M)$ is constructed from functionals, which have clearly defined support in $M$. However, I am trying to be more general here. Instead of defining the support of an observable, I have the following definitions for the subalgebra of observables supported on a given subset and for the subalgebra of compactly supported observables.
\begin{definition}
For any $M\in\Obj\Xc$ and $K\subset M$, let
\[
\A(M;K) := %\bigcap\{\Im\A(\iota) \mid \Im\iota\supset K\}
\bigcap\{\Im\A[\iota]\mid \forall \Or\in\Obj\Xc,\ \iota:\Or\to M,\text{ such that } K\subset \Im \iota \}
\]
and 
\[
\A_c(M):= \bigcup_{K\subset M\text{ compact}} \A(M;K)\ .
\]
For $\phi:M\to N$, $\A_c[\phi]$ is the restriction of $\A[\phi]$ to $\A_c(M)$.
\end{definition}
\begin{lem}
$\A_c:\Xc\to\Alg$ is a functor.
\end{lem}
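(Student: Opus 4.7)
The plan is to verify that $\A_c(M)$ is a subalgebra of $\A(M)$, that $\A[\phi]$ restricts to a well-defined homomorphism $\A_c(M) \to \A_c(N)$, and finally that the functoriality axioms descend from $\A$.

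First I would establish monotonicity: if $K_1 \subseteq K_2 \subset M$, then $\A(M;K_1) \subseteq \A(M;K_2)$. This is because the intersection defining $\A(M;K_2)$ runs over the smaller family of morphisms $\iota:\Or\to M$ satisfying the stronger condition $K_2 \subset \Im\iota$, so the intersection over a smaller set is larger. Since each $\A(M;K)$ is an intersection of images of algebra homomorphisms $\A[\iota]$, it is a subalgebra of $\A(M)$. For $a\in \A(M;K_1)$ and $b\in \A(M;K_2)$ with $K_1,K_2$ compact, monotonicity places both in $\A(M;K_1\cup K_2)$, which is a subalgebra and indexed by the compact set $K_1\cup K_2$. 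Hence $a+b$, $ab$, and $\lambda a$ all lie in $\A_c(M)$, so $\A_c(M)$ is a subalgebra.

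The central step is to show that for $\phi:M\to N$ and compact $K\subset M$, $\A[\phi]$ maps $\A(M;K)$ into $\A(N;\phi(K))$; note that $\phi(K)$ is compact in $N$. Let $\iota:\Or\to N$ be any $\Xc$-morphism with $\phi(K)\subset \Im\iota$. Using that $\Xc$ is closed under pullbacks (i.e.\ intersections), form the pullback of $\phi$ and $\iota$, yielding $\iota':\Or'\to M$ and $\phi':\Or'\to \Or$ in $\Xc$ with $\phi\circ\iota' = \iota\circ\phi'$. Because $\Loc$ morphisms are open embeddings, $\Im\iota' = \phi^{-1}(\Im\iota) \supseteq K$, so any $a\in\A(M;K)$ can be written as $a = \A[\iota'](a')$ for some $a'\in\A(\Or')$. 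Functoriality of $\A$ then gives
\[
\A[\phi](a) = \A[\phi\circ\iota'](a') = \A[\iota\circ\phi'](a') = \A[\iota]\bigl(\A[\phi'](a')\bigr) \in \Im\A[\iota].
\]
Since $\iota$ was arbitrary, $\A[\phi](a)\in \A(N;\phi(K)) \subseteq \A_c(N)$. Thus $\A_c[\phi]$ is well-defined, and it is an algebra homomorphism because it is the restriction of one.

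The functoriality axioms then come for free: $\A[\id_M]=\id_{\A(M)}$ restricts to $\id_{\A_c(M)}$, and for composable $\phi,\psi$, the identity $\A[\psi\circ\phi]=\A[\psi]\circ\A[\phi]$ restricts to $\A_c[\psi\circ\phi] = \A_c[\psi]\circ\A_c[\phi]$ on $\A_c(M)$. The only nontrivial point is the pullback argument above, so I would expect that step to be the main technical content; the closure hypothesis on $\Xc$ is what makes the argument go through.
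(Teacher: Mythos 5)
Your proposal is correct and follows essentially the same route as the paper: the key step in both is to pull back an arbitrary $\iota:\Or\to N$ covering $\phi(K)$ along $\phi$ to get a morphism into $M$ covering $K$, and then use functoriality of $\A$ to conclude $\A(\phi;a)\in\Im\A[\iota]$, hence $\A[\phi]\bigl(\A(M;K)\bigr)\subseteq\A(N;\phi(K))$. Your additional verifications (monotonicity of $K\mapsto\A(M;K)$, that $\A_c(M)$ is a subalgebra, and that the functor axioms restrict) are details the paper leaves implicit but are handled correctly.
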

\begin{proof}
Any $a\in\A_c(M)$ is in $\A(M;K)$ for some compact $K\subset M$. Consider some $\phi:M\to N$. For $\iota_1:\Or_1\to N$, let $\iota_2:\Or_2\to M$ be the pullback of $\iota_1$ by $\phi:M\to N$. If  $\phi(K)\subset\Im\iota_1$, then $K\subset\Im\iota_2$ so $a\in\Im\A[\iota_2]$ and 
\[
\A(\phi;a) \in \Im\A[\phi\circ\iota_2] \subseteq \Im\A[\iota_1] \ .
\]
Therefore,
\[
\A(\phi;a) \in \A(N;\phi(K)) \subset \A_c(N) \ .
\]
\end{proof}

Let $\Int:\Dt\naturalto\A$ be a linear natural transformation. The idea is to use this as an interaction term for a Lagrangian, $L +\lambda V$, although this  works even if $\A$ is not given by a Lagrangian, and it certainly doesn't need to be free. In perturbative AQFT, it is normally only assumed that $\Int$ is additive, rather than linear; however, we are only interested in the first order effect of this interaction, and only the linear part of $V$ will be relevant.

In order to choose a specific cocycle in the character of the interaction $\Int$, we need a choice of smoothed-out Cauchy surface on each spacetime, so following Definition~\ref{Cauchy} define:
\begin{definition}
\[
\Theta(\Xc) := \prod_{M\in\Obj\Xc}\Theta(M) \ .
\]
\end{definition}

\begin{lem}
For $V:\Dt\naturalto\A$, $M\in\Obj(\Xc)$, and $f\in\Dt(M)$,
\[
\Int_M(f) \in \A(M;\Supp f)
\]
\end{lem}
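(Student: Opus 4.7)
The plan is to unwind the definition of $\A(M;\Supp f)$ and exhibit $V_M(f)$ in $\Im\A[\iota]$ for every admissible $\iota$ by using naturality of $V$.

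First I would fix an arbitrary $\iota:\Or\to M$ in $\Xc$ with $\Supp f\subseteq\Im\iota$ and try to write $f$ as $\Dt[\iota](g)=\iota_* g$ for some $g\in\Dt(\Or)$. The natural candidate is $g:=\iota^* f = f\circ\iota$. Since any $\Loc$ morphism is an open isometric embedding, $\iota$ is a diffeomorphism onto $\Im\iota$, so $g$ is smooth and $\Supp g = \iota^{-1}(\Supp f)$ is compact; thus $g\in\Dt(\Or)$. The defining property $\iota^*\iota_* g = g$ together with $\Supp(\iota_* g)\subseteq\Im\iota$ and $\Supp f\subseteq\Im\iota$ then forces $\iota_* g = f$, because both sides are smooth functions on $M$ supported in $\Im\iota$ that restrict to the same function under $\iota^*$.

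Next I would invoke the naturality of $V:\Dt\naturalto\A$, which gives the commutative square $V_M\circ\Dt[\iota] = \A[\iota]\circ V_\Or$. Applying this to $g$ yields
\[
V_M(f)=V_M(\iota_* g)=V_M(\Dt[\iota](g)) = \A[\iota](V_\Or(g)) \in \Im\A[\iota].
\]
Since $\iota$ was an arbitrary morphism in $\Xc$ whose image contains $\Supp f$, intersecting over all such $\iota$ gives $V_M(f)\in\A(M;\Supp f)$, as required.

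I do not anticipate a serious obstacle here; the only slightly delicate point is verifying that $\iota^* f$ really is a test function on $\Or$ and that $\iota_*(\iota^* f)=f$, which rests on the fact that $\Loc$ morphisms are open isometric embeddings together with the containment $\Supp f\subseteq\Im\iota$. Everything else is a direct unfolding of the definitions of $\A(M;K)$, $\Dt[\iota]$, and naturality.
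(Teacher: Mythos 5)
Your argument is correct and is essentially identical to the paper's proof: both write $f=\iota_*\iota^*f$ for any admissible $\iota:\Or\to M$ and apply naturality of $\Int$ to conclude $\Int_M(f)=\A(\iota;\Int_\Or(\iota^*f))\in\Im\A[\iota]$. You simply spell out the routine verifications that $\iota^*f\in\Dt(\Or)$ and that $\iota_*\iota^*f=f$, which the paper takes for granted.
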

\begin{proof}
Consider any $\iota:\Or\to M$  such that $\Supp f\subset \Im\iota$, and note that $f=\iota_*\iota^*f$. By naturality of $\Int$,
\[
\Int_M(f) = \Int_M(\iota_*\iota^*f) = \A(\iota;\Int_\Or(\iota^*f)) \in \Im\A[\iota] \ . %\qedhere
\]
Because this holds for any such $\iota$, the result follows.
\end{proof}

\begin{lem}
If $K_1\sim K_2\subset M$ are spacelike separated and compact, $a\in \A(M;K_1)$, and $b\in\A(M;K_2)$, then $ab=ba$.
\end{lem}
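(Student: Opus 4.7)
The plan is to realize $a$ and $b$ as lying in $\Im \A[\iota_1]$ and $\Im \A[\iota_2]$ for two morphisms of $\Xc$ whose images are spacelike separated, and then invoke Einstein causality. It therefore suffices to produce causally complete open subsets $\Or_1, \Or_2 \subset M$ with $K_i \subset \Or_i$ and $\Or_1 \sim \Or_2$. Given such $\Or_i$, the standing hypothesis on $\Xc$ makes the inclusions $\iota_i : \Or_i \to M$ morphisms in $\Xc$; since $K_i \subset \Im \iota_i = \Or_i$, the defining intersection in $\A(M;K_i)$ forces $a \in \Im \A[\iota_1]$ and $b \in \Im \A[\iota_2]$; Einstein causality then yields $ab = ba$.

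The geometric step is the construction of $\Or_1$ and $\Or_2$. Global hyperbolicity of $M$ implies that the causal relation is closed in $M \times M$, hence the spacelike relation is open. Combined with compactness of $K_1 \times K_2$, this produces relatively compact open neighborhoods $V_i \supset K_i$ with $\overline{V_1} \sim \overline{V_2}$. One then refines the $V_i$ to causally complete open neighborhoods $\Or_i \subset V_i$ still containing $K_i$, for example by exploiting that a globally hyperbolic manifold admits a basis of causally convex open subregions and, at a sufficiently small scale, arranging these to be closed under double causal complement (using the identity $A''' = A'$). Because $\Or_i \subset V_i$, the separation $\overline{V_1} \sim \overline{V_2}$ descends to $\Or_1 \sim \Or_2$.

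The main obstacle is precisely this geometric refinement: producing neighborhoods that are simultaneously open, causally complete, and spacelike separated from each other. Once the neighborhoods are in hand, the algebraic step of translating containment in $\Im \A[\iota_i]$ into a commutation relation via Einstein causality is immediate.
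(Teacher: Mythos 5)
Your algebraic skeleton is exactly the paper's: produce spacelike separated, causally complete open neighborhoods $\Or_i\supseteq K_i$ whose inclusions lie in $\Xc$, observe that the defining intersection in $\A(M;K_i)$ puts $a$ and $b$ in $\Im\A[\iota_1]$ and $\Im\A[\iota_2]$, and invoke Einstein causality. That part is fine. The gap is in the geometric step you yourself flag as the main obstacle, and as you have set it up it does not go through. You want causally complete open sets $\Or_i$ with $K_i\subseteq\Or_i\subseteq V_i$, so that $\Or_1\sim\Or_2$ follows from $V_1\sim V_2$. But a ``basis of causally convex open subregions'' does not deliver this: causal convexity is weaker than causal completeness, a finite union of small causally complete sets covering the compact $K_i$ is in general no longer causally complete, and the natural way to force causal completeness --- replacing a set by its double causal complement $W''$ --- enlarges it in the timelike directions, destroying the containment $\Or_i\subseteq V_i$ on which your final ``descends to $\Or_1\sim\Or_2$'' rests. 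There is also an unaddressed openness issue: $W''=M\smallsetminus J(W')$ need not be open unless one is taking the causal complement of a compact set.

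The repair (and what the paper actually does) is to give up the symmetric containment $\Or_i\subseteq V_i$ and use the identity $A'''=A'$ directly. Only one neighborhood needs to be small: take a relatively compact open neighborhood of $K_2$ inside the open set $K_1'$, let $\Or_2$ be its Cauchy development (a globally hyperbolic neighborhood of $K_2$ still spacelike to $K_1$, with compact closure), and then set $\Or_1:=\overline{\Or_2}\,'$. Since $\overline{\Or_2}$ is compact, $J(\overline{\Or_2})$ is closed and $\Or_1$ is open; since $\Or_1$ is a causal complement, $\Or_1''=\overline{\Or_2}\,'''=\overline{\Or_2}\,'=\Or_1$, so it is causally complete and its inclusion is in $\Xc$ by the standing hypothesis; it contains $K_1$ because $K_1\sim\Or_2$; and $\Or_1\sim\Or_2$ holds by construction rather than by containment in prescribed $V_i$. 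If you prefer to keep your symmetric setup, the same identity shows that spacelike separation survives double causal complementation ($W_1\sim W_2$ implies $W_1''\subseteq W_2'''=W_2'$), which is the correct substitute for the containment argument you used --- but you must still address openness, which again points to taking complements of compact closures.
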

\begin{proof}
Let $K_1'\subset M$ be the causal complement of $K_1$. This is an open neighborhood of $K_2$. 

Because $K_1'$ is locally compact, every point of $K_2$ has a relatively compact (i.e., with compact closure) open neighborhood. This gives a cover of $K_2$ by relatively compact open subsets of $K_1'$. By compactness of $K_2$, this has a finite subcover. The union is a relatively compact open neighborhood of $K_2$. Let $\Or_2\subset K_1'\subset M$ be the Cauchy development of this neighborhood. Let $\Or_1:=\overline{\Or_2}'\subset M$ be the causal complement of $\overline{\Or_2}$.

Write $\iota_i:\Or_i\to M$ for the inclusions. Because $\Or_1$ and $\Or_2$ are globally hyperbolic neighborhoods of $K_1$ and $K_2$, $a\in\Im\A[\iota_1]$ and $b\in\Im\A[\iota_2]$. Because these are spacelike separated, $a$ and $b$ commute.
\end{proof}

\begin{lem}
\label{Cutoff_independent}
For $\Int:\Dt\naturalto\A$ additive, $K\subset M$ compact, $a\in\A(M;K)$, and  $f,g\in\Dt(M)$, if  $K\sim \Supp(f-g)$, then
\[
[\Int_M(f),a]=[\Int_M(g),a]\ ,
\]
where the bracket is the commutator.
\end{lem}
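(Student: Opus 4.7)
The plan is to reduce the statement to the two previous lemmas (that $V_M(f)$ lies in $\A(M;\Supp f)$, and that elements supported in spacelike separated compact sets commute) by using additivity to cut $g$ into two pieces: one piece supported near $K$, which gets absorbed into $V(g)$, and another piece supported in a region spacelike to $K$, which will commute with $a$.

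Set $h := f-g$, so $\Supp h$ is compact and $\Supp h \sim K$. Since $K$ is compact and $\Supp h \subset M \smallsetminus J(K)$ (which is open), local compactness of $M$ gives a relatively compact open set $W \subset M \smallsetminus J(K)$ containing $\Supp h$; its closure $\overline{W}$ is compact and spacelike to $K$. First I would choose $\chi \in \Ci_c(M,\R)$ with $\Supp\chi \subset W$ and $\chi = 1$ on an open neighborhood $U$ of $\Supp h$. Then $\Supp h \subset U$, while $\Supp\bigl((1-\chi)g\bigr) \subseteq \Supp(1-\chi) \subset M \smallsetminus U$, so $\Supp\bigl((1-\chi)g\bigr) \cap \Supp h = \emptyset$.

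Now I would apply additivity of $V$ to the decomposition $f = g+h = (1-\chi)g + \chi g + h$, the relevant disjointness being $\Supp\bigl((1-\chi)g\bigr) \cap \Supp h = \emptyset$. This yields
\[
V_M(f) = V_M\bigl((1-\chi)g\bigr) + V_M(\chi g + h) - V_M(\chi g),
\]
and since $V_M(g) = V_M\bigl((1-\chi)g + \chi g\bigr) = V_M\bigl((1-\chi)g\bigr) + V_M(\chi g) - V_M(0)$, subtracting gives
\[
V_M(f) - V_M(g) = V_M(\chi g + h) - V_M(\chi g) - V_M(0)\text{-correction},
\]
up to the constant $V_M(0)$, which contributes nothing to the commutator with $a$. (A brief sanity check: one may take the additivity condition directly in the form stated, applied to the triple $\bigl((1-\chi)g,\,\chi g,\,h\bigr)$, and subtract to obtain the same conclusion cleanly.)

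Finally I would invoke the preceding two lemmas. Both $\chi g$ and $\chi g+h$ lie in $\Dt(M)$ with supports contained in $\overline{W}$, which is compact and spacelike to $K$. Hence $V_M(\chi g), V_M(\chi g + h) \in \A(M;\overline W)$, and by the commutation lemma they commute with $a \in \A(M;K)$. Therefore
\[
[V_M(f),a] - [V_M(g),a] = \bigl[V_M(\chi g + h) - V_M(\chi g),\,a\bigr] = 0,
\]
which is the claim. The only slightly delicate step is the existence of the set $W$ with $\overline W$ compact and spacelike to $K$; once this geometric separation is in hand, the rest is a bookkeeping application of additivity and the two prior lemmas.
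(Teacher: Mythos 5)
Your proof is correct and follows essentially the same route as the paper: the paper also applies additivity to a three-term decomposition $f = h + (g-h) + (f-g)$ with the middle/outer pieces arranged so that $\Supp h \cap \Supp(f-g)=\emptyset$, reducing the commutator difference to terms supported in a compact set spacelike to $K$, which then vanish by the two preceding lemmas. Your $(1-\chi)g$ is just an explicit bump-function construction of the paper's auxiliary function $h$ (and $\chi g$ plays the role of $g-h$), so the two arguments coincide.
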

\begin{proof}
$J(K)$ is closed, so $J(K)\cap \Supp f$ is compact. There exists another function $h\in\Dt(M\smallsetminus \Supp(f-g))$ that equals $f$ (and hence $g$) over $J(K)\cap \Supp f$. 

By definition, $\Supp h$ is disjoint from $\Supp (f-g)$, so additivity of $\Int_M$ gives
\begin{align*}
\Int_M(h+[g-h]+[f-g]) &= \Int_M(h+[g-h]) - \Int_M(g-h) \\ & \qquad+ \Int_M([g-h]+[f-g]) \\
\Int_M(f) &= \Int_M(g) - \Int_M(g-h) + \Int_M(f-h)\ .
\end{align*}
In particular, $\Supp (g-h)$ and $\Supp (f-h)$ are spacelike to $K$, so
\[
[\Int_M(f),a]-[\Int_M(g),a] = [\Int_M(f-h),a]-[\Int_M(g-h),a] = 0\ .\qedhere
\]
\end{proof}

\begin{figure}
\begin{tikzpicture}[x=1.00mm, y=1.00mm, inner xsep=0pt, inner ysep=0pt, outer xsep=0pt, outer ysep=0pt]
\path[line width=0mm] (-2.00,-81.85) rectangle +(124.00,84.20);
\definecolor{F}{rgb}{0.753,0.753,0.753}
\path[fill=F] (28.00,-15.00) .. controls (57.19,-15.00) and (64.33,-15.00) .. (93.00,-15.00) .. controls (100.00,-15.00) and (91.04,-57.12) .. (85.71,-57.88) .. controls (65.88,-60.71) and (52.28,-60.57) .. (34.44,-57.46) .. controls (30.37,-56.75) and (20.00,-15.00) .. (28.00,-15.00) -- cycle;
\definecolor{L}{rgb}{0,0,0}
\path[line width=0.30mm, draw=L, dash pattern=on 2.00mm off 1.00mm] (0.00,-10.00);
\path[line width=0.30mm, draw=L, dash pattern=on 2.00mm off 1.00mm] (0.00,-15.00) -- (120.00,-15.00);
\path[line width=0.30mm, draw=L, dash pattern=on 2.00mm off 1.00mm] (0.00,-25.00) -- (120.00,-25.00);
\path[line width=0.30mm, draw=L] (60.00,-40.00) ellipse (5.00mm and 5.00mm);
\path[line width=0.30mm, draw=L, dash pattern=on 0.30mm off 0.50mm] (52.00,-40.00) -- (12.98,0.12);
\path[line width=0.30mm, draw=L, dash pattern=on 0.30mm off 0.50mm] (52.00,-40.00) -- (12.04,-79.62);
\path[line width=0.30mm, draw=L, dash pattern=on 0.30mm off 0.50mm] (68.39,-40.34) -- (107.91,0.35);
\path[line width=0.30mm, draw=L, dash pattern=on 0.30mm off 0.50mm] (68.63,-40.34) -- (107.44,-79.85);
\draw(58.57,-41.74) node[anchor=base west]{$K$};
\draw(70.00,-70.00) node[anchor=base west]{$M$};
\draw(90.14,-4.56) node[anchor=base west]{$J_N(K)$};
\draw(1.52,-12.74) node[anchor=base west]{$\theta_N=1$};
\draw(1.52,-30.28) node[anchor=base west]{$\theta_N=0$};
\draw(21.00,-33.00) node[anchor=base west]{$\theta_M=1$};
\draw(34.02,-64.65) node[anchor=base west]{$\theta_M=0$};
\path[line width=0.30mm, draw=L] (0.00,-46.00) .. controls (15.55,-29.41) and (37.27,-20.00) .. (60.00,-20.00) .. controls (82.73,-20.00) and (104.45,-29.41) .. (120.00,-46.00);
\path[line width=0.30mm, draw=L] (0.00,-46.00) .. controls (15.55,-62.59) and (37.27,-72.00) .. (60.00,-72.00) .. controls (82.73,-72.00) and (104.45,-62.59) .. (120.00,-46.00);
\path[line width=0.30mm, draw=L, dash pattern=on 2.00mm off 1.00mm] (0.00,-46.00) .. controls (18.47,-36.15) and (39.07,-31.00) .. (60.00,-31.00) .. controls (80.93,-31.00) and (101.53,-36.15) .. (120.00,-46.00);
\path[line width=0.30mm, draw=L, dash pattern=on 2.00mm off 1.00mm] (0.00,-46.00) .. controls (18.36,-55.06) and (38.53,-59.85) .. (59.00,-60.00) .. controls (80.15,-60.15) and (101.04,-55.36) .. (120.00,-46.00);
\end{tikzpicture}%
\caption{\label{Inclusion}An example of the possible arrangement of subsets of $N$ in Lemma~\ref{Character}. The shaded region represents $\Supp\chi$.}
\end{figure}
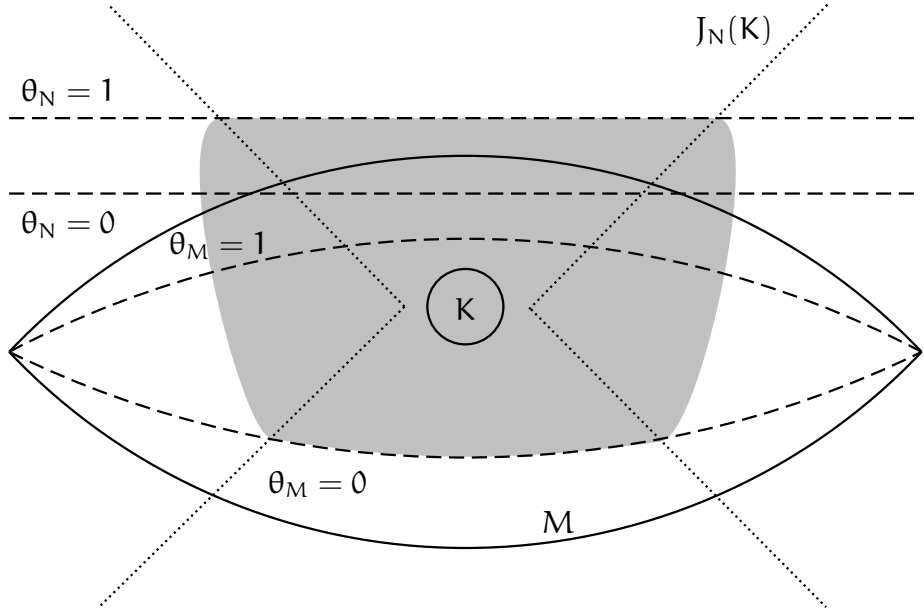
\begin{definition}
Given $\phi:M\to N$ and $\theta_M\in\Theta(M)$, define $\tilde \theta_M\in\Ci(J_N(M))$ by
\begin{align*}
\tilde\theta_M(\phi(x)) &= \theta_M(x) && x\in M\\
\tilde\theta_M(x)&= 0 && x\in J^-_N(M)\smallsetminus M\\
\tilde\theta_M(x)&= 1 && x\in J^+_N(M)\smallsetminus M\ .
\end{align*}
\end{definition}
To avoid clutter I am sometimes not writing $\phi$, as in $J^+_N(M)$ meaning $J^+_N(\phi M)$.
\begin{lem}
\label{Character}
For $\Int:\Dt\naturalto\A$ additive and any $\theta\in\Theta(\Xc)$,  there exists a unique $\Xi_{\Int,\theta}\in C^{1,1}(\A_c,\A_c)$ such that for $\phi:M\to N$, $K\subset M$ compact, and $a\in\A(M;K)$, if $\chi\in\Dt(N)$ such that $\chi=\tilde\theta_M-\theta_N$ on a neighborhood of $J_N(K)$ then $\Xi_{\Int,\theta}$ is given by the commutator
\[
\Xi_{\Int,\theta}(\phi;a) = \tfrac{-i}{\hbar}[\Int_N(\chi),\A(\phi;a)] \ .
\]
(See Fig.~\ref{Inclusion}.)\end{lem}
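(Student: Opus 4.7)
The plan is to establish three items in sequence: (1) existence of an admissible $\chi\in\Dt(N)$; (2) independence of the right-hand side from the choice of $\chi$; and (3) that the resulting formula genuinely defines an element of $C^{1,1}(\A_c,\A_c)$, after which uniqueness is automatic.

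For (1), I would first show that $\tilde\theta_M-\theta_N$ differs from $0$ on a compact subset of $J_N(K)$. On $J^+_N(K)$ rewrite $\tilde\theta_M-\theta_N=(1-\theta_N)-(1-\tilde\theta_M)$. The support of $1-\theta_N$ is future-compact in $N$, and its intersection with the future $J^+_N(K)$ of the compact set $\phi(K)$ is compact. The function $1-\tilde\theta_M$ vanishes on $J^+_N(M)\smallsetminus M$ and equals $1-\theta_M$ on $M$; since $\Supp(1-\theta_M)$ is future-compact in $M$ and $\Loc$ morphisms preserve causal relations, its intersection with $J^+_N(K)$ is again compact. A symmetric argument treats $J^-_N(K)$. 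Multiplying $\tilde\theta_M-\theta_N$ by a smooth bump function equal to $1$ on a neighborhood of $J_N(K)$ and vanishing outside a slightly larger neighborhood inside $J_N(M)$ produces the required $\chi\in\Dt(N)$.

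For (2), two admissible choices $\chi_1,\chi_2$ agree with $\tilde\theta_M-\theta_N$ on the intersection of their defining neighborhoods, which is again an open neighborhood $U$ of $J_N(K)$. Hence $\Supp(\chi_1-\chi_2)\subset N\smallsetminus U$ is spacelike to $\phi(K)$. Since the proof that $\A_c$ is a functor shows $\A(\phi;a)\in\A(N;\phi(K))$ and $\phi(K)$ is compact, Lemma~\ref{Cutoff_independent} applies with $K$ replaced by $\phi(K)$ and yields $[\Int_N(\chi_1),\A(\phi;a)]=[\Int_N(\chi_2),\A(\phi;a)]$, so $\Xi_{\Int,\theta}(\phi;a)$ is unambiguous.

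For (3), the commutator lies in $\A_c(N)$ because $\Int_N(\chi)\in\A(N;\Supp\chi)$ and $\A(\phi;a)\in\A(N;\phi(K))$ both belong to the subalgebra $\A(N;\Supp\chi\cup\phi(K))\subset\A_c(N)$. Linearity in $a$ follows by taking a single common $\chi$ associated to $K_1\cup K_2$ when $a_i\in\A(M;K_i)$, and the same device shows the definition is consistent when $a$ lies in $\A(M;K)$ for several compact $K$. Uniqueness then follows because the defining formula determines $\Xi_{\Int,\theta}(\phi;\cdot)$ on every $\A(M;K)$, and these exhaust $\A_c(M)$.

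The step I expect to be the main obstacle is (1), specifically the smoothness of $\tilde\theta_M$ at points of $\partial M\cap J_N(M)$, where the piecewise definition need not glue as a $C^\infty$ function for an arbitrary $\theta_M\in\Theta(M)$. The way around this is to arrange the neighborhood of $J_N(K)$ in (1) to meet $\partial M$ only in regions where $\theta_M$ has already stabilized to $0$ or $1$---feasible because $\Supp\theta_M$ is past-compact and $\Supp(1-\theta_M)$ is future-compact in $M$, so on the causal set determined by the compact $K$ the stabilization occurs off a compact subset---but it does require some care in the construction of the bump function.
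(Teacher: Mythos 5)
Your proposal is correct and follows essentially the same three steps as the paper's proof: existence of $\chi$ from the future/past compactness of $\Supp(\tilde\theta_M-\theta_N)$, independence of the choice via Lemma~\ref{Cutoff_independent} applied to $\A(\phi;a)\in\A(N;\phi(K))$, and linearity plus exhaustion of $\A_c(M)$ by the $\A(M;K)$. The only point where your sketch is looser than the paper is the compact support of $\chi$: an arbitrary ``slightly larger neighborhood'' of $J_N(K)$ need not meet $\Supp(\tilde\theta_M-\theta_N)$ in a compact set, and the paper fixes this by taking the neighborhood to be $J_N(\overline{\Or})$ for a relatively compact open $\Or\supset K$, so that $J^\pm(\overline{\Or})\cap\Supp(\tilde\theta_M-\theta_N)$ is compact by future/past compactness.
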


\begin{proof}
Let $\Or\subset M$ be a relatively compact neighborhood of $K$.
Observe that $\tilde\theta_M-\theta_N\in\Ci(J_N(M))$ has future and past compact support, so $J^\pm(\overline\Or)\cap\Supp(\tilde\theta_M-\theta_N)$ are compact, and therefore, $J(\overline\Or)\cap\Supp(\tilde\theta_M-\theta_N)$ is as well. Therefore there exists a function $\chi\in\Dt(N)$ that equals $\tilde\theta_M-\theta_N$ on $J_N(\overline\Or)\supset J_N(\Or)$. This satisfies the conditions in the hypothesis, so such functions do exist.

By the previous lemma, $\tfrac{-i}{\hbar}[\Int_N(\chi),\A(\phi;a)]$ is independent of the choice of such a function. This is clearly linear in $a\in\A(M;K)$, but any $a,b\in\A_c(M)$ are contained in $\A(M;K)$ for some $K$. Therefore this is a well defined linear map.
\end{proof}

\begin{lem}
\label{Closed}
If $\Int:\Dt\naturalto\A$ is linear and $\theta\in\Theta(\Xc)$, then $\delta\Xi_{\Int,\theta}=0$.
\end{lem}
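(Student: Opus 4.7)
The plan is to decompose $\delta=\dS+\dH$ and show separately that $\dH\Xi_{V,\theta}\in C^{1,2}(\A_c,\A_c)$ and $\dS\Xi_{V,\theta}\in C^{2,1}(\A_c,\A_c)$ both vanish. The Hochschild part will be a direct Leibniz computation, while the simplicial part reduces to Einstein causality via Lemma~\ref{Cutoff_independent}.

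For $\dH\Xi_{V,\theta}=0$, I fix $\phi:M\to N$ with $a\in\A(M;K_1)$ and $b\in\A(M;K_2)$; then $ab\in\A(M;K_1\cup K_2)$, and a single $\chi\in\Dt(N)$ with $\chi=\tilde\theta_M-\theta_N$ on a neighborhood of $J_N(\phi(K_1\cup K_2))$ satisfies the hypothesis of Lemma~\ref{Character} simultaneously for $a$, $b$, and $ab$. Since $[\Int(\chi),\,\cdot\,]$ is a derivation of $\A(N)$ and $\A[\phi]$ is a homomorphism, the Leibniz rule yields
\[
\Xi_{V,\theta}(\phi;ab)=\Xi_{V,\theta}(\phi;a)\,\A(\phi;b)+\A(\phi;a)\,\Xi_{V,\theta}(\phi;b),
\]
which is exactly $\dH\Xi_{V,\theta}(\phi;a,b)=0$ once the $(-1)^p$ sign at $p=1$ is accounted for.

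For $\dS\Xi_{V,\theta}=0$, I fix $M\xrightarrow{\phi}N\xrightarrow{\psi}P$ and $a\in\A(M;K)$, and choose $\chi_\phi\in\Dt(N)$, $\chi_\psi,\chi_{\psi\phi}\in\Dt(P)$ as in Lemma~\ref{Character}. Functoriality of $\A$ together with naturality and linearity of $\Int$ (giving $\A(\psi;\Int(\chi_\phi))=\Int(\psi_*\chi_\phi)$) collapse the three terms of $\dS\Xi_{V,\theta}(\psi,\phi;a)$ into a single commutator with $\A(\psi\phi;a)$:
\[
\dS\Xi_{V,\theta}(\psi,\phi;a)=\tfrac{-i}{\hbar}\bigl[\Int(f),\A(\psi\phi;a)\bigr],\qquad f:=\psi_*\chi_\phi-\chi_{\psi\phi}+\chi_\psi\in\Dt(P).
\]
It then suffices to show that $f$ vanishes on a neighborhood of $J_P(\psi\phi(K))$: then $\Supp f$ is spacelike to $\psi\phi(K)$, and Lemma~\ref{Cutoff_independent} (with the auxiliary function $g=0$) forces the commutator to be zero. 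On such a neighborhood the defining relations for $\chi_{\psi\phi}$ and $\chi_\psi$ reduce $f$ to $\psi_*\chi_\phi+\tilde\theta_N-\tilde\theta_M$ (the tildes now denoting the extensions to $P$). I verify pointwise vanishing by splitting $p\in J_P(\psi\phi(K))$ into three cases: (i) $p=\psi\phi(x)$ with $x\in M$, where everything unwinds to an expression in $\theta_M(x)$, $\theta_N(\phi(x))$, $\theta_P(p)$ that cancels manifestly; (ii) $p=\psi(y)$ with $y\in N\setminus\phi(M)$, where causal compatibility of $\psi$ forces $y\in J_N(\phi(K))$, so that $\chi_\phi(y)=\tilde\theta_M(y)-\theta_N(y)$ and a short case check matches the $N$-level tildes with their $P$-level counterparts; (iii) $p\notin\psi(N)$, where $\psi_*\chi_\phi(p)=0$ and the inclusion $\psi\phi(M)\subseteq\psi(N)$ forces $\tilde\theta_M(p)$ and $\tilde\theta_N(p)$ to take matching indicator values ($0$ in the past, $1$ in the future).

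The main obstacle is case (ii): it relies on morphisms in $\Loc$ not merely preserving but reflecting causal relations onto their image, so that $\psi(y)\lesssim\psi\phi(x)$ in $P$ implies $y\lesssim\phi(x)$ in $N$. This is the standard causal-compatibility property of Loc-morphisms in locally covariant QFT; without it, points $y\in N$ with $y\notin J_N(\phi(K))$ could still satisfy $\psi(y)\in J_P(\psi\phi(K))$, and I would lose control of $\chi_\phi(y)$ at such points.
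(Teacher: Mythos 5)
Your proof is correct and follows essentially the same route as the paper's: the Hochschild part is the derivation/Leibniz observation, and the simplicial part reduces, via linearity and naturality of $V$, to the same pointwise case analysis of $\tilde\theta_M$, $\tilde\theta_N$, $\theta_P$ over the regions of $P$. The only cosmetic difference is that the paper shows $\chi_\psi+\psi_*\chi_\phi$ is itself an admissible cutoff for $\psi\circ\phi$ and lets the three commutators cancel by linearity, whereas you keep three independent cutoffs and show the discrepancy $f$ is supported spacelike to $\psi\phi(K)$ before invoking Lemma~\ref{Cutoff_independent}; the causal-compatibility property of $\Loc$-morphisms that you flag in case (ii) is likewise used, implicitly, in the paper's own case analysis.
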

\begin{proof}
Firstly, $\dH\Xi_{\Int,\theta}$ measures whether $\Xi_{\Int,\theta}[\phi]:\A_c(M)\to\A_c(N)$ is a derivation.  By construction, it is an inner derivation on $\A(M;K)$ for any compact $K\subset M$. Since $\A_c(M)$ is the union of these algebras, this shows that $\dH\Xi_{\Int,\theta}=0$. 

Now consider two composable morphisms $P\stackrel{\psi}{\longleftarrow} N \stackrel{\phi}{\longleftarrow} M$, some compact $K\subset M$, and $a\in\A(M;K)$. 
Choose a relatively compact neighborhood $\Or\subset M$ of $K$, and functions $\chi_\psi\in\Dt(P)$ and $\chi_\phi\in\Dt(N)$ such that $\chi_\psi = \tilde\theta_N-\theta_P$ on $J_P(\Or)$ and $\chi_\phi=\tilde\theta_M-\theta_N$ on $J_N(\Or)$.
%Choose $\chi_\psi\in\Dt(P)$ and $\chi_\phi\in\Dt(N)$ that satisfy the hypotheses of Lemma~\ref{Character} for computing $\Xi_{\Int,\theta}(\psi;\A(\phi;a))$ and $\Xi_{\Int,\theta}(\phi;a)$. 

Note that $\chi_{\psi\circ\phi} := \chi_\psi + \psi_*\chi_\phi$ satisfies
\begin{align*}
x\in J_M(\Or) &\implies & \chi_{\psi\circ\phi}(\psi\circ\phi(x)) &= \theta_N(\phi(x))  -\theta_P(\psi\circ\phi(x)) \\ &&&\qquad+\theta_M(x)- \theta_N(\phi(x))\\ &&&= \theta_M(x)-\theta_P(\psi\circ\phi(x)) 
\end{align*}
\begin{align*}
x \in J_N^-(\Or)\smallsetminus \Or &\implies & \chi_{\psi\circ\phi}(\psi(x)) &= \chi_\psi(\phi(x)) + \chi_\phi(x)\\ &&&= \theta_N(\phi(x)) - \theta_P(\psi\circ\phi(x)) - \theta_N(\phi(x)) \\ &&&= -\theta_P(\psi\circ\phi(x)) 
\end{align*}
\begin{align*}
x \in J_N^+(\Or)\smallsetminus \Or &\implies & \chi_{\psi\circ\phi}(\psi(x)) &= \theta_N(\phi(x))-\theta_P(\psi\circ\phi(x)) \\ &&&\qquad+1- \theta_N(\phi(x))   \\ &&&= 1-\theta_P(\psi\circ\phi(x))  
\end{align*}
\begin{align*}
x \in J_M^-(\Or)\smallsetminus N &\implies& \chi_{\psi\circ\phi}(x) &= \chi_\psi(x) = -\theta_P(x) 
\end{align*}
\begin{align*}
x \in J_M^+(\Or)\smallsetminus N &\implies &\chi_{\psi\circ\phi}(x) &= \chi_\psi(x) = 1-\theta_P(x)\ .
\end{align*}
So, $\chi_\psi$, $\chi_\phi$, and $\chi_{\psi\circ\phi}$ satisfy the hypotheses of Lemma~\ref{Character} for computing $\Xi_{\Int,\theta}(\psi;\A(\phi;a))$, $\Xi_{\Int,\theta}(\phi;a)$, and $\Xi_{\Int,\theta}(\psi\circ\phi;a)$.
Using the linearity and naturality of $\Int$, this gives
\[
\Int_P(\chi_{\psi\circ\phi}) = \Int_P(\chi_\psi) + \Int_P(\psi_*\chi_\phi) = \Int_P(\chi_\psi) +\A(\psi;\Int_N(\chi_\phi))\ .
\]
Therefore
\begin{multline*}
\dS\Xi_{\Int,\theta}(\psi,\phi;a) =\\ \A(\psi;\tfrac{-i}\hbar[\Int_N(\chi_\phi),\A(\phi;a)]) - \tfrac{-i}\hbar[\Int_P(\chi_{\psi\circ\phi}),\A(\psi\circ\phi;a)]+ \tfrac{-i}\hbar[\Int_P(\chi_{\psi}),\A(\psi\circ\phi;a)]  \\ = 0\ .
\end{multline*}
\end{proof}

\begin{lem}
\label{Cohomologous}
For $\Int:\Dt\naturalto\A$ linear and $\theta,\theta'\in\Theta(\Xc)$,
there exists $\Lambda_{\Int,\theta'-\theta}\in C^{0,1}(\A_c,\A_c)$ such that for any $M\in\Obj\Xc$, $K\subset M$ compact, $a\in\A(M;K)$, and $\xi\in\Ci(M)$ satisfying $\xi(x)=\theta'_M(x)-\theta_M(x)$ for $x\in J(K)$, we have
\[
\Lambda_{\Int,\theta'-\theta}(M;a) = \tfrac{-i}\hbar[\Int_M(\xi),a]\ ,
\]
and $\Xi_{\Int,\theta'} = \Xi_{\Int,\theta} + \delta\Lambda_{\Int,\theta'-\theta}$.
\end{lem}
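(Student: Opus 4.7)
The plan is to parallel the construction of $\Xi_{\Int,\theta}$ in Lemma~\ref{Character}. Since $\theta_M,\theta'_M\in\Theta(M)$, the difference $\theta'_M-\theta_M$ has support contained in the past compact set $\Supp\theta_M\cup\Supp\theta'_M$ and in the future compact set $\Supp(1-\theta_M)\cup\Supp(1-\theta'_M)$, so $J(K)\cap\Supp(\theta'_M-\theta_M)$ is compact for every compact $K\subset M$. Hence a cutoff $\xi\in\Dt(M)$ equal to $\theta'_M-\theta_M$ on a neighborhood of $J(K)$ exists. I would set $\Lambda_{\Int,\theta'-\theta}(M;a):=\tfrac{-i}\hbar[\Int_M(\xi),a]$ for $a\in\A(M;K)$ and apply Lemma~\ref{Cutoff_independent} to see that this is independent of the choice of $\xi$, then extend linearly to obtain a well-defined element of $C^{0,1}(\A_c,\A_c)$.

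Next, decompose $\delta=\dS+\dH$. The Hochschild component $\dH\Lambda_{\Int,\theta'-\theta}$ vanishes automatically: given $a\in\A(M;K_1)$ and $b\in\A(M;K_2)$, we have $a,b,ab\in\A(M;K_1\cup K_2)$, so a single cutoff $\xi$ computes all three values, and $[\Int_M(\xi),\cdot]$ is manifestly a derivation. For the simplicial component, fix $\phi:M\to N$, compact $K\subset M$, and $a\in\A(M;K)$. Choose cutoffs $\xi_M\in\Dt(M)$ adapted to $K$ and $\xi_N\in\Dt(N)$ adapted to $\phi(K)$ for $\Lambda_{\Int,\theta'-\theta}$, and cutoffs $\chi,\chi'\in\Dt(N)$ adapted to $\phi$ for $\Xi_{\Int,\theta}$ and $\Xi_{\Int,\theta'}$ as in Lemma~\ref{Character}. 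Using naturality and linearity of $\Int$,
\[
(\dS\Lambda_{\Int,\theta'-\theta})(\phi;a)=\tfrac{-i}\hbar\bigl[\Int_N(\phi_*\xi_M-\xi_N),\A(\phi;a)\bigr],
\]
while
\[
(\Xi_{\Int,\theta'}-\Xi_{\Int,\theta})(\phi;a)=\tfrac{-i}\hbar\bigl[\Int_N(\chi'-\chi),\A(\phi;a)\bigr].
\]
By Lemma~\ref{Cutoff_independent}, the two commutators coincide provided $\phi_*\xi_M-\xi_N$ and $\chi'-\chi$ agree on a neighborhood of $J_N(\phi(K))$.

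The main (though essentially routine) obstacle is this final comparison, which I would handle by the same kind of case analysis as in the proof of Lemma~\ref{Closed}. Write $\chi'-\chi=(\tilde\theta'_M-\tilde\theta_M)-(\theta'_N-\theta_N)$. On $J_N(\phi(K))\cap\phi(M)$, the defining relation $\tilde\theta_M\circ\phi=\theta_M$ gives $\tilde\theta'_M-\tilde\theta_M=\phi_*(\theta'_M-\theta_M)$, and by construction $\phi_*\xi_M$ equals $\phi_*(\theta'_M-\theta_M)$ there, so both expressions equal $\phi_*(\theta'_M-\theta_M)-(\theta'_N-\theta_N)$. On $J^-_N(\phi(K))\smallsetminus\phi(M)$ the functions $\tilde\theta_M$ and $\tilde\theta'_M$ are both $0$ (resp.\ both $1$ on $J^+_N(\phi(K))\smallsetminus\phi(M)$), so $\tilde\theta'_M-\tilde\theta_M=0$, while $\phi_*\xi_M$ vanishes there since its support lies in $\phi(M)$; hence both expressions reduce to $-(\theta'_N-\theta_N)$. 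This verifies the required agreement and yields $\Xi_{\Int,\theta'}=\Xi_{\Int,\theta}+\delta\Lambda_{\Int,\theta'-\theta}$.
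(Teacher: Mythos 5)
Your proposal is correct and follows essentially the same route as the paper: the same construction of $\Lambda_{\Int,\theta'-\theta}$ via Lemma~\ref{Cutoff_independent}, the same derivation argument for $\dH\Lambda_{\Int,\theta'-\theta}=0$, and the same three-region case analysis ($J_M(K)$ and $J^\pm_N(K)\smallsetminus M$) for the simplicial part. The only cosmetic difference is that the paper verifies directly that $\chi':=\chi+\phi_*\xi_M-\xi_N$ is an admissible cutoff for $\Xi_{\Int,\theta'}(\phi;a)$, whereas you compare an arbitrary admissible $\chi'$ against that combination and appeal once more to Lemma~\ref{Cutoff_independent}; the underlying computation is identical.
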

\begin{proof}
Clearly, $\Supp (\theta'_M-\theta_M)$ is future and past compact, so $J(K)\cap\Supp(\theta'_M-\theta_M)$ is compact, and there exists a function $\xi\in\Dt(M)$ that equals $\theta'_M-\theta_M$ on this subset. By Lemma \ref{Cutoff_independent}, $\Lambda_{\Int,\theta'-\theta}(M;a)$ is independent of the choice of $\xi$, so $\Lambda_{\Int,\theta'-\theta}\in C^{0,1}(\A_c,\A_c)$ is well-defined.

By construction, $\Lambda_{\Int,\theta'-\theta}[M]$ is a derivation on any $\A(M;K)$, so $\dH\Lambda_{\Int,\theta'-\theta}=0$.

Now, let $\phi:M\to N$ and choose $\chi\in\Dt(N)$, $\xi_M\in\Dt(M)$, and $\xi_N\in\Dt(N)$ suitable to compute $\Xi_{\Int,\theta}(\phi;a)$, $\Lambda_{\Int,\theta'-\theta}(M;a)$, and $\Lambda_{\Int,\theta'-\theta}(N;\A(\phi;a))$, respectively. If we define $\chi':=\chi+\phi_*\xi_M-\xi_N$, then 
\begin{align*}
x\in J_M(K) &\implies& \chi'(\phi(x)) &= \theta_M(x)-\theta_N(\phi(x)) + \xi_M(x)-\xi_N(\phi(x))\\ &&&= \theta'_M(x)-\theta'_N(\phi(x))\\
x\in J^-_N(K)\smallsetminus M &\implies& \chi'(x) &= -\theta_N(x)-\xi_N(x) = -\theta'_N(x)\\
x\in J^+_N(K)\smallsetminus M &\implies& \chi'(x) &= 1-\theta_N(x)-\xi_N(x) = 1-\theta'_N(x)
\end{align*}
therefore (using linearity of $\Int$)
\begin{align*}
\Xi_{\Int,\theta'}(\phi;a) &= \tfrac{-i}{\hbar}[\Int_N(\chi'),\A(\phi;a)]\\
&= \tfrac{-i}{\hbar}[\Int_N(\chi),\A(\phi;a)] + \tfrac{-i}{\hbar}[\Int_N(\xi_N),\A(\phi;a)] - \tfrac{-i}{\hbar}[\Int_N(\phi_*\xi_M),\A(\phi;a)] \\
& = \Xi_{\Int,\theta}(\phi;a) + \A(\phi;\Lambda_{\Int,\theta'-\theta}(M;a))-\Lambda_{\Int,\theta'-\theta}(N;\A(\phi;a))\\
&= \Xi_{\Int,\theta}(\phi;a) +\dS\Lambda_{\Int,\theta'-\theta}(\phi;a)\ . \qedhere
\end{align*}
\end{proof}

Putting these results together proves:
\begin{thm}
\label{Characteristic class}
If $\Int:\Dt\naturalto\A$ is linear, then it defines a cohomology class $\Xi_\Int\in H^2_a(\A_c,\A_c)$, which is the class of $\Xi_{\Int,\theta}\in C^{1,1}(\A_c,\A_c)$ for any $\theta\in\Theta(\Xc)$.
\end{thm}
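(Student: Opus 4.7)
The plan is essentially an assembly argument: the three preceding lemmas do almost all of the real work, and the theorem is their natural combination. First I would fix a linear natural transformation $V:\Dt\naturalto\A$ and, for each $\theta\in\Theta(\Xc)$, invoke Lemma~\ref{Character} to obtain a well-defined element $\Xi_{V,\theta}\in C^{1,1}(\A_c,\A_c)$. Observe that $(p,q)=(1,1)$ has $q\geq 1$, so $\Xi_{V,\theta}$ lies in the asimplicial part $C_a^{1,1}(\A_c,\A_c)$ and hence in $\tot^2 C_a^{\bullet\bullet}(\A_c,\A_c)=C_a^2(\A_c,\A_c)$.

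Next I would apply Lemma~\ref{Closed} (which uses linearity of $V$) to get $\delta\Xi_{V,\theta}=0$, so $\Xi_{V,\theta}\in Z_a^2(\A_c,\A_c)$ and consequently determines a class $[\Xi_{V,\theta}]\in H_a^2(\A_c,\A_c)$. The only thing left is to verify that this class does not depend on the auxiliary choice of $\theta$. For $\theta,\theta'\in\Theta(\Xc)$, Lemma~\ref{Cohomologous} supplies $\Lambda_{V,\theta'-\theta}\in C^{0,1}(\A_c,\A_c)$ with $\Xi_{V,\theta'}-\Xi_{V,\theta}=\delta\Lambda_{V,\theta'-\theta}$. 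Since $C^{0,1}\subset C_a^1(\A_c,\A_c)$, the coboundary is taken inside the asimplicial subcomplex, so $[\Xi_{V,\theta'}]=[\Xi_{V,\theta}]$ in $H_a^2(\A_c,\A_c)$. Defining $\Xi_V$ to be this common class completes the proof.

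The main obstacle is conceptual rather than technical: one must check carefully that everything takes place in the asimplicial bicomplex, i.e., that neither the representing cocycle nor the cobounding cochain acquires a component in bidegree $(p,0)$. This is immediate from bookkeeping once one notes that $\Xi_{V,\theta}$ sits in bidegree $(1,1)$ and $\Lambda_{V,\theta'-\theta}$ in bidegree $(0,1)$, both of which satisfy $q\geq 1$. Everything else has been reduced to the three lemmas, and no further calculation is required.
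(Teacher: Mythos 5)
Your proposal is correct and is essentially the paper's own argument: the theorem is stated immediately after the phrase ``Putting these results together proves,'' and the intended proof is exactly the assembly of Lemma~\ref{Character} (well-definedness of $\Xi_{\Int,\theta}$), Lemma~\ref{Closed} (closedness), and Lemma~\ref{Cohomologous} (independence of $\theta$ up to the coboundary of $\Lambda_{\Int,\theta'-\theta}$), all taking place in the asimplicial complex since the relevant bidegrees have $q\geq 1$.
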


It is not completely clear whether linearity of $\Int$ is a necessary assumption, as only additivity was needed in Lemma~\ref{Character}. 

A test function $f\in\Dt(M)$ in $\Int_M(f)$ serves as an infrared cutoff of the interaction. It can be thought of as varying the coupling constant over $M$, and thus $\lambda V_M(f)$ should  really be $\Int_M(\lambda f)$. The characteristic class $\Xi_\Int$ is supposed to describe the first order effect of the interaction. From this perspective, even if $\Int$ is nonlinear, then only the first order, linear part of $\Int$ should be used to construct $\Xi_\Int$.

I have tried to prove the results in this section as generally as possible, but this has led to constructing $\Xi_V\in H^2_a(\A_c,\A_c)$ rather than in $H^2_a(\A,\A)$. In order to fix this, we need to use some more specific category of topological algebras such that $\A_c(M)\subset\A(M)$ is dense and $\Xi_{\Int,\theta}$ extends uniquely and continuously. This is true in the setting of perturbative AQFT.

%So, Theorem~\ref{Character} shows that $\Int:\Dt\naturalto\A$ and $\theta\in\Theta(\Xc)$ defines a cochain $\Xi_{\Int,\theta}\in C^2_a(\A_c,\A_c)$, Theorem~\ref{Closed} shows that this is a cocycle, and Theorem~\ref{Cohomologous} shows that different choices of $\theta$ give cohomologous cocycles. In other words, $\Int$ defines a 

\section{Perturbative AQFT}
\label{Perturbative}
Now turn to the setting of perturbative algebraic quantum field theory. Let $\hAlg$  denote the category of $\hbar$-adically complete $\co[[\hbar]]$-algebras and  homomorphisms.

The $\hbar$-adic topology can be defined by a norm such as $\norm{a}=e^{-k}$, where $\hbar^k$ is the largest power of $\hbar$ dividing $a$. Completeness of $A\in\Obj\hAlg$ means that any power series in $\hbar$ with coefficients in $A$ converges to an element of $A$. 

Any $\co[[\hbar]]$-linear map between $\hbar$-adically complete modules is norm-contracting, and hence continuous. If $A$ is a dense $\co[[\hbar]]$-submodule of a complete module and $\hbar a\in A \implies a\in A$, then any linear map from $A$ to a complete module extends uniquely to a homomorphism defined on the closure of $A$.

Similarly, $\hlAlg$ denotes the category of complete $\co[[\hbar,\lambda]]$-algebras.

\begin{remark}
These should really also be $*$-algebras, but for simplicity, I am ignoring the involution here.
\end{remark}

As in the previous section, $\Xc\subset\Loc$ is a small subcategory, closed under pullbacks and inclusion of causally complete open sets.
Let $\A:\Xc\to \hAlg$ satisfy Einstein causality and $\Int:\Dt\naturalto\A$ be additive.

\begin{definition}
A \emph{time-ordered product} $\dt$ on $\A(M)$ is a commutative, associative product such that if $\Or_1\gtrsim \Or_2\subset M$, $a\in\A(M;\Or_1)$, and $b\in\A(M;\Or_2)$, then $a\dt b = b\dt a = ab$.
\end{definition}

Suppose that we have made a natural choice of time-ordered product $\dt$ on each $\A(M)$; naturality, in this case, means that each $\A[\phi]$ is a homomorphism under the time-ordered products. Let $\ET$ denote the $\dt$ exponential function. 
 
 The formal S-matrix $\ess_M:\Dt(M)\to \A(M)[[\hbar^{-1}\lambda]]$ is $\ess_M(f):= \ET\{\frac{i}{\hbar}\Int_M(\lambda f)\}$; this satisfies the \emph{causal factorization property} that if $f,g,h\in\Dt(M)$ with $\Supp f \gtrsim \Supp h$, then
\beq
\label{causal factorization}
\ess_M(f+g+h) = \ess_M(f+g)\ess_M(g)^{-1}\ess_M(g+h) \ .
\eeq
Naturality of $V$ and $\dt$ imply that $\ess$ is natural.
\begin{definition}
 The \emph{retarded \Moller\ operator} 
\beq
\label{moller}
R_f : \A(M) \to \A(M)[[\lambda]]
\eeq
 is defined by $R_f(a) := \ess_M(f)^{-1}\left[\ess_M(f)\dt a\right]$. 
 \end{definition}
 This is a linear map, but not a homomorphism.
\begin{remark}
The formal S-matrix includes negative powers of $\hbar$, so it is slightly surprising that $R_f$ does not. See \cite{df2001,hr2016}. 
\end{remark}

 \begin{remark}
 I am not concerned with the details of renormalization here, but these are needed in order to actually construct $\dt$ and $\Int$.
 \end{remark}
 
 \begin{remark}
 The interaction $V$ is not actually used directly to construct the interacting theory. We only need a natural time-ordered product and a natural formal S-matrix satisfying eq.~\eqref{causal factorization} and \eqref{moller}
\end{remark}

Note that, for $a\in\A(M;K)$, 
\begin{align*}
\Supp f \gtrsim K &\implies  R_f(a)=a\ ,\\
\Supp f  \lesssim K & \implies  R_f(a) = \ess_M(f)^{-1}a\,\ess_M(f)\ .
\end{align*}
Roughly speaking, the interaction to the future of $K$ is irrelevant and the interaction to the past of $K$ only gives a unitary transformation. 

If we heuristically imagine that $f$ is $\{0,1\}$-valued, and $f=1$ on the causal completion $K''$ of $K$, then  $R_f(a)$ uses the interacting theory to evolve $a$ back to the past boundary of $\Supp f$, where it is identified with an observable of the free theory. This is good enough to identify interacting observables supported in $K$ with free observables. 

The problem is that, as we consider larger subsets of $M$, we must adjust $f$ and change the identification with free observables. No single choice of $f$ works for all observables, and this is why the algebraic adiabatic limit is needed.

\subsection{Adiabatic limit}
The following is a formalization of the algebraic adiabatic limit construction \cite{df2004} as applied to constructing an interacting LCQFT.

\begin{definition}
Define a strict functor $\K:\Xc\to\Cat$ (the category of small categories) by:
\begin{itemize}
\item 
For $M\in\Obj\Xc$, $\Obj \K(M)$ is the set of pairs $(K,f)$ where $K\subset M$ is compact and $f\in\Dt(M)$ such that $f=1$ on $J^+(K)\cap J^-(K)$. 
\item
For $M\in\Obj\Xc$, $\Mor \K(M)$ is the set of inclusions of compact subsets (without any condition relating the test functions).
\item
For $\phi:M\to N$, the  functor ($\Cat$ morphism) $\K[\phi]:\K(M)\to\K(N)$, is defined by $\K[\phi] : (K,f)\mapsto (\phi(K),\phi_*f)$.
\end{itemize}
\end{definition}

\begin{definition}
Given $M\in\Obj\Xc$, define a functor $\A_\Int(M;-):\K(M)\to\hlAlg$ such that:
\begin{itemize}
\item 
On objects, $\A_\Int(M;K,f)\subset \A(M)[[\lambda]]$ is generated by the image $R_f(\A(M;K))$. 
\item
On morphisms, it is determined by the condition that $R:\A(M;-)\naturalto \A_\Int(M;-)$ be a natural linear transformation --- i.e., for any $(K_1,f_1),(K_2,f_2)\in \K(M)$ with $K_1\subseteq K_2$, 
%and $a\in\A(M;K)$, $R_{f_1}(a)$ maps to $R_{f_2}(a)$.
\[
\begin{tikzcd}
\A(M;K_1) \arrow{rrr}{\subseteq} \arrow{d}{R_{f_1}} &&& \A(M;K_2) \arrow{d}{R_{f_2}} \\
\A_\Int(M;K_1,f_1) \arrow{rrr}{\A_\Int(M;K_1,f_1,K_2,f_2)} &&& \A_\Int(M;K_2,f_2)
\end{tikzcd}
\]
\end{itemize}
is a commutative diagram.
\end{definition}

For any $\phi:M\to N$ and $(K,f)\in\K(M)$, we can consider the restriction of $\A[\phi]$ to $\A_\Int(M;K,f)\subset \A(M)[[\lambda]]$. To determine the codomain, observe that $\A[\phi]\circ R_f = R_{\phi_*f}\circ\A[\phi]$, so
\[
\A(\phi;\A_\Int(M;K,f)) \subseteq \A_\Int(N;\phi K,\phi_*f) = \A_\Int(M;\K[\phi](K,f)) \ .
\]
Now consider $(K_1,f_1,K_2,f_2)\in\Mor\K(M)$. Observe that, 
\begin{align*}
\A[\phi]\circ\A_\Int(M;K_1,f_1,K_2,f_2)\circ R_{f_1} &= \A[\phi]\circ R_{f_2} = R_{\phi_*f_2}\circ \A[\phi] \\
&= \A_\Int(N;\phi K_1,\phi_*f_1,\phi K_2,\phi_*f_2)\circ R_{\phi_*f_1}\circ\A[\phi] \\
&= \A_\Int(N;\phi K_1,\phi_*f_1,\phi K_2,\phi_*f_2) \circ\A[\phi] \circ R_{f_1} \ ,
\end{align*}
which implies that the restrictions of $\A[\phi]$ 
%the composition $\A[\phi]\circ \A_\Int(M;-) : \K(M)\to\hlAlg$ is defined. Naturality of $R$ implies that $\A[\phi]\circ \A_\Int(M;-) = \A_\Int(N;-) \circ\K[\phi]$. That is, composition with $\A[\phi]$ 
give a natural transformation 
\begin{equation}
\label{Al natural}
 \A_\Int(M;-) \naturalto  \A_\Int(N;-) \circ\K[\phi] \ .
%\begin{tikzcd}
%\K(M) \arrow{r}{\K[\phi]} \arrow{d}{\A_\Int(M;-)} & \K(N) \arrow{d}{\A_\Int(N;-)} \\
%\hlAlg \arrow{r}{\A[\phi]} & \hlAlg
%\end{tikzcd}
\end{equation}
%is a commutative diagram. (In other words, $\A_\Int(-\,;-) : \K\naturalto\A[[\lambda]]$ is a natural transformation.) 

The transformation \eqref{Al natural} and universality of the direct limit $\varinjlim\A_\Int(M;-)$ give a homomorphism
\beq
\label{precomposition}
\varinjlim\A_\Int(M;-) \to \varinjlim\left\{\A_\Int(N;-)\circ\K[\phi]\right\} \ .
\eeq
Universality of $\varinjlim\left\{ \A_\Int(N;-)\circ\K[\phi]\right\}$ gives a homomorphism
\beq
\label{limit hom}
\varinjlim\left\{ \A_\Int(N;-)\circ\K[\phi]\right\} \to \varinjlim\A_\Int(N;-) \ .
\eeq
\begin{definition}
Define a functor $\A_\Int:\Xc\to\hlAlg$ by:
\begin{itemize}
\item
For $M\in\Obj\Xc$,
\[
\A_\Int(M) := \varinjlim \A_\Int(M;-)\ .
\]
\item
For $\phi:M\to N$, $\A_\Int[\phi] : \A_\Int(M)\to\A_\Int(N)$ is the composition of the homomorphisms \eqref{precomposition} and \eqref{limit hom}.
 \end{itemize}
 \end{definition}
This $\A_\Int$ is the perturbative AQFT given by modifying $\A$ with  the interaction $\lambda V$.

\subsection{Modified construction}
\begin{definition}
For $f,g\in\D(M)$ and $a\in\A(M)$, 
\beq
\label{mmoller}
\mm_{f,g}(a) := \ess_M(f)^{-1}[\ess_M(f+g)\dt a]\ess_M(f+g)^{-1}\ess_M(f) \ .
\eeq
\end{definition}
\begin{remark}
If we heuristically imagine that $f$ and $f+g$ are $\{0,1\}$-valued, then $\mm_{f,g}(a)$ evolves $a$ forward to the future boundary of $\Supp(f+g)$ and then back to the past boundary of $\Supp f$, where it is identified with an observable of the free theory.

This gives a uniform way of identifying interacting observables with free observables, because $\mm_{f,g}(a)$ does not change if $f$ is changed in the future and $g$ is changed in the past.
\end{remark}
\begin{figure}
\begin{tikzpicture}[x=1.00mm, y=1.00mm, inner xsep=0pt, inner ysep=0pt, outer xsep=0pt, outer ysep=0pt]
\definecolor{F}{rgb}{0.753,0.753,0.753}
\path[fill=F] (40,-28) arc (-175:-5:20mm) -- cycle;
\definecolor{L}{rgb}{0,0,0}
\path[line width=0.30mm, draw=L, dash pattern=on 2.00mm off 1.00mm] (1,-28) -- (119,-28);
\path[line width=0.30mm, draw=L, dash pattern=on 2.00mm off 1.00mm] (1,-53) -- (119,-53);
\path[line width=0.30mm, draw=L] (60.00,-40.00) [rotate around={360:(60.00,-40.00)}] ellipse (5.5mm and 5.5mm);
\path[line width=0.30mm, draw=L, dash pattern=on 0.30mm off 0.50mm] (56,-44) -- (22,-10);
\path[line width=0.30mm, draw=L, dash pattern=on 0.50mm off 0.50mm] (40,-28) -- (6,-62);
\path[line width=0.30mm, draw=L, dash pattern=on 0.30mm off 0.50mm] (64,-44) -- (98,-10);
\path[line width=0.30mm, draw=L, dash pattern=on 0.50mm off 0.50mm] (80,-28) -- (114,-62);
\draw(58.57,-41) node[anchor=base west]{$K$};
\draw(82,-15) node[anchor=base west]{$J^+(K)$};
\draw(82,-61) node[anchor=base west]{$J^-(K\cup\Supp g)$};
\draw(54.5,-26.5) node[anchor=base west]{$\theta=1$};
\draw(54.5,-57) node[anchor=base west]{$\theta=0$};
\draw(51.5,-33) node[anchor=base west]{$g=1-\theta$};
\draw(54.5,-50) node[anchor=base west]{$f=\theta$};
\end{tikzpicture}
\caption{\label{Modified Moller fig}In Theorem~\ref{Modified Moller}, $g=1-\theta$ in $J^+K$, and $f=\theta$ in $J^-(K\cup\Supp g)$. The shaded region is $\Supp g$.}
\end{figure}
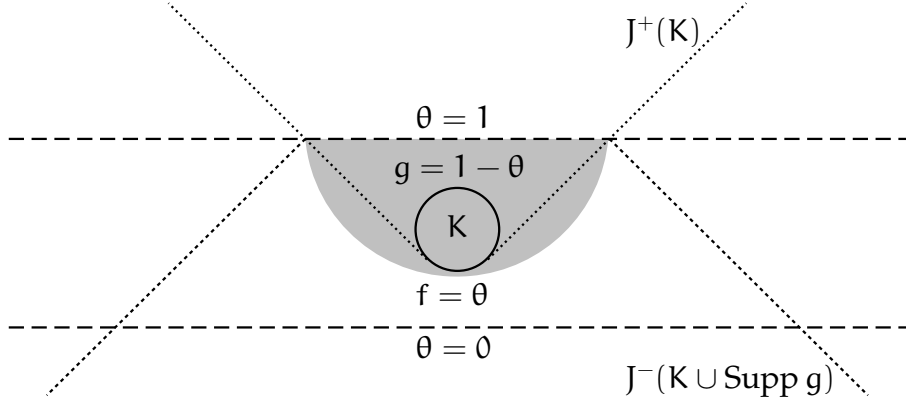
\begin{thm}
\label{Modified Moller}
Given $\theta\in\Theta(M)$, there exists a unique linear map (the modified \Moller\ operator) $\mm_\theta:\A(M)\to\A(M)[[\lambda]]$ such that if $K\subset M$ compact, $a\in\A(M;K)$, and $f,g\in\Dt(M)$ with $\Supp(g-1+\theta) \lesssim K$ and $\Supp(f-\theta)\gtrsim K\cup\Supp g$, then $\mm_\theta(a)=\mm_{f,g}(a)$.
\end{thm}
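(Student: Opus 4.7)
The statement has two parts: existence of admissible $(f,g)$ for each compact $K\subset M$, and the independence of $\mm_{f,g}(a)$ from the admissible choice when $a\in\A(M;K)$. Existence is routine: future-compactness of $\Supp(1-\theta)$ together with compactness of $K$ gives (in a globally hyperbolic spacetime) that $\Supp(1-\theta)\cap J^+(K)$ is compact, so multiplying $1-\theta$ by a smooth bump of compact support yields $g\in\Dt(M)$ with $g=1-\theta$ on a neighborhood of $J^+(K)$; past-compactness of $\Supp\theta$ and compactness of $K\cup\Supp g$ then give the corresponding $f\in\Dt(M)$ with $f=\theta$ on a neighborhood of $J^-(K\cup\Supp g)$.

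The uniqueness reduces to invariance of $\mm_{f,g}(a)$ under two elementary moves on admissible data: (i) $f\mapsto f+h$ with $\Supp h\gtrsim K\cup\Supp g$ for fixed $g$; and (ii) $g\mapsto g+k$ with $\Supp k\lesssim K$, possibly adjusting $f$ to remain admissible. Any two admissible pairs are connected by a chain of such moves. The core computation is (i): choose a smooth decomposition $f=f_-+f_+$ with $\Supp f_-\subset J^-(K\cup\Supp g)$ (so $f_-=\theta$ there) and $\Supp f_+\gtrsim K\cup\Supp g$. Then $\Supp f_+\gtrsim \Supp(f_-+g)$ and $\Supp f_+\gtrsim K=\Supp a$, so causal factorization of $\ess$ gives $\ess(f)=\ess(f_+)\ess(f_-)$, $\ess(f+g)=\ess(f_+)\ess(f_-+g)$, and $[\ess(f+g)\dt a]=\ess(f_+)\cdot[\ess(f_-+g)\dt a]$ (the last equality because time-ordered multiplication by a strictly-later element reduces to ordinary multiplication). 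Substituting into the definition of $\mm_{f,g}(a)$, the $\ess(f_+)$ factors cancel, yielding
\[
\mm_{f,g}(a)=\ess(f_-)^{-1}[\ess(f_-+g)\dt a]\ess(f_-+g)^{-1}\ess(f_-)=\mm_{f_-,g}(a),
\]
which depends only on $f|_{J^-(K\cup\Supp g)}=\theta|_{J^-(K\cup\Supp g)}$, hence is independent of the admissible $f$. Part (ii) is dual: apply causal factorization with $k$ as the earlier piece in $\ess(f+g+k)$ (and $\ess(f+k)$ when needed), noting $\Supp k\lesssim K$, and check that the resulting reorganizations cancel using Einstein causality for terms spacelike to $K$.

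The hard part is making the smooth decomposition $f=f_-+f_+$ precise, since $J^-(K\cup\Supp g)$ has a non-smooth boundary. One chooses a smooth partition of unity whose transition sits inside a thickening of $J^-(K\cup\Supp g)$ and verifies that the causal factorization hypotheses still hold on each piece; spacelike-to-$K$ contributions in the overlap commute with $a$ by Einstein causality and do not affect the final identity. A symmetric care is needed for the dual argument in (ii), where the adjustment of $f$ to accommodate a shifted $\Supp g$ must be controlled in the same localized way. Finally, the theorem's claim that $\mm_\theta$ is defined on all of $\A(M)$ (rather than just $\A_c(M)$) follows from $\hbar$-adic continuity, extending $\mm_\theta$ uniquely from the dense subalgebra.
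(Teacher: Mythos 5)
Your overall strategy --- existence from future/past compactness, uniqueness by reducing to a change of $f$ to the future of $K\cup\Supp g$ and a change of $g$ to the past of $K$, then extension by density --- is the same as the paper's, and the causal-factorization identities you write down are the right ones. The gap is in your execution of move (i). You reduce $\mm_{f,g}(a)$ to $\mm_{f_-,g}(a)$ for a decomposition $f=f_-+f_+$ with $\Supp f_-\subseteq J^-(K\cup\Supp g)$ and conclude that the result ``depends only on $f|_{J^-(K\cup\Supp g)}=\theta|_{J^-(K\cup\Supp g)}$.'' No smooth $f_-$ realizes this: the only function supported in $J^-(K\cup\Supp g)$ and equal to $\theta$ there is $\theta$ times the non-smooth indicator of that set, and any smooth $f_-$ produced by a partition of unity on a thickening is \emph{not} determined by $f|_{J^-(K\cup\Supp g)}$ --- it depends on the thickening and the cutoff. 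So after your reduction you face exactly the original problem, namely that two different admissible $f_-$'s must give the same answer, and the appeal to Einstein causality ``in the overlap'' does not close this, since the transition region is not spacelike to $K\cup\Supp g$ (it hugs the lightlike boundary of $J^-(K\cup\Supp g)$).

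The paper avoids the decomposition entirely by comparing two admissible choices directly. If $f$ and $f'$ are both admissible for the same $g$, then $f-f'=(f-\theta)-(f'-\theta)$ satisfies $\Supp(f-f')\gtrsim K\cup\Supp g$ automatically, so causal factorization applied to $f'+g=(f'-f)+f+g$ gives
\[
\ess_M(f'+g)=\ess_M(f')\ess_M(f)^{-1}\ess_M(f+g),\qquad
\ess_M(f'+g)\dt a=\ess_M(f')\ess_M(f)^{-1}[\ess_M(f+g)\dt a]\ ,
\]
and the factors $\ess_M(f')\ess_M(f)^{-1}$ cancel in the conjugation defining \eqref{mmoller}. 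These are precisely the identities you already derived, applied to $h=f'-f$ instead of to $f_+$; the ``hard part'' you flag is an artifact of your route and disappears. For move (ii) the paper likewise sidesteps the need to ``adjust $f$'' move by move: since $\Supp\theta\cap J^-(K\cup\Supp g\cup\Supp g')$ is compact, one may choose a single $f$ admissible for both $g$ and $g'$, and then $\Supp(g-g')\lesssim K$ yields $\ess_M(f+g')\dt a=[\ess_M(f+g)\dt a]\ess_M(f+g)^{-1}\ess_M(f+g')$, from which $g$-independence follows. Your remaining steps (well-definedness and linearity on $\A_c(M)$, unique extension to $\A(M)$) match the paper.
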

The different regions are sketched in Figure~\ref{Modified Moller fig}.
\begin{proof}
$\Supp(1-\theta)\cap J^+(K)$ is compact, so such a $g$ exists, and $(\Supp\theta) \cap J^-(K\cup\Supp g)$ is compact, so such an $f$ exists.

For a given choice of $g$, if $f'$ is another possible choice of $f$, then $\Supp(f-f')\gtrsim K\cup\Supp g$. This implies that
\[
\ess_M(f'+g)\dt a = \ess_M(f')\ess_M(f)^{-1}[\ess_M(f+g)\dt a] 
\]
and
\[
\ess_M(f'+g) = \ess_M(f')\ess_M(f)^{-1}\ess_M(f+g) \ ,
\]
so the right side of \eqref{mmoller} is independent of the choice of $f$. If $g'$ is another possible choice of $g$, then $\Supp\theta\cap J^-(K\cup\Supp g\cup\Supp g')$ is compact, so a choice of $f$ exists that is compatible with both $g$ and $g'$. Now, $\Supp (g-g') \lesssim K$, so
\[
\ess_M(f+g')\dt a = [\ess_M(f+g)\dt a]\ess_M(f+g)^{-1}\ess_M(f+g')\ ,
\]
and therefore the right side of \eqref{mmoller} is independent of the choice of $g$, i.e., $\mm_\theta(a)$ is well defined for $a\in \A(M;K)$.

For any $a,b\in\A_c(M)$, there exists $K\subset M$ compact such that $a,b\in \A(M;K)$. Any larger compact set determines the same $\mm_\theta(a)$, therefore $\mm_\theta$ is well defined. Because $a+b\in \A(M;K)$, we have $\mm_\theta(a+b)=\mm_\theta(a)+\mm_\theta(b)$, therefore $\mm_\theta$ is linear on $\A_c(M)$. Finally, this extends uniquely to $\A(M)$.
\end{proof}
\begin{figure}
\begin{tikzpicture}[x=1.00mm, y=1.00mm, inner xsep=0pt, inner ysep=0pt, outer xsep=0pt, outer ysep=0pt]
%\path[line width=0mm] (-2.00,-12.00) rectangle +(144.00,70.12);
\definecolor{F}{rgb}{0.827,0.827,0.827}
\path[fill=F] (95.95,40.03) .. controls (103.47,32.84) and (109.73,12.84) .. (103.10,12.07) .. controls (84.53,9.93) and (55.35,9.35) .. (36.85,12.20) .. controls (30.21,13.22) and (35.88,32.49) .. (42.72,40.03) -- cycle;
\definecolor{F}{rgb}{0.65,0.65,0.65}
\path[fill=F] (54.21,29.39) .. controls (62.22,30.30) and (77.57,30.41) .. (85.89,29.36) .. controls (89.28,28.94) and (76.16,14.71) .. (70.16,14.71) .. controls (64.16,14.71) and (49.74,28.88) .. (54.21,29.39) -- cycle;
\definecolor{L}{rgb}{0,0,0}
\path[line width=0.30mm, draw=L, dash pattern=on 2.00mm off 1.00mm] (0.00,40.00) -- (140.00,40.00);
\path[line width=0.30mm, draw=L, dash pattern=on 2.00mm off 1.00mm] (0.00,0.00) -- (140.00,0.00);
\path[line width=0.30mm, draw=L, dash pattern=on 0.30mm off 0.50mm] (0.00,-2.00) -- (43.00,40.00);
\path[line width=0.30mm, draw=L, dash pattern=on 0.30mm off 0.50mm] (96.00,40.00) -- (140.00,-3.00);
\path[line width=0.30mm, draw=L, dash pattern=on 0.30mm off 0.50mm] (29.19,55.74) -- (67.00,17.00);
\path[line width=0.30mm, draw=L, dash pattern=on 0.30mm off 0.50mm] (111.90,56.12) -- (73.00,17.00);
\draw(68.50,18.50) node[anchor=base west]{$K$};
\path[line width=0.30mm, draw=L] (70.00,20.00) circle (4.24mm);
\draw(128.00,42.00) node[anchor=base west]{$\theta_N=1$};
\draw(125.00,-4.00) node[anchor=base west]{$\theta_N=0$};
\draw(115.00,10.00) node[anchor=base west]{$\theta_M=0$};
\draw(115.00,27.00) node[anchor=base west]{$\theta_M=1$};
\draw(59.5,26.50) node[anchor=base west]{$g_2=1-\theta_M$};
\path[line width=0.30mm, draw=L, dash pattern=on 0.30mm off 0.50mm] (53.95,29.39) -- (13.36,-9.74);
\path[line width=0.30mm, draw=L, dash pattern=on 0.30mm off 0.50mm] (86.37,29.00) -- (125.56,-10.00);
\draw(64.00,3.00) node[anchor=base west]{$f_2=\theta_N$};
\draw(87.91,-6.55) node[anchor=base west]{$M$};
\path[line width=0.30mm, draw=L] (0.00,20.00) .. controls (18.25,39.16) and (43.54,50.00) .. (70.00,50.00) .. controls (96.46,50.00) and (121.75,39.16) .. (140.00,20.00);
\path[line width=0.30mm, draw=L] (0.00,20.00) .. controls (18.25,0.84) and (43.54,-10.00) .. (70.00,-10.00) .. controls (96.46,-10.00) and (121.75,0.84) .. (140.00,20.00);
\path[line width=0.30mm, draw=L, dash pattern=on 2.00mm off 1.00mm] (0.00,20.00) .. controls (22.74,26.63) and (46.31,30.00) .. (70.00,30.00) .. controls (93.69,30.00) and (117.26,26.63) .. (140.00,20.00);
%\path[line width=0.30mm, draw=L, dash pattern=on 2.00mm off 1.00mm] (0.00,20.00) .. controls (22.74,13.37) and (46.31,10.00) .. (70.00,10.00) .. controls (93.69,10.00) and (117.26,13.37) .. (140.00,20.00);
\path[line width=0.30mm, draw=L, dash pattern=on 2.00mm off 1.00mm] (140.00,20.00) .. controls  (117.26,13.37) and (93.69,10.00) .. (70.00,10.00) .. controls (46.31,10.00) and (22.74,13.37) .. (0.00,20.00);
%\path[line width=0.15mm, draw=L]  (39.02,11.89) -- (33.91,17.01) (42.24,11.50) -- (34.01,19.73) (45.39,11.18) -- (34.40,22.17) (48.47,10.93) -- (34.93,24.46) (51.49,10.73) -- (35.58,26.64) (54.48,10.57) -- (36.32,28.74) (57.43,10.45) -- (37.13,30.75) (60.35,10.36) -- (38.01,32.70) (63.24,10.30) -- (38.97,34.57) (66.10,10.27) -- (40.01,36.36) (68.93,10.26) -- (41.14,38.06) (71.75,10.28) -- (42.38,39.64) (74.54,10.32) -- (44.83,40.03) (77.30,10.38) -- (47.66,40.03) (80.05,10.46) -- (50.48,40.03) (82.78,10.56) -- (53.31,40.03) (85.48,10.68) -- (56.14,40.03) (88.17,10.83) -- (58.97,40.03) (90.84,10.99) -- (61.80,40.03) (93.48,11.17) -- (64.63,40.03) (96.10,11.38) -- (67.45,40.03) (98.70,11.61) -- (70.28,40.03) (101.27,11.87) -- (73.11,40.03) (103.74,12.22) -- (75.94,40.03) (105.23,13.57) -- (78.77,40.03) (105.84,15.78) -- (81.60,40.03) (105.82,18.63) -- (84.42,40.03) (105.24,22.04) -- (87.25,40.03) (104.09,26.02) -- (90.08,40.03) (102.14,30.79) -- (92.91,40.03) (97.83,37.93) -- (95.74,40.03);
\end{tikzpicture}%
\caption{\label{Modified functor}Some of the relationships in the proof of Theorem~\ref{Modified}. The darker shaded region is $\Supp g_2$. The lighter shaded region is $\Supp\chi$.}
\end{figure}
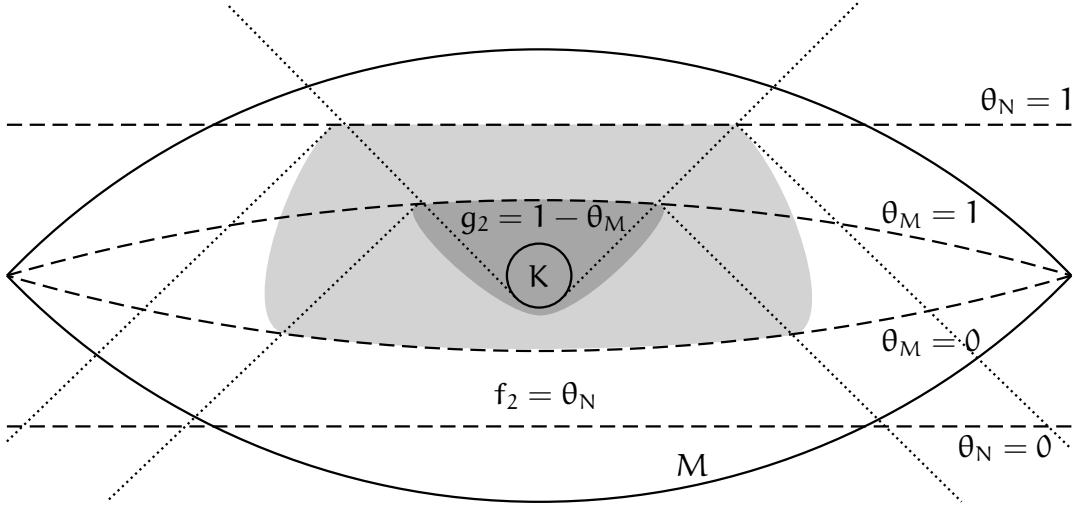
\begin{thm}
\label{Modified}
Given $\theta\in\Theta(\Xc)$, there is a unique functor $\A_{\Int,\theta}:\Xc\to\hlAlg$ such that: 
\begin{itemize}
\item
$\A_{\Int,\theta}(M)= \A(M)[[\lambda]]$;% is the completion of the subalgebra generated by the image of $\mm_{\theta_M}$;
\item
$\mm_\theta : \A\naturalto\A_{\Int,\theta}$ is a natural linear transformation.
\end{itemize}
\end{thm}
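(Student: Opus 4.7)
The strategy is transport of structure along the family of linear maps $\mm_\theta$. Uniqueness is forced by the naturality requirement combined with the demand that $\A_{\Int,\theta}$ land in $\hlAlg$; existence then reduces to a formal verification of compatibilities.

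I would first observe that $\ess_M(f)=\ET\{\tfrac{i}{\hbar}\Int_M(\lambda f)\}$ equals $1$ at $\lambda=0$, so each $\mm_\theta:\A(M)\to\A(M)[[\lambda]]$ reduces to the identity modulo $\lambda$. Extended $\co[[\hbar,\lambda]]$-linearly and using the $\lambda$-adic completeness of $\A(M)[[\lambda]]$, $\mm_\theta$ therefore becomes a $\co[[\hbar,\lambda]]$-linear bijection with a formal-power-series inverse $\mm_\theta^{-1}$. The naturality square $\A_{\Int,\theta}[\phi]\circ\mm_\theta=\mm_\theta\circ\A[\phi]$ then forces
\[
\A_{\Int,\theta}[\phi]:=\mm_\theta\circ\A[\phi]\circ\mm_\theta^{-1}
\]
as the unique linear map for each $\phi:M\to N$ in $\Xc$. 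Insisting that $\A_{\Int,\theta}[\phi]$ be an algebra homomorphism and using that $\A[\phi]$ is itself a homomorphism yields, for any $x,y\in\A(M)$, the constraint $\mm_\theta(\A[\phi](x))\star\mm_\theta(\A[\phi](y))=\mm_\theta(\A[\phi](x\cdot y))$ on the product $\star$ of $\A_{\Int,\theta}(N)$; taking $\phi=\id_M$ uniquely determines that product as the transported one, $a\star b:=\mm_\theta(\mm_\theta^{-1}(a)\cdot\mm_\theta^{-1}(b))$. This gives uniqueness of both the algebras and the morphisms of $\A_{\Int,\theta}$.

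For existence, take the above as definitions. Then $\mm_\theta$ is by construction an algebra isomorphism from $\A(M)$ onto $\A_{\Int,\theta}(M)$, so $\A_{\Int,\theta}[\phi]$ is automatically an algebra homomorphism as a conjugate of the homomorphism $\A[\phi]$ by algebra isomorphisms. Functoriality is then immediate from functoriality of $\A$:
\[
\A_{\Int,\theta}[\psi]\circ\A_{\Int,\theta}[\phi]=\mm_\theta\circ\A[\psi]\circ\mm_\theta^{-1}\circ\mm_\theta\circ\A[\phi]\circ\mm_\theta^{-1}=\mm_\theta\circ\A[\psi\circ\phi]\circ\mm_\theta^{-1}=\A_{\Int,\theta}[\psi\circ\phi],
\]
and naturality of $\mm_\theta$ is built into the definitions.

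The only substantive point in the argument is the invertibility of $\mm_\theta$ as a $\co[[\hbar,\lambda]]$-linear map, which reduces to $\mm_\theta-\id$ being of strictly positive $\lambda$-order together with the $\hbar$- and $\lambda$-adic completeness of $\A(M)[[\lambda]]$ guaranteeing convergence of the Neumann series for $\mm_\theta^{-1}$. Beyond this, the theorem is a formal transport-of-structure statement. Its significance lies less in the difficulty of its proof than in providing an explicit, spacetime-by-spacetime presentation of the interacting functor $\A_{\Int,\theta}$ on the same underlying $\co[[\hbar,\lambda]]$-modules as $\A[\cdot][[\lambda]]$, which will be compared with the direct-limit construction $\A_\Int$ of the algebraic adiabatic limit.
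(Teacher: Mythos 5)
There is a genuine gap, and it is located exactly where the theorem's actual content lies. The first bullet of the statement fixes $\A_{\Int,\theta}(M)$ to be the algebra $\A(M)[[\lambda]]$ as an object of $\hlAlg$, i.e.\ with its original ($\lambda$-linearly extended) product --- not with a product to be chosen later. You instead transport the product along $\mm_\theta$, declaring $a\star b:=\mm_\theta(\mm_\theta^{-1}(a)\cdot\mm_\theta^{-1}(b))$, which turns $\mm_\theta$ into an algebra isomorphism and makes everything tautological. But $\mm_\theta$, like the retarded \Moller{} operator $R_f$, is emphatically \emph{not} a homomorphism; the entire point of the construction is that the interaction deforms the morphisms (degree $(1,1)$) while leaving the algebras (degree $(0,2)$) untouched. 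With the product fixed, the forced definition $\A_{\Int,\theta}[\phi]=\mm_{\theta_N}\circ\A[\phi]\circ\mm_{\theta_M}^{-1}$ is a well-defined linear map (your invertibility argument for $\mm_\theta$ is fine and matches the paper), and functoriality and uniqueness are immediate --- but it is \emph{not} automatic that this map is multiplicative for the undeformed products. That is the substantive claim, and the paper's proof spends all of its effort on it: by a careful choice of cutoff functions $f_1,f_2,g_1,g_2,\chi$ exploiting the causal factorization property \eqref{causal factorization}, one shows that on each local subalgebra $\mm_{\theta_M}(\A(M;K))$ the map is $b\mapsto \ess_N(f_2)^{-1}\ess_N(f_1)\,\A(\phi;b)\,\ess_N(f_1)^{-1}\ess_N(f_2)$, i.e.\ $\A[\phi]$ followed by conjugation by a fixed invertible element, which is manifestly a homomorphism for the original product. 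Your proof contains none of this, and without it the existence half of the theorem is unproved. (That the original product is intended is also confirmed by the later use of the theorem: eq.~\eqref{mm hom}, the Maurer--Cartan computation, and the identification with $\A_\Int$, whose local algebras are subalgebras of $\A(M)[[\lambda]]$ with its given multiplication, all use the undeformed product.)

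A secondary problem is your uniqueness argument for the product. Taking $\phi=\id_M$ in the naturality square gives $\A_{\Int,\theta}[\id_M]=\id$, which is a homomorphism for \emph{every} product on $\A(M)[[\lambda]]$, so it determines nothing; and the theorem only requires $\mm_\theta$ to be a natural \emph{linear} transformation, not a homomorphism, so the constraint you display does not follow from the hypotheses. In the intended reading this issue evaporates --- the objects are pinned down by the first bullet, and uniqueness concerns only the morphisms, where your argument via injectivity of $\mm_{\theta_M}$ is correct and agrees with the paper.
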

\begin{proof}
Because $\mm_{\theta_M}:\A(M)[[\lambda]]\to\A(M)[[\lambda]]$ is equal to the identity plus higher order terms in $\lambda$, it is automatically a bijective linear map.  
The naturality condition means that for any $\phi:M\to N$ %and $a\in\A(M)$, 
\[
\begin{tikzcd}
\A(M) \arrow{d}{\mm_{\theta_M}} \arrow{r}{\A[\phi]} & \A(N) \arrow{d}{\mm_{\theta_N}} \\
\A(M)[[\lambda]] \arrow{r}{\A_{\Int,\theta}[\phi]} & \A(N)[[\lambda]]
\end{tikzcd}
%\A_{\Int,\theta}(\phi;\mm_{\theta_M}(a)) = \mm_{\theta_N}(\A(\phi;a)) \ .
\]
should commute.
Because $\mm_{\theta_M}$ is injective, this clearly defines a linear map from $\Im\mm_{\theta_M}$ to $\Im\mm_{\theta_N}$, and hence $\A(M)[[\lambda]] \to \A(N)[[\lambda]]$. It is automatically functorial. We need to check that it is a homomorphism.

Consider $a\in\A(M;K)$ and compare $\mm_{\theta_M}(a)$ with $\mm_{\theta_N}(\A(\phi;a))$. First, note that if $\mm_{\theta_M}(a)$ is computed with $f,g\in\Dt(M)$ as before, then
\[
\A(\phi;\mm_{\theta_M}(a)) = \ess_N(\phi_*f)^{-1}[\ess_N(\phi_*f+\phi_*g)\dt\A(\phi;a)]\ess_N(\phi_*f+\phi_*g)^{-1}\ess_N(\phi_*f)\ .
\]
In this formula, $\phi_*f$  can safely be replaced with a test function on $N$, as long as the support of the difference is $\gtrsim \phi(K\cup\Supp g)$.

With this in mind, choose $g_1\in\Dt(M)$ such that $\Supp(g_1-1+\theta_M)\lesssim K$. Again, denote by $\tilde\theta_M\in\Ci(J_N(M))$ the function such that $\phi^*\tilde\theta_M=\theta_M$ and $\tilde\theta_M=1$ in the future of $\phi(M)$ and $0$ in the past. Next choose $\chi\in\Dt(N)$ such that $\chi=\tilde\theta_M-\theta_N$ on $J(\phi(K))\cup J^-(\phi(\Supp g_1))$. Define $g_2=\phi_*g_1+\chi$. Choose $f_2\in\Dt(N)$ such that $\Supp(f_2-\theta_N)\gtrsim \phi(K\cup\Supp g_1)\cup\Supp g_2$. Finally, let $f_1=f_2+\chi$. (Some of this is sketched in Fig.~\ref{Modified functor}.)

The point of these choices is that they can be used to compute $\A(\phi;\mm_{\theta_M}(a))$ and $\mm_{\theta_N}(\A(\phi;a))$, and $f_1+\phi_*g_1=f_2+g_2$. Now,
\[
\A(\phi;\mm_{\theta_M}(a)) = \ess_N(f_1)^{-1}[\ess_N(f_2+g_2)\dt\A(\phi;a)]\ess_N(f_2+g_2)^{-1}\ess_N(f_1)
\]
and
\begin{align*}
\mm_{\theta_N}(\A(\phi;a)) &= \ess_N(f_2)^{-1}[\ess_N(f_2+g_2)\dt\A(\phi;a)]\ess_N(f_2+g_2)^{-1}\ess_N(f_2)\\
&= \ess_N(f_2)^{-1}\ess_N(f_1) \A(\phi;\mm_{\theta_M}(a)) \ess_N(f_1)^{-1} \ess_N(f_2)\ .
\end{align*}
This shows that for $b\in \mm_{\theta_M}(\A(M;K))$, 
\beq
\label{mm hom}
\A_{\Int,\theta}(\phi;b) = \ess_N(f_2)^{-1}\ess_N(f_1) \A(\phi;b) \ess_N(f_1)^{-1} \ess_N(f_2) \ .
\eeq
This $\A_{\Int,\theta}[\phi]$ is manifestly a homomorphism.

Finally, this extends uniquely to all of $\A_{\Int,\theta}(M)$.
\end{proof}
\begin{cor}
For any $\theta,\theta'\in\Theta(\Xc)$, there is a natural isomorphism  $\alpha:\A_{\Int,\theta}\naturalto\A_{\Int,\theta'}$ such that 
\[
\begin{tikzcd}
&\A(M) \arrow{dl}[swap]{\mm_{\theta_M}} \arrow{dr}{\mm_{\theta'_M}} \\
\A_{V,\theta}(M) \arrow{rr}{\alpha_M} && \A_{V,\theta'}(M)
\end{tikzcd}
\]
commutes for any $M\in\Obj\Xc$.
\end{cor}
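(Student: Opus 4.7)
The natural candidate is $\alpha_M := \mm_{\theta'_M}\circ\mm_{\theta_M}^{-1}$. Since each modified \Moller\ operator equals the identity modulo $\lambda$ (when $\lambda=0$, $\ess_M(f)=1$ and the formula \eqref{mmoller} collapses to $a\mapsto a$), both $\mm_{\theta_M}$ and $\mm_{\theta'_M}$ are $\co[[\hbar,\lambda]]$-linear bijections of $\A(M)[[\lambda]]$, so $\alpha_M$ is a well-defined linear bijection. Commutativity of the stated triangle holds by construction, and this formula is in fact forced by the commutativity requirement, so $\alpha_M$ is unique.

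For naturality, given $\phi:M\to N$, I would precompose both $\alpha_N\circ\A_{\Int,\theta}[\phi]$ and $\A_{\Int,\theta'}[\phi]\circ\alpha_M$ with $\mm_{\theta_M}$: by the naturality of $\mm_\theta$ and $\mm_{\theta'}$ from Theorem~\ref{Modified}, together with the triangle identities defining $\alpha_M$ and $\alpha_N$, both sides reduce to $\mm_{\theta'_N}\circ\A[\phi]$. Since $\mm_{\theta_M}$ is bijective, the two maps agree, proving naturality.

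The main work is to show that each $\alpha_M$ is an algebra homomorphism. The strategy is to prove that $\alpha_M$ is locally inner. For $a\in\A(M;K)$ with $K\subset M$ compact, I would compute $\mm_{\theta'_M}(a)$ by choosing auxiliary test functions $(f',g')$ such that $f'+g'$ differs from some $(f,g)$ suited to $\mm_{\theta_M}(a)$ by a test function $\eta\in\Dt(M)$ which equals $\theta'_M-\theta_M$ on a neighborhood of $J(K)$; such $\eta$ exists because $\theta'_M-\theta_M$ has future- and past-compact support, so its intersection with $J(K)$ is compact, exactly as in the proof of Lemma~\ref{Cohomologous}. Applying the causal factorization property~\eqref{causal factorization} to the decomposition $f'+g' = f+g+\eta$ should collapse the resulting product of $\ess_M$ factors into a conjugation formula of the form $\mm_{\theta'_M}(a) = w_K\,\mm_{\theta_M}(a)\,w_K^{-1}$, with $w_K\in\A(M)[[\lambda]]$ a specific product of $\ess_M$ factors depending only on $K$. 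This exhibits $\alpha_M$ restricted to the subalgebra generated by $\mm_{\theta_M}(\A(M;K))$ as conjugation by $w_K$, hence a homomorphism; compatibility of the $w_K$ as $K$ grows, together with $\lambda$-adic continuity, extends this to an algebra homomorphism on all of $\A(M)[[\lambda]]$.

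Invertibility of $\alpha_M$ is automatic, with inverse $\mm_{\theta_M}\circ\mm_{\theta'_M}^{-1}$ obtained by swapping the roles of $\theta$ and $\theta'$. The principal technical obstacle is the causal-support bookkeeping in the local conjugation step: one must arrange the data $(f,g,f',g',\eta)$ so that the causal factorization of $\ess_M$ collapses the various S-matrix products into exactly the desired conjugation form. This is the same flavor of argument used in the proofs of Theorem~\ref{Modified} and Lemma~\ref{Cohomologous}, and I expect it to go through by a direct (if somewhat intricate) calculation.
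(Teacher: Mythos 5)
Your proposal is correct and follows essentially the same route as the paper, which simply notes that the argument is ``the same calculation as in the previous proof, although slightly simplified'' --- i.e., the local-conjugation computation of Theorem~\ref{Modified} with $\phi=\id$ and the two cutoffs $\theta_M,\theta'_M$ in place of $\theta_M,\theta_N$, yielding $\mm_{\theta'_M}(a)=\ess_M(f_2)^{-1}\ess_M(f_1)\,\mm_{\theta_M}(a)\,\ess_M(f_1)^{-1}\ess_M(f_2)$ exactly as in your $w_K$ step. The paper additionally observes that one can package this formally by applying Theorem~\ref{Modified} to a doubled copy of $\Xc$, but the substance is identical to what you wrote.
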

\begin{proof}
The calculation is the same as in the previous proof, although slightly simplified. More formally, this result follows if we apply the previous theorem to a doubled version of the category $\Xc$ in which every object of $\Xc$ appears as 2 isomorphic copies.
\end{proof}
\begin{thm}
For any $\theta\in\Theta(\Xc)$, $\A_{\Int,\theta}$ is naturally isomorphic to $\A_\Int$.
\end{thm}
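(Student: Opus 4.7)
The strategy is to construct a natural isomorphism $\beta:\A_\Int\naturalto\A_{\Int,\theta}$ that, at the level of generators, identifies $R_h(a)$ with $\mm_\theta(a)$ up to an inner automorphism of $\A(M)[[\lambda]]$. The starting point is the observation that the two constructions differ only by the adiabatic switching factors built from the formal S-matrix. For each $(K,h)\in\Obj\K(M)$ (so $h=1$ on $J^\diamond_M(K)$) I would first choose test functions $f,g\in\Dt(M)$ satisfying the hypotheses of Theorem~\ref{Modified Moller} for $\A(M;K)$ together with the additional requirement that $f+g=h$ on a neighborhood of $J_M(K)$. Such a decomposition exists by the same support analysis used in the proofs of Lemma~\ref{Cutoff_independent} and Theorem~\ref{Modified Moller}: one begins with a cut-off version of $h-\theta$ for $g$ and sets $f:=h-g$, then adjusts so $\Supp(f-\theta)\gtrsim K\cup\Supp g$. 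The causal factorization~\eqref{causal factorization} then yields $\ess_M(h)\dt a=\ess_M(f+g)\dt a$ for $a\in\A(M;K)$, and substituting into~\eqref{mmoller} gives the key identity
\[
\mm_\theta(a)=U_{K,h}\,R_h(a)\,U_{K,h}^{-1},\qquad U_{K,h}:=\ess_M(f)^{-1}\ess_M(h),\qquad a\in\A(M;K).
\]

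This identity shows that the restriction of the inner automorphism $\Ad[U_{K,h}]$ to $\A_\Int(M;K,h)\subset\A(M)[[\lambda]]$ is an algebra homomorphism $\beta_{M;K,h}:\A_\Int(M;K,h)\to\A_{\Int,\theta}(M)=\A(M)[[\lambda]]$ sending $R_h(a)\mapsto\mm_\theta(a)$, and by Theorem~\ref{Modified Moller} it is independent of the decomposition $(f,g)$. The family $\{\beta_{M;K,h}\}$ is compatible with the directed system defining $\A_\Int(M)$: for any $(K_1,h_1,K_2,h_2)\in\Mor\K(M)$ the homomorphisms $\beta_{M;K_2,h_2}\circ\A_\Int(M;K_1,h_1,K_2,h_2)$ and $\beta_{M;K_1,h_1}$ agree on the generating subset $R_{h_1}(\A(M;K_1))$, so they agree on all of $\A_\Int(M;K_1,h_1)$. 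The universal property of $\varinjlim\A_\Int(M;-)$ produces the desired homomorphism $\beta_M:\A_\Int(M)\to\A_{\Int,\theta}(M)$. Naturality in $\phi:M\to N$ is checked on generators: since $(\phi K,\phi_*h)\in\K(N)$, using the naturality of $\mm_\theta$ from Theorem~\ref{Modified},
\[
\beta_N(\A_\Int(\phi;R_h(a)))=\beta_N(R_{\phi_*h}(\A(\phi;a)))=\mm_\theta(\A(\phi;a))=\A_{\Int,\theta}(\phi;\beta_M(R_h(a))).
\]

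Bijectivity of $\beta_M$ rests on two observations: each $\beta_{M;K,h}$ is a restriction of an inner automorphism of $\A(M)[[\lambda]]$, hence injective; and the image of $\beta_M$ contains $\mm_\theta(\A_c(M))$, which generates $\A_{\Int,\theta}(M)$ after completion because $\mm_\theta=\id+O(\lambda)$ is a linear bijection (Theorem~\ref{Modified}) and $\A_c(M)$ is $\hbar$-adically dense in $\A(M)$. The main obstacle is the very first step: producing the decomposition $h=f+g$ simultaneously satisfying the support conditions of Theorem~\ref{Modified Moller} and the causal compatibility needed to replace $\ess_M(h)$ by $\ess_M(f+g)$ on $\A(M;K)$. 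Once the key identity is secured, the remaining arguments are formal applications of the universal property of $\varinjlim\A_\Int(M;-)$ together with the naturality of $\mm_\theta$ already established in Theorem~\ref{Modified}.
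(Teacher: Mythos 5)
There is a genuine gap, and it is exactly where you suspected: the decomposition $h=f+g$ on which your key identity rests does not exist in general. The hypotheses of Theorem~\ref{Modified Moller} force $g=1-\theta_M$ on all of $J^+(K)$ and $f=\theta_M$ on all of $J^-(K\cup\Supp g)$; combined with $f+g=h$ on (a neighborhood of) $J(K)$, this forces $h=1$ on $J^+(K)\cap J^-(K\cup\Supp g)$. But $\Supp g$ necessarily contains $J^+(K)\cap\Supp(1-\theta_M)$, which may lie far to the future of $K$ (e.g.\ if $\theta_M$ transitions near a Cauchy surface well above $K$), so this region is in general much larger than $J^\diamond(K)$ --- and the definition of $\K(M)$ only guarantees $h=1$ on $J^\diamond(K)$. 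A secondary problem is the step ``causal factorization yields $\ess_M(h)\dt a=\ess_M(f+g)\dt a$'': even when $h$ and $f+g$ agree near $J(K)$, changing the argument of $\ess_M(\cdot)\dt a$ by something supported $\lesssim K$ multiplies it on the \emph{right} by $\ess_M(f+g)^{-1}\ess_M(h)\neq1$; only the combination $[\ess_M(F)\dt a]\,\ess_M(F)^{-1}$ is invariant under such changes. Your conjugation formula with $U_{K,h}=\ess_M(f)^{-1}\ess_M(h)$ would hold verbatim only if $h=f+g$ exactly, which returns you to the nonexistent decomposition.

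The paper sidesteps all of this by not relating $R_h$ to the given $\theta_M$ in one step. For each $(K,h)\in\K(M)$ it chooses an \emph{auxiliary} $\theta'\in\Theta(M)$ adapted to the pair, with $\Supp(1-\theta')\lesssim K$ and $\Supp(h-\theta')\gtrsim K$, so that Theorem~\ref{Modified Moller} applies with $g=0$ and gives $R_h(a)=\mm_{h,0}(a)=\mm_{\theta'}(a)$ on $\A(M;K)$ --- no conjugation at all. The passage from $\theta'$ to $\theta_M$ is then delegated to the Corollary to Theorem~\ref{Modified}, whose natural isomorphism $\A_{\Int,\theta'}\cong\A_{\Int,\theta}$ intertwines $\mm_{\theta'}$ with $\mm_{\theta_M}$ and already packages (via eq.~\eqref{mm hom}) the $K$-dependent conjugating S-matrix factors you are trying to write down by hand; the correct conjugator is of the form $\ess_M(f_2)^{-1}\ess_M(f_1)$ for choices adapted to both $\theta'$ and $\theta_M$, not $\ess_M(f)^{-1}\ess_M(h)$. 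The rest of your argument --- compatibility with the directed system, the universal property of $\varinjlim\A_\Int(M;-)$, naturality, and the injectivity/surjectivity argument --- does match the paper's, but it all hangs on the key identity, so the proof as written does not go through.
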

\begin{proof}
First, for $M\in\Obj\Xc$, $K\subset M$ compact, $\theta'\in\Theta(M)$ such that $\Supp(1-\theta')\lesssim K$, and $a\in\A(M;K)$, we can compute $\mm_{\theta'}(a)$ using $g=0$. So, if $\Supp(f-\theta')\gtrsim K$, then 
\[
R_f(a) = \mm_{f,0}(a)=\mm_{\theta'}(a)\ .
\]

Now, for any $(K,f)\in\K(M)$, there exists $\theta'\in\Theta(M)$ such that $\Supp(f-\theta')\gtrsim K$. So, the isomorphism that intertwines $\mm_{\theta'}$ with $\mm_{\theta_M}$ also maps $\A_\Int(M;K,f)$ to $\A_{\Int,\theta}(M)$. These maps are consistent with $\A_\Int(M;-)$, so these give a homomorphism from $\A_\Int(M) = \varinjlim \A_\Int(M;-)$ to $\A_{\Int,\theta}(M)$. For the same reason, this homomorphism is natural.

The image of $\A_\Int(M;K,f)$ is densely generated by $\mm_{\theta_M}(\A(M;K))$, but these subalgebras densely generate $\A_{\Int,\theta}(M)$, therefore this homomorphism is surjective. It is injective by construction, therefore it is an isomorphism.
\end{proof}

This shows that $\A_{\Int,\theta}$ is completely equivalent to $\A_\Int$. It has the advantage of being more concrete. Whereas $\A_\Int(M)$ is defined abstractly as a limit, $\A_{\Int,\theta}(M)$ is just $\A(M)[[\lambda]]$.

\subsection{Maurer-Cartan}
The computations in the previous section make it possible to explicitly show that the characteristic class $\Xi_V$ of an interaction is a Maurer-Cartan element of $H^2_a(\A,\A)$, provided that a natural time-ordered product exists. In this section $\A:\Xc\to\hAlg$ satisfies Einstein causality and has a natural time-ordered product as in the previous section.

If $V$ is nonlinear, then only its linear part is relevant at first order in $\lambda$, so in this section I will assume that $V:\Dt\naturalto\A$ is a natural linear transformation.

For $\theta\in\Theta(M)$, the map $\mm_{\theta}$ is a formal power series in $\lambda$, so denote this expansion explicitly as
\beq
\label{mm expansion}
\mm_\theta = \id + \lambda\mm_\theta^{(1)} + \tfrac12\lambda^2\mm_\theta^{(2)}+\dots\ .
\eeq
Inverting this gives
\beq
\label{mm inverse}
\mm_\theta^{-1} = \id - \lambda\mm^{(1)}_\theta + \lambda^2\left(\mm^{(1)}_\theta\circ\mm^{(1)}_\theta - \tfrac12 \mm^{(2)}_\theta\right) + \dots \ .
\eeq

\begin{thm}
For $\theta\in\Theta(\Xc)$, and  $\phi:M\to N$ in $\Xc$, 
\[
\A_{\Int,\theta}[\phi] \equiv \A[\phi]+ \lambda\, \Xi_{\Int,\theta}[\phi] \mod \lambda^2 \ ,
\]
where $\Xi_{V,\theta}\in C^{1,1}(\A,\A)$ is now $\hbar$-adically completed.
\end{thm}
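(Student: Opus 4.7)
The plan is to extract the first-order coefficient of $\A_{\Int,\theta}[\phi]$ directly from equation \eqref{mm hom}, the explicit conjugation formula established inside the proof of Theorem~\ref{Modified}. Fix $\phi:M\to N$, a compact $K\subset M$, and $a\in\A(M;K)$, and choose $g_1\in\Dt(M)$ together with $\chi,f_2\in\Dt(N)$, $g_2=\phi_*g_1+\chi$ and $f_1=f_2+\chi$, exactly as in that proof; in particular, $\chi$ agrees with $\tilde\theta_M-\theta_N$ on $J_N(\phi(K))$, which is precisely the condition used in Lemma~\ref{Character}. Setting $b=\mm_{\theta_M}(a)$, equation \eqref{mm hom} reads
\[
\A_{\Int,\theta}(\phi;b) = \ess_N(f_2)^{-1}\ess_N(f_1)\,\A(\phi;b)\,\ess_N(f_1)^{-1}\ess_N(f_2).
\]

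Because $\Int$ is a linear natural transformation, $\ess_N(f)=1+\tfrac{i\lambda}{\hbar}\Int_N(f)+O(\lambda^2)$, so using $f_1-f_2=\chi$ and linearity of $\Int_N$,
\[
\ess_N(f_2)^{-1}\ess_N(f_1) = 1+\tfrac{i\lambda}{\hbar}\Int_N(\chi)+O(\lambda^2).
\]
Conjugating $\A(\phi;b)$ by this unit yields
\[
\A_{\Int,\theta}(\phi;b) \equiv \A(\phi;b) + \lambda\cdot\tfrac{i}{\hbar}[\Int_N(\chi),\A(\phi;b)] \pmod{\lambda^2}.
\]
Since $\mm_{\theta_M}=\id+O(\lambda)$, one has $b\equiv a\pmod{\lambda}$, so $[\Int_N(\chi),\A(\phi;b)]\equiv[\Int_N(\chi),\A(\phi;a)]\pmod{\lambda}$; by the defining formula of Lemma~\ref{Character}, the first-order coefficient is exactly $\Xi_{\Int,\theta}(\phi;a)$ (once the overall sign is matched to the conventions fixed by the definitions of $\Xi_{\Int,\theta}$ and $\ess$).

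To promote this into a statement about the $\co[[\lambda]]$-linear map $\A_{\Int,\theta}[\phi]$ itself, note that $\mm_{\theta_M}$ is invertible on $\A(M)[[\lambda]]$ since it equals the identity at $\lambda=0$, so every $c\in\A(M;K)[[\lambda]]$ is uniquely of the form $\mm_{\theta_M}(c')$ and the difference between $c$ and $c'$ only contributes to the $\lambda^2$ part of either side of the claimed congruence. Hence the identity $\A_{\Int,\theta}[\phi]\equiv\A[\phi]+\lambda\Xi_{\Int,\theta}[\phi]\pmod{\lambda^2}$ holds on $\A_c(M)=\bigcup_K\A(M;K)$. The final step is to extend from $\A_c(M)$ to $\A(M)$ by $\hbar$-adic density and continuity, simultaneously $\hbar$-adically completing $\Xi_{\Int,\theta}$ as foreshadowed at the end of Section~\ref{Interaction}. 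I expect this extension---ensuring that $\A_{\Int,\theta}[\phi]$ and $\Xi_{\Int,\theta}[\phi]$ admit unique $\hbar$-adically continuous extensions and that the equality persists under them---to be the step that most requires care, although independence of the auxiliary choices $g_1,\chi,f_2$ used to witness $\Xi_{\Int,\theta}$ is already guaranteed by Lemmas~\ref{Cutoff_independent} and~\ref{Character}.
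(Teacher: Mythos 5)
Your proof is essentially the paper's own: the paper's entire argument for this theorem is the single sentence that the result ``follows from eq.~\eqref{mm hom} by expanding $\ess_N$ to first order in $\lambda$,'' which is precisely the computation you carry out, together with the routine passage from $\A_c(M)$ to $\A(M)$ by $\hbar$-adic continuity. The sign tension you flag --- the expansion of \eqref{mm hom} yields a first-order term $\tfrac{i}{\hbar}[\Int_N(\chi),\,\cdot\,]$ while Lemma~\ref{Character} defines $\Xi_{\Int,\theta}$ with $\tfrac{-i}{\hbar}$ --- is a genuine mismatch of conventions internal to the paper, not a defect of your argument.
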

\begin{proof}
This follows from eq.~\eqref{mm hom} by expanding $\ess_N$ to first order in $\lambda$.
\end{proof}

For $\theta\in\Theta(\Xc)$, I  would like to compute a cochain $\Psi_{V,\theta}$ such that $\A_{\Int,\theta} = \A+\lambda\Xi_{\Int,\theta} + \frac12\lambda^2\Psi_{V,\theta} +\dots$. By definition, for $\phi:M\to N$, 
\[
\A_{\Int,\theta}[\phi] = \mm_{\theta_N}\circ\A[\phi]\circ\mm_{\theta_M}^{-1}\ .
\]
Using \eqref{mm expansion} and \eqref{mm inverse}, this gives
\beq
\label{mm Xi}
\Xi_{\Int,\theta}[\phi] = \mm^{(1)}_{\theta_N}\circ\A[\phi] - \A[\phi]\circ\mm^{(1)}_{\theta_M} 
\eeq
and leads to:
\begin{definition}
For $V:\Dt\naturalto\A$ linear, $\theta\in\Theta(\Xc)$, let $\Psi_{V,\theta}\in C^{1,1}(\A,\A)$ be such that for $\phi:M\to N$,
\beq
\label{Psi def}
\Psi_{V,\theta}[\phi] = \mm_{\theta_N}^{(2)}\circ\A[\phi] - \A[\phi]\circ\mm_{\theta_M}^{(2)} - 2\,\Xi_{\Int,\theta}[\phi]\circ \mm^{(1)}_{\theta_M} \ .
\eeq
\end{definition}

\begin{lem}
\label{Psi explicit lemma}
For $V:\Dt\naturalto\A$ linear, $\theta\in\Theta(\Xc)$,  $\phi:M\to N$, $K\subset M$ compact, if $\Or\subset M$ is a neighborhood of
\[
K \cup (J^+K\cap\Supp [1-\theta_M])\cup (J^-K\cap\Supp \theta_M)\ ,
\]
and $\chi,f_2\in\D(N)$,  such that 
\begin{align*}
\chi &= \tilde\theta_M-\theta_N && \text{on $J_N \Or$}\\
f_2 &= \theta_N && \text{on $J^-_N(\Or\cup \Supp \chi)$}
\end{align*}
(where $\tilde\theta_M\in\Theta(J_N\Or)$ is again the extension of $\theta_M$ by $1$ in the future and $0$ in the past) then for any $a\in\A(M;K)$,
\begin{multline}
\label{Psi formula}
\Psi_{V,\theta}(\phi;a) = \tfrac{-1}{\hbar^2}[\Int_N(\chi),[\Int_N(\chi),\A(\phi;a)]] \\ + \tfrac{-1}{2\hbar^2}[\Int_N(2f_2+\chi)\dt\Int_N(\chi)-\Int_N(2f_2+\chi) \Int_N(\chi),\A(\phi;a)] \ .
\end{multline}
Moreover, for any such $\phi$ and $K$, such $\Or$, $\chi$, and $f_2$ exist.
\end{lem}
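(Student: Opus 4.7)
I would verify eq.~\eqref{Psi formula} by expanding the identity~\eqref{mm hom} to second order in~$\lambda$, so that $\Psi_{V,\theta}(\phi;a)$ emerges as twice the $\lambda^2$-coefficient of $\A_{V,\theta}(\phi;a)$.

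\textbf{Existence and setup.} The set $K\cup\bigl(J^+K\cap\Supp(1-\theta_M)\bigr)\cup\bigl(J^-K\cap\Supp\theta_M\bigr)$ is compact because $\Supp(1-\theta_M)$ is future-compact and $\Supp\theta_M$ is past-compact, so a relatively compact open $\Or\subset M$ containing it exists. Analogous past-/future-compactness of $\Supp\theta_N$ and $\Supp\theta_M$ (transferred to $N$ via $\tilde\theta_M$) make $(\tilde\theta_M-\theta_N)|_{J_N(\Or)}$ and $\theta_N|_{J^-_N(\Or\cup\Supp\chi)}$ compactly supported, so they extend to test functions $\chi,f_2\in\D(N)$. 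Picking also $g_1\in\D(M)$ with $\Supp(g_1-1+\theta_M)\lesssim K$ and setting $g_2:=\phi_*g_1+\chi$ and $f_1:=f_2+\chi$, one checks that these satisfy the hypotheses in the proof of Theorem~\ref{Modified}, so eq.~\eqref{mm hom} applies and yields, for $b\in\mm_{\theta_M}(\A(M;K))$,
\[
\A_{V,\theta}(\phi;b)=U\,\A(\phi;b)\,U^{-1},\qquad U:=\ess_N(f_2)^{-1}\ess_N(f_2+\chi).
\]
Since $\mm_{\theta_M}=\id+O(\lambda)$ is bijective on $\A(M)[[\lambda]]$ and both sides of this identity are $\lambda$-adically continuous algebra homomorphisms that agree on a densely generating subspace, the identity extends to arbitrary $a\in\A(M;K)$.

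\textbf{Expansion and rearrangement.} Writing $V_2:=V_N(f_2)$ and $V_\chi:=V_N(\chi)$, linearity of $V$ gives $V_N(f_2+\chi)=V_2+V_\chi$. Expanding each $\ess_N$ as a $\dt$-exponential through order~$\lambda^2$ and multiplying out produces
\[
U=1+\tfrac{i\lambda}{\hbar}V_\chi+\Bigl(\tfrac{i\lambda}{\hbar}\Bigr)^{\!2}\Bigl(V_2\dt V_\chi+\tfrac12V_\chi\dt V_\chi-V_2V_\chi\Bigr)+O(\lambda^3),
\]
with $U^{-1}$ obtained order by order from $UU^{-1}=1$. The $\lambda$-term of $U\A(\phi;a)U^{-1}$ reproduces $\Xi_{V,\theta}(\phi;a)$ (recovering the preceding theorem), and the $\lambda^2$-term, doubled to obtain $\Psi$, is manipulated using the double-commutator identity $[V_\chi,[V_\chi,A]]=V_\chi^2A-2V_\chi AV_\chi+AV_\chi^2$ together with the linear identities $(2V_2+V_\chi)\dt V_\chi=2V_2\dt V_\chi+V_\chi\dt V_\chi$ and $(2V_2+V_\chi)V_\chi=2V_2V_\chi+V_\chi^2$, which assemble the expression into the claimed commutators with $V_\chi$ (nested, twice) and with $V_N(2f_2+\chi)\dt V_\chi-V_N(2f_2+\chi)V_\chi$.

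\textbf{Main obstacle.} The conceptual crux is the extension of eq.~\eqref{mm hom} from its natural domain $\mm_{\theta_M}(\A(M;K))$ to all of $\A(M;K)$; after that, the algebra is straightforward bookkeeping.
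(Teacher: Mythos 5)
Your overall strategy---reading $\Psi_{V,\theta}(\phi;a)$ off as twice the $\lambda^{2}$-coefficient of the conjugation by $U=\ess_N(f_2)^{-1}\ess_N(f_1)$---is the shortcut the paper itself invokes for its final simplification, and your existence argument matches the paper's. The gap is at the step you yourself flag as the crux. Equation \eqref{mm hom} is established only on $\mm_{\theta_M}(\A(M;K))$, whereas $\Psi_{V,\theta}(\phi;a)$ is by definition twice the $\lambda^{2}$-coefficient of $\A_{\Int,\theta}(\phi;a)=\mm_{\theta_N}(\A(\phi;\mm_{\theta_M}^{-1}(a)))$ for a $\lambda$-independent $a\in\A(M;K)$, and $\mm_{\theta_M}^{-1}(a)$ does not lie in $\A(M;K)$. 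Your proposed bridge---continuity plus agreement on a ``densely generating subspace''---fails: modulo $\lambda$ the closed subalgebra generated by $\mm_{\theta_M}(\A(M;K))$ is just the subalgebra generated by $\A(M;K)$, while $a-\mm_{\theta_M}(a)=-\lambda\,\mm^{(1)}_{\theta_M}(a)+O(\lambda^{2})$ with $\mm^{(1)}_{\theta_M}(a)=\tfrac{-i}{\hbar}\bigl(\Int_M(f)a+a\Int_M(g)-\Int_M(f+g)\dt a\bigr)$, which involves $\Int_M(f)$ and $\Int_M(g)$ and so does not lie in that subalgebra. Hence $\A(M;K)$ is not in the $\lambda$-adic closure of the algebra generated by $\mm_{\theta_M}(\A(M;K))$, and agreement of the two homomorphisms there is not forced.

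This is not a removable technicality: it is exactly why the hypotheses of the lemma are phrased relative to the enlarged region $\Or$ rather than $K$. The element $\mm^{(1)}_{\theta_M}(a)$ is localized only in $\Or$, so the test functions must be adapted to $\Or$ (and to $\Supp\chi$) in order for one and the same $\chi$ and $f_2$ to simultaneously compute all three terms $\mm^{(2)}_{\theta_N}\circ\A[\phi]$, $\A[\phi]\circ\mm^{(2)}_{\theta_M}$, and $\Xi_{\Int,\theta}[\phi]\circ\mm^{(1)}_{\theta_M}$ of the defining equation \eqref{Psi def}; only after this compatibility is verified do the $\mm^{(1)}$ cross-terms cancel and leave the conjugation expansion you wrote down. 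The paper's proof consists almost entirely of this support bookkeeping (computing each term of \eqref{Psi def} with matched $f_1$, $f_2$, $g_1$, $g_2$, $\chi$), and your proposal uses $\Or$ nowhere outside the existence step---a sign that the real content has been bypassed. To repair the argument you would need to show that, to the relevant order in $\lambda$, $\mm_{\theta_M}^{-1}(a)$ lies in the image of $\A[\iota]$ for the inclusion $\iota:\Or\to M$ and that the chosen $f_1,f_2$ implement $\mm_{\theta_N}\circ\A[\phi]=\Ad[U]\circ\A[\phi]\circ\mm_{\theta_M}$ on that larger domain; that is essentially the paper's computation.
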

\begin{proof}
The map defined in \eqref{mmoller} is natural in the sense that for $\phi:M\to N$,
\[
\A[\phi]\circ \mm_{f,g} = \mm_{\phi_*f,\phi_*g}\circ\A[\phi] \ .
\]

Explicitly,
\[
\mm^{(1)}_{f,g}(a) = \tfrac{-i}{\hbar}\left(\Int_M(f)a+a \Int_M(g)-\Int_M(f+g)\dt a\right) \ .
\]
This only depends upon $f$ in $J^-K$ and $g$ in $J^+K$, therefore $\mm^{(1)}_{f,g}(a) = \mm^{(1)}_{\theta_M}(a)$ if $f=\theta_M$ in $J^-K$ and $g=1-\theta_M$ in $J^+K$. If $\Or\subset M$ is a neighborhood of 
\[
K \cup (J^+K\cap\Supp [1-\theta_M])\cup (J^-K\cap\Supp \theta_M) \ ,
\] 
then $f$ and $g$ can be chosen that are supported in $\Or$, therefore $\mm^{(1)}_{\theta_M}(a)$ is in the image of $\A[\iota]$, where $\iota:\Or\to M$ is the inclusion.

From the properties of $\mm$, $\mm^{(2)}_{f_2,g_2} \circ \A(\phi;a) = \mm^{(2)}_{\theta_N} \circ \A(\phi;a)$, provided that $g_2=1-\theta_N$ on $J^+_NK$, and $f_2=\theta_N$ on $J^-_N(K\cup\Supp g_2)$.

Likewise, for $f_1\in\D(N)$ and $g_1\in\D(M)$,
\[
\A[\phi]\circ\mm^{(2)}_{\theta_M}(a) = \mm^{(2)}_{f_1,\phi_*g_1}\circ\A(\phi;a)
\]
if $g_1=1-\theta_M$ on $J^+K$ and $f_1=\tilde\theta_M$ on $J^-_N(K\cup\Supp g_1)$.

Thirdly, for $\chi,f_1\in\D(N)$ and $g_1\in\D(M)$,
\begin{align*}
\Xi_{\Int,\theta}[\phi]\circ\mm^{(1)}_{\theta_M}(a) &= \tfrac{-i}{\hbar}[\Int_N(\chi),\A[\phi]\circ\mm^{(1)}_{\theta_M}(a)] \\
&= \tfrac{-i}{\hbar}[\Int_N(\chi),\mm^{(1)}_{f_1,\phi_*g_1}\circ\A(\phi;a)]
\end{align*}
if $\chi=\tilde\theta_M-\theta_N$ on $J_N\Or$, $f_1=\tilde\theta_M$ on $J^-_NK$, and $g_1=1-\theta_M$ on $J^+K$.

If all of these conditions are satisfied, then
\beq
\label{Psi explicit}
\Psi_{V,\theta}(\phi;a) = \left(\mm^{(2)}_{f_2,g_2}-\mm^{(2)}_{f_1,\phi_*g_1}\right)\circ \A(\phi;a) + \tfrac{2i}{\hbar}[\Int_N(\chi),\mm^{(1)}_{f_1,\phi_*g_1}\circ\A(\phi;a)] \ .
\eeq
If $f_1-f_2=g_2-\phi_*g_1=\chi$, then the right side of \eqref{Psi explicit} equals
\[
\tfrac{-1}{\hbar^2}[\Int_N(\chi),[\Int_N(\chi),\A(\phi;a)]] + \tfrac{-1}{2\hbar^2}[\Int_N(f_1+f_2)\dt\Int_N(\chi)-\Int_N(f_1+f_2) \Int_N(\chi),\A(\phi;a)] \ .
\]
This is most easily seen by expanding eq.~\eqref{mm hom} to second order in $\lambda$. Note that $g_1$ and $g_2$ do not appear in this expression.

Now suppose that $\Or\subset M$, and $\chi,f_2\in\D(N)$ satisfy the hypotheses of this lemma. $\Or\subset M$ is an open neighborhood of the compact set $J^+K\cap\Supp (1-\theta_M)$, therefore there exists a function in $\D(\Or)$ that equals $1$ on $J^+K\cap\Supp (1-\theta_M)$; multiplying this with $1-\theta_M$ gives a function $g_1\in\D(M)$ such that $g_1=1-\theta_M$ on $J^-K$ and $\Supp g_1\subset \Or$. Define $f_1:=f_2+\chi$ and $g_2:=\phi_*g_1+\chi$. The conditions are satisfied as follows:
\begin{itemize}
\item
$f_1=\tilde\theta_M$ on $J^-(K\cup\Supp g_1)$, because $K\subset\Or$, $\Supp g_1\subset\Or$, and $f_1=f_2+\chi=\tilde\theta_M$ on $J^-_N\Or$.
\item
$g_1=1-\theta_M$ on $J^+K$, by construction.
\item
$f_2=\theta_N$ on $J^-_N(K\cup\Supp g_2)$, because $\Supp g_2 \subset \Supp g_1\cup\Supp\chi\subset \Or\cup\Supp \chi$.
\item
$g_2=1-\theta_N$ on $J^+_NK$, because $\phi_*g_1+\chi = 1-\tilde\theta_M+(\tilde\theta_M-\theta_N)$ on $J^+_NK$.
\item
$\chi=\tilde\theta_M-\theta_N$ on $J_NK$, by hypothesis.
\end{itemize}
This verifies eq.~\eqref{Psi formula} under the given hypotheses.

Finally, consider any $\phi:M\to N$ and $K\subset M$ compact. The sets $J^+K\cap\Supp(1-\theta_M)$ and $J^-K\cap\Supp\theta_M$ are compact, because $\Supp(1-\theta_M)$ and $\Supp\theta_M$ are (respectively) future-compact and past-compact. Therefore,
\[
K_2:=K \cup (J^+K\cap\Supp [1-\theta_M])\cup (J^-K\cap\Supp \theta_M)
\]
is compact. Because $M$ is locally compact, every point has a precompact neighborhood. By compactness, $K_2$ has a finite cover by precompact open sets, therefore $K_2$ has a precompact open neighborhood, $\Or\subset M$.

The function $\tilde\theta_M-\theta_N$ is defined over $J_NM$, where it has future and past-compact support. Because the closure $\overline\Or$ is compact, there exists $\chi\in\D(N)$ with compact support in $J_NM$ and $\chi=\tilde\theta_M-\theta_N$ over $J\overline\Or$. 

Finally, because $\Supp\chi$ and $\overline\Or$ are compact and $\Supp\theta_N$ is past-compact, there exists $f_2\in\D(N)$ such that $f_2=\theta_N$ over $J^-_N(\overline\Or\cup\Supp\chi)$.
\end{proof}
\begin{remark}
I began by assuming the existence of a natural time-ordered product, $\dt$. However, in the construction of $\Psi_{V,\theta}$ it is only used on the image of $V$, not on arbitrary elements of the algebra. This suggests that it may be sufficient to only define $\dt$ on this subspace. 
\end{remark}

\begin{thm}
$0 = \delta\Psi_{V,\theta} + [\Xi_{\Int,\theta},\Xi_{\Int,\theta}]$.
\end{thm}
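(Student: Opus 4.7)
The plan is to apply the Maurer--Cartan analysis already carried out in Section~\ref{Significance} to the one-parameter family of diagrams $\A_{V,\theta}$. By Theorem~\ref{Modified}, $\A_{V,\theta}$ is a genuine functor $\Xc\to\hlAlg$ for every $\lambda$; equivalently, writing $\mu_\lambda[\phi]:=\A_{V,\theta}[\phi]$ and letting $\m$ denote the (undeformed) multiplications, the pair $(\m,\mu_\lambda)$ is a diagram of $\co[[\lambda]]$-algebras over $\Xc$. It therefore represents a formal one-parameter deformation of $(\m,\mu)$ in which only the morphism-part is deformed, so $\dot\m=\ddot\m=0$.

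First I would record the Taylor expansion $\mu_\lambda = \mu + \lambda\,\Xi_{V,\theta} + \tfrac12\lambda^2\,\Psi_{V,\theta} + O(\lambda^3)$, which is immediate from eqs.~\eqref{mm Xi} and \eqref{Psi def}. The plan is then to simply re-invoke the component-by-component derivation from Section~\ref{Significance}. Differentiating the homomorphism condition \eqref{morphism} twice in $\lambda$ and setting $\lambda=0$ produces the $(1,2)$-component identity $\dH\Psi_{V,\theta} + 2\,\Xi_{V,\theta}\bullet_1\Xi_{V,\theta} = 0$; differentiating the functoriality condition \eqref{functor} twice produces the $(2,1)$-component identity $\dS\Psi_{V,\theta} + 2\,\Xi_{V,\theta}\circ_1\Xi_{V,\theta} = 0$. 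Since $\Xi_{V,\theta}\in C^{1,1}$ has shifted total degree $1$, the Gerstenhaber bracket reduces in this bidegree to $[\Xi_{V,\theta},\Xi_{V,\theta}] = 2\,\Xi_{V,\theta}\comp\Xi_{V,\theta} = 2(\Xi_{V,\theta}\circ_1\Xi_{V,\theta} - \Xi_{V,\theta}\bullet_1\Xi_{V,\theta})$, so summing the two component identities yields
\[
\delta\Psi_{V,\theta} + [\Xi_{V,\theta},\Xi_{V,\theta}] = 0,
\]
which is the claim.

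The only real point of care, and the main potential obstacle, is justifying that this second-order bookkeeping applies verbatim to the family $\A_{V,\theta}$: one needs $\A_{V,\theta}$ to be an honest $\lambda$-indexed family of functors into $\hlAlg$, not merely a formal perturbative series satisfying some weaker consistency. That is exactly the content of Theorem~\ref{Modified}. Once this is in hand, no further manipulation of the interaction $V$, of the time-ordered product, or of the explicit cutoff test functions appearing in Lemma~\ref{Psi explicit lemma} is needed: the theorem is essentially a corollary of the algebraic fact that $\A_{V,\theta}$ is a functor, combined with the identification of $\Xi_{V,\theta}$ and $\Psi_{V,\theta}$ as the linear and quadratic coefficients of its $\lambda$-expansion.
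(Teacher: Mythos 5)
Your strategy is sound and the result does follow from it, but it is a genuinely different route from the paper's for half of the identity. The paper also splits the claim into its $(2,1)$ and $(1,2)$ components. For the $(2,1)$ part it manipulates the defining formula \eqref{Psi def} directly, cancels the $\mm^{(2)}$ terms in $\dS\Psi_{V,\theta}$, and invokes $\dS\Xi_{\Int,\theta}=0$ together with eq.~\eqref{mm Xi}; this is essentially your second-order expansion of functoriality written out by hand, so here the two arguments coincide in substance. For the $(1,2)$ part, however, the paper does \emph{not} argue from the homomorphism property of $\A_{\Int,\theta}[\phi]$: it substitutes the explicit double-commutator expression \eqref{Psi formula} from Lemma~\ref{Psi explicit lemma}, discards the last term as a derivation, and uses the identity $[A,[A,BC]]=[A,[A,B]]C+2[A,B][A,C]+B[A,[A,C]]$. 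Your version replaces all of that by the single observation that $\A_{\Int,\theta}[\phi]$ is a homomorphism for the \emph{undeformed} products (Theorem~\ref{Modified}, via eq.~\eqref{mm hom}), so that the $\lambda^2$ coefficient of $\A_{\Int,\theta}(\phi;ab)=\A_{\Int,\theta}(\phi;a)\,\A_{\Int,\theta}(\phi;b)$ is exactly the required identity. This is shorter and makes Lemma~\ref{Psi explicit lemma} unnecessary for this theorem; what the paper's computation buys in exchange is an explicit verification that does not lean on the density/extension step at the end of the proof of Theorem~\ref{Modified}, where the homomorphism property is first checked only on the subsets $\mm_{\theta_M}(\A(M;K))$. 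You should also spell out the identification of $\tfrac12\Psi_{V,\theta}[\phi]$ with the $\lambda^2$ coefficient of $\mm_{\theta_N}\circ\A[\phi]\circ\mm_{\theta_M}^{-1}$: it uses \eqref{mm expansion} and \eqref{mm inverse} as well as \eqref{mm Xi}, though it is precisely the computation that motivates Definition \eqref{Psi def}.

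One sign needs fixing. Expanding the homomorphism condition to second order gives $(\dH\Psi_{V,\theta})(\phi;a,b)=2\,\Xi_{\Int,\theta}(\phi;a)\,\Xi_{\Int,\theta}(\phi;b)$, i.e.\ $\dH\Psi_{V,\theta}-2\,\Xi_{\Int,\theta}\bullet_1\Xi_{\Int,\theta}=0$, equivalently $\dH\Psi_{V,\theta}+2\,\Xi_{\Int,\theta}\bullet\Xi_{\Int,\theta}=0$, since $\Xi_{\Int,\theta}\bullet\Xi_{\Int,\theta}=-\Xi_{\Int,\theta}\bullet_1\Xi_{\Int,\theta}$ in this bidegree. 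As written, your component identity $\dH\Psi_{V,\theta}+2\,\Xi_{\Int,\theta}\bullet_1\Xi_{\Int,\theta}=0$ is inconsistent with your own final summation, which uses $[\Xi_{\Int,\theta},\Xi_{\Int,\theta}]=2(\Xi_{\Int,\theta}\circ_1\Xi_{\Int,\theta}-\Xi_{\Int,\theta}\bullet_1\Xi_{\Int,\theta})$ and therefore needs the minus sign. The $(2,1)$ identity and the bracket formula are correct as you state them.
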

\begin{proof}
Both terms have components in degrees $(2,1)$ and $(1,2)$, so this is really two equations.

In degree $(2,1)$, we need to show that $0=\dS\Psi_{V,\theta} + 2\,\Xi_{\Int,\theta}\circ\Xi_{\Int,\theta}$. More explicitly, for any $P\stackrel{\psi}{\longleftarrow} N \stackrel{\phi}{\longleftarrow} M$, we need to show that $(\dS\Psi_{V,\theta})[\psi,\phi]=-2\,\Xi_{\Int,\theta}[\psi]\circ\Xi_{\Int,\theta}[\phi]$.

The first 2 terms in eq.~\eqref{Psi def} cancel perfectly in $\dS\Psi_{V,\theta}$. This leaves
\begin{align*}
(\dS\Psi_{V,\theta})[\psi,\phi] &= \Psi_{V,\theta}[\psi]\circ\A[\phi]-\Psi_{V,\theta}[\psi\circ\phi]+\A[\psi]\circ\Psi_{V,\theta}[\phi] \\
&= -2\,\Xi_{\Int,\theta}[\psi]\circ\mm^{(1)}_{\theta_N}\circ\A[\phi]+2\left(\Xi_{\Int,\theta}[\psi\circ\phi]-\A[\psi]\circ\Xi_{\Int,\theta}[\phi]\right)\circ\mm^{(1)}_{\theta_M} \ .
\end{align*}
Because $0=\dS\Xi_{\Int,\theta}$,
\begin{align*}
(\dS\Psi_{V,\theta})[\psi,\phi] &= -2\,\Xi_{\Int,\theta}[\psi]\circ\mm^{(1)}_{\theta_N}\circ\A[\phi]+ 2\,\Xi_{\Int,\theta}[\psi]\circ\A[\phi]\circ\mm^{(1)}_{\theta_M}\\
&= -2\,\Xi_{\Int,\theta}[\psi]\circ \Xi[\phi]
\end{align*}
by eq.~\eqref{mm Xi}.

In degree $(1,2)$, we need to show that $0=\dH\Psi_{V,\theta}+2\,\Xi_{\Int,\theta}\bullet\Xi_{\Int,\theta}$. More explicitly, for any $\phi:M\to N$ and $a,b\in\A(M)$, we need to show that $(\dH\Psi_{V,\theta})(\phi;a,b) = 2\,\Xi_{\Int,\theta}(\phi;a)\Xi_{\Int,\theta}(\phi;b)$.

Suppose that $a,b\in\A(M;K)$, for some compact $K\subset M$, and let $\chi$ be as in Lemma~\ref{Psi explicit lemma}. The last term of eq.~\eqref{Psi formula} is manifestly a derivation, so it doesn't contribute to $\dH\Psi_{V,\theta}$. Note that in any associative algebra, the commutator satisfies
\begin{align*}
[A,[A,BC]] &= [A,[A,B]C+B[A,C]] \\
&= [A,[A,B]]C + 2[A,B][A,C] + B[A,[A,C]] \ .
\end{align*}
This leaves
\begin{align*}
(\dH\Psi_{V,\theta})(\phi;a,b) &= -\A(\phi;a)\Psi_{V,\theta}(\phi;b)+\Psi_{V,\theta}(\phi;ab)-\Psi_{V,\theta}(\phi;a)\A(\phi;b) \\
&= \tfrac{-2}{\hbar^2}[\Int_N(\chi),\A(\phi;a)][\Int_N(\chi),\A(\phi;b)]\\
&= 2\, \Xi_{\Int,\theta}(\phi;a)\Xi_{\Int,\theta}(\phi;b) \ . \qedhere
\end{align*}
\end{proof}
In terms of the cohomology class $\Xi_V\in H_a^2(\A,\A)$ of $\Xi_{V,\theta}$, this means simply that it satisfies the Maurer-Cartan equation,
\beq
\label{MC eq}
[\Xi_V,\Xi_V]=0 \ .
\eeq
This is the analogue of the Jacobi identity satisfied by a Poisson structure.

\section{Conclusions}
This paper has presented several new ideas and perspectives on constructing models in algebraic quantum field theory.

The first is that the algebraic structures and properties of a locally covariant quantum field theory are naturally organized in the Hochschild bicomplex. This makes it possible to consider deformations of a quantum field theory as a generalization of  deformation quantization of a Poisson manifold. 

This perspective leads to the definition of \emph{skew diagrams of algebras} (Def.~\ref{Skew}) as a generalization of functors $\Xc\to\Alg$. Because the other axioms of AQFT are still meaningful for a skew diagram, this is a more general framework for building physical models.

An AQFT model is a functor $\A:\Xc\to\Alg$, and the global symmetries of the model (e.g., the $\U(1)$-symmetry associated to charge conservation) are described as the natural automorphisms of $\A$. This cohomological perspective leads to a more general definition of symmetry as \emph{skew automorphisms} (Def.~\ref{Skew morphism}) or \emph{outer skew automorphisms} (Def.~\ref{Outer}) of $\A$. This means that some models may have symmetries that were not previously recognized.

From this perspective, an interaction is analogous to a bivector field on a manifold. Just as bivectors can be constructed by multiplying and adding vectors, interactions may be constructed by multiplying and adding infinitesimal skew automorphisms, using the Gerstenhaber algebra structure on Hochschild cohomology. 
For an interaction of this form, the full interacting model may then be constructed directly by methods similar to those in \cite{bls2011}, generalizing the strict deformation quantization construction in \cite{rie11}.

One way of deforming AQFT is to begin with a model defined by a Lagrangian and then perturb it by adding an interaction term, $V$, to that Lagrangian. I have explicitly constructed (Thm.~\ref{Characteristic class}) the characteristic Hochschild cohomology class $\Xi_V$ of such an interaction term. However, this construction does not require the initial theory to be defined by a Lagrangian. This makes it possible to perturb a non-Lagrangian model with a Lagrangian interaction.

The construction of this characteristic class required introducing a notion (Def.~\ref{Cauchy}) of a smoothed-out Cauchy surface as a function that vanishes in the distant past and equals $1$ in the distant future.
In order to compare this characteristic class with the construction of perturbative AQFT, I introduced an alternative version (Thm.~\ref{Modified}) of that construction, which uses smoothed-out Cauchy surfaces instead of the algebraic adiabatic limit. This is more concrete than the standard construction, so this is likely to be advantageous for many calculations.

Finally, assuming the existence of a time-ordered product, I showed that the characteristic class of an interaction satisfies the appropriate Maurer-Cartan equation \eqref{MC eq}.

Much remains to be done. In particular, nontrivial examples of skew automorphisms would be very useful, because they can be used to construct interacting models. This cohomology is defined in a purely algebraic setting, so it needs to be extended or adapted to apply to \cs-algebras or von~Neumann algebras. 

This is not the first cohomological structure to be associated with quantum field theory. Rejzner used a BV-bicomplex to construct gauge theories in perturbative AQFT \cite{rej2016}. Hochschild cohomology needs to be extended to gauge theories, and combining the Hochschild complex with the BV-bicomplex may lead to further new perspectives.

\subsection*{Acknowledgments}
I wish to thank Chris Fewster and Kasia Rejzner for many discussions regarding this project. In particular, Kasia guided me through perturbative AQFT but is not responsible for any misunderstandings that I still cling to. Further conversations with Uli Kr\"{a}hmer, Urs Schreiber, and Birgit Richter have greatly improved this paper from the first version.

\end{document}